\documentclass[12pt]{article}
\usepackage{amssymb}
\usepackage{amsmath}
\usepackage{amsfonts}
\usepackage{bbm}
\usepackage{geometry}
\usepackage{setspace}
\usepackage[font={small}]{caption}
\usepackage{chbibref}
\usepackage{float}
\usepackage{color}
\usepackage{appendix}
\usepackage{natbib}
\usepackage{hyperref}
\usepackage{enumerate}
\usepackage{graphicx}
\usepackage{subcaption}
\usepackage{pdflscape}
\usepackage{mathtools}
\usepackage{multirow}
\usepackage{tikz}
\usepackage{tikz-network}
\usepackage{booktabs}
\usepackage{mathabx}
\usepackage[normalem]{ulem}
\onehalfspacing

\setcounter{MaxMatrixCols}{10}

\hypersetup{
    colorlinks,
    citecolor=blue,
    filecolor=black,
    linkcolor=blue,
    urlcolor=black
}
\newtheorem{theorem}{Theorem}
\newtheorem{sectiontheorem}{Theorem}[section]

\newtheorem{assumption}{Assumption} 
\newtheorem{sectionassumption}{Assumption}[section]

\newtheorem{lemma}{Lemma}
\newtheorem{sectionlemma}{Lemma}[section]

\newenvironment{proof}[1][Proof]{\noindent \textbf{#1.} }{\  \rule{0.5em}{0.5em}}
\geometry{left=3cm,right=3cm,top=3cm,bottom=2cm}
\setbibref{References}{\sffamily}
\allowdisplaybreaks[1]

\definecolor{red}{RGB}{213,94,0}
\definecolor{green}{RGB}{0,158,115}

\newcommand{\indep}{\perp \!\!\! \perp}
\allowdisplaybreaks[3]

\begin{document}
\title{Triadic Network Formation%
\footnote{We are grateful to Paola Conconi, Banu Demir, Bo Honor\'{e} and Martin Weidner for fruitful discussions.}}
\author{
    Chris Muris\footnote{McMaster University. Email: \href{mailto:muerisc@mcmaster.ca}{\texttt{muerisc@mcmaster.ca}}.}
    \, and
    Cavit Pakel\footnote{University of Oxford. Email: \href{mailto:cavit.pakel@economics.ox.ac.uk}{\texttt{cavit.pakel@economics.ox.ac.uk}}.}
}

\maketitle

\begin{abstract}
We study estimation and inference for triadic link formation with dyad-level fixed effects in a nonlinear binary choice logit framework. Dyad-level effects provide a richer and more realistic representation of heterogeneity across pairs of dimensions (e.g. importer--exporter, importer--product, exporter--product), yet their sheer number creates a severe incidental parameter problem. We propose a novel ``hexad logit'' estimator and establish its consistency and asymptotic normality. Identification is achieved through a conditional likelihood approach that eliminates the fixed effects by conditioning on sufficient statistics, in the form of hexads---wirings that involve two nodes from each part of the network.
Our central finding is that dyad-level heterogeneity fundamentally changes how information accumulates. Unlike under node-level heterogeneity, where informative wirings automatically grow with link formation, under dyad-level heterogeneity the network may generate infinitely many links yet asymptotically zero informative wirings. We derive explicit sparsity thresholds that determine when consistency holds and when asymptotic normality is attainable. 
These results have important practical implications, as they reveal that there is a limit to how granular or disaggregate a dataset one can employ under dyad-level heterogeneity.
\end{abstract}

\section{Introduction}

Networks are prevalent in economics, and they appear in many contexts: firms link to suppliers, countries trade in products, and researchers collaborate on projects. Increasingly detailed datasets capture such interactions across multiple dimensions, revealing structures that go beyond simple dyads. Triadic relationships in particular increasingly feature in empirical work; e.g., in production decisions (\citealt{BDM22}), import-export link formation of multinational corporations (\citealt{CLMT24}), and also in investigations of the effect of economic reliance on a trade partner on political alignment (\citealt{KLR24}). At the same time, sparsity appears to be a pervasive feature of network data.\footnote{A typical example is firm-level production network data, which is the focus of a well-established literature in economics (\citealt{CTS19}). These networks cover the universe of firms in a country and are typically characterised by the overwhelming majority of the firms having very few connections (\citealt{BMUM18} and \citealt{BMY19}).} In modelling such networks, it is standard to account for unobserved heterogeneity using fixed effects.

In this paper we focus on the formation of sparse triadic networks, and develop econometric tools for estimation in a nonlinear binary choice logit model of link formation with dyad-level heterogeneity. Crucially, in such models it is not the links themselves but specific configurations of links---wirings---that carry information about the structural parameters. Our key finding is that under dyad-level heterogeneity, once sparsity exceeds a threshold the network may generate infinitely many links yet fail to produce enough informative wirings for inference. This is a fundamental departure from the behaviour under node-level heterogeneity: indeed we show that under node-level heterogeneity informative wirings accumulate automatically with link formation. We characterise precisely how sparse is `too sparse', deriving thresholds that determine when consistency holds and when asymptotic normality can be attained.

Our findings carry important practical implications. Dyad-level effects (e.g. importer--exporter, importer--product, exporter--product) offer a richer and more realistic representation of heterogeneity than classical node-level effects (e.g. importer, exporter, product). Yet our results reveal that this flexibility requires sparsity to remain within certain limits, which may not hold in more granular or disaggregate data. When these conditions are not met, one option is to revert to node-level heterogeneity, though such a modelling choice may not always be appropriate given the context.  

To formalise our theoretical framework and contributions, our link formation model is given by
\begin{align}
    Y_{ijk} = 1\{A_{ij} + B_{jk} + C_{ik} + X_{ijk}'\beta_0 \geq \varepsilon_{ijk}\},
    \label{eq:triadicformation}
\end{align}
where $i,j,k$ are the indices for the three network dimensions, $X_{ijk}$ is the vector of observable covariates with the corresponding parameter vector $\beta_0$, and $\varepsilon_{ijk}$ is the triad-specific logistic random shock. $A_{ij}$, $B_{jk}$ and $C_{ik}$ are the dyad-level fixed effects. Our interest is in estimating $\beta_0$. Under sparse network asymptotics, fixed effects are generally not point-identified, which in turn leads to failure to identify $\beta_0$. We use a conditional likelihood approach based on sufficient statistics for the fixed effects in order to point-identify $\beta_0$. For dyadic link formation, sufficient statistics have been proposed in the form of informative wirings between groups of four nodes, called tetrads (\citealt{Charbonneau17}, \citealt{Graham17}, \citealt{Jochmans18}). In the triadic setting we propose a conditional likelihood estimator based on hexad subnetworks---the \textit{hexad logit estimator}---and establish the conditions under which it is consistent and asymptotically normal.

To provide some more precise discussion on the collection of informative wirings, let $N$ be the number of nodes in each of the three parts of the network and define $\rho_N = \mathrm P (Y_{ijk}=1)$, the unconditional probability of link formation. The rate at which $\rho_N$ tends to zero measures the level of sparsity. A necessary condition for statistical analysis is that the average node degree in the network remains bounded from below by zero as $N \to \infty$, which ensures asymptotic non-emptiness of the network---but still allows the network to be sparse. Consequently, as $N\to\infty$, the network accumulates links. Under node-level heterogeneity, this asymptotic accumulation of links leads to a corresponding accumulation of informative wirings. This holds both in dyadic link formation models, as previously shown, and in triadic link formation, as we establish in this paper.

We show that under dyad-level heterogeneity accumulation of links does not necessarily translate into accumulation of informative wirings: specifically, if sparsity exceeds a certain threshold such that $\rho_N$ tends to zero faster than $N^{-3/2}$, the network will generate infinitely many links and yet produce an asymptotically vanishing number of informative wirings. The reason for this phenomenon lies in the nature of informative wirings. The richer heterogeneity structure translates into wirings that are more intricate, depending on a larger number of links and arranged in highly specific patterns. As a result, they are relatively less likely to appear in data than their counterparts under node-level heterogeneity. Put differently, noticing that the average degree per node in our setting is $O(N^2 \rho_N )$, collecting sufficiently many informative wirings requires networks that are less sparse than just non-empty.

Collecting informative wirings is equivalent to accumulating information on $\beta_0$. The situation identified here therefore has important consequences on the asymptotic behaviour of the hexad logit estimator $\widehat \beta$. First, we show that $\rho_N = O(N^{-\delta})$ with $0\leq \delta<3/2$ is indeed a necessary condition for $\widehat \beta \to_p \beta_0$. Furthermore, we also show that asymptotic normality does not necessarily follow unless $\rho_N = O(N^{-\delta})$ with $0\leq \delta<1$. As will be discussed in more detail, this condition is necessary to ensure that informative wirings accumulate faster than links as the network grows. We show that this is automatically satisfied for the dyadic and triadic models with node-level heterogeneity, but not for the triadic model under dyad-level heterogeneity. This requirement to accumulate informative wirings at a faster rate relates to the conditional likelihood score's interpretation as a U-statistic. In particular, unless $\rho_N = O(N^{-\delta})$ with $0\leq \delta<1$, the score behaves analogous to a highly degenerate U-statistic. Interestingly, this degeneracy occurs in a discrete fashion in the sense that as soon as the U-statistic becomes degenerate this degeneracy is of the highest possible order.\footnote{We note that this degeneracy is different from the one identified by \citet{Graham17} for the score of the tetrad logit estimator where he can still obtain asymptotic normality. That appears to be a typical feature of the score function in network data, and also appears in our analysis. However, the degeneracy spelled out here is of a different nature.} The resulting asymptotic distribution will be \textit{Gaussian chaos}. 

\textbf{Literature review.}
Our work contributes to three strands of literature:

First, it builds on the work on binary logit link formation in sparse dyadic networks. \citet{Charbonneau17}, \citet{Graham17}, and \citet{Jochmans18} identify sufficient statistics in the form of tetrad wirings, laying the foundation for conditional likelihood estimation. Recent extensions include distribution regressions (\citealt{Szini25}) and ordered choice models in dyadic settings (\citealt{MPZ25}). 
Our contribution is to establish the corresponding theory for triadic link formation in a tripartite setting with a full set of dyad-level fixed effects, where the sufficient statistics take the form of hexads.
For surveys of the broader literature on econometrics of networks, see \citet{dePaula17}, \citet{Graham17}, and \citet{GdP20}.

Second, our approach is connected to the literature on sufficient statistics and the incidental parameter problem. The idea of eliminating fixed effects through conditioning goes back to \citet{Rasch60}, \citet{Andersen70}, and \citet{Chamberlain80}. In panel data, identification issues arise in short panels (\citealt{ArellanoBonhomme11}); in networks, the analogue is sparsity. \citet{DBW25} provide a recent review and also a general result for obtaining sufficient statistics in the binary logit setting. See also \citet{BonhommeDano24} who consider the functional differencing approach.

Third, our work relates to the emerging literature on nonlinear three-way models. While linear three-way specifications with heterogeneity have been widely studied (\citealt{BMW17}), nonlinear counterparts are more recent. Recent examples are due to  \citet{Stammann23}, who analyses dense three-way nonlinear panels with dyad-specific effects and proposes bias correction methods, and \citet{WeidnerZylkin21} and \citet{YZ23} who consider the gravity model.

\section{Model} \label{sec:model}

We model the formation of connections among three nodes from three different partite sets (also called \textit{parts}). 
Let $i \in \{1,\ldots,N_1\}$, $j \in \{1,\ldots,N_2\}$, and $k \in \{1,\ldots,N_3\}$ denote nodes from these three sets, respectively. 
We assume $N=N_1=N_2=N_3$ for notational simplicity.
The key unit of analysis is a \emph{triad} $(i,j,k)$ which is an ordered 3-tuple with node $i$ belonging to the first part, $j$ to the second part, and $k$ to the third part. 
The set of all $N^3$ such triads is given by $\mathbb T_N = \{1,\ldots,N\}^3$.

The binary variable $Y_{ijk}$ denotes formation of a hyperedge joining the nodes of the triad $(i,j,k)$. The hyperedge that connects the nodes of the triad $(i,j,k)$ is denoted by $[i,j,k]$.
The set of nodes and the associated hyperedges form a hypergraph that we call a \emph{network}.
Formally, we consider $3$-partite $3$-uniform networks in the sense that hyperedges connect exactly three nodes, and no two nodes of a hyperedge belong to the same part. Figure \ref{fig:depicting_hypergraphs} visualises one such network with $N=4$ nodes in each of the three parts. In this example we observe the hyperedges $[i_1,j_2,k_1]$, $[i_2,j_1,k_2]$ and $[i_3,j_4,k_1]$.

\begin{figure}
    \centering
    \begin{tikzpicture}

        \SetVertexStyle[FillOpacity = 0.5]
    
        \Vertex[x = 0, y = 3, label = $i_1$]{i1}
        \Vertex[x = 0, y = 2, label = $i_2$]{i2}
        \Vertex[x = 0, y = 1, label = $i_3$]{i3}
        \Vertex[x = 0, y = 0, label = $i_4$]{i4}

        \Vertex[x = 2, y = 3, label = $j_1$, color = green]{j1}
        \Vertex[x = 2, y = 2, label = $j_2$, color = green]{j2}
        \Vertex[x = 2, y = 1, label = $j_3$, color = green]{j3}
        \Vertex[x = 2, y = 0, label = $j_4$, color = green]{j4}

        \Vertex[x = 4, y = 3, label = $k_1$, color = red]{k1}
        \Vertex[x = 4, y = 2, label = $k_2$, color = red]{k2}
        \Vertex[x = 4, y = 1, label = $k_3$, color = red]{k3}
        \Vertex[x = 4, y = 0, label = $k_4$, color = red]{k4}

        \Edge(i1)(j2)
        \Edge(j2)(k1)
        \Edge[color = blue](i3)(j4)
        \Edge[color = blue](j4)(k1)
        \Edge[color = brown](i2)(j1)
        \Edge[color = brown](j1)(k2)
        
    \end{tikzpicture}
    \caption{A network with $N=4$ nodes in each of the three parts. Across the $4^3$ triads, only three of them decide to form a hyperedge; these are $(i_1,j_2,k_1)$, $(i_2,j_1,k_2)$, and $(i_3,j_4,k_1)$, as indicated by the hyperedges in solid lines.}
    \label{fig:depicting_hypergraphs}
\end{figure}
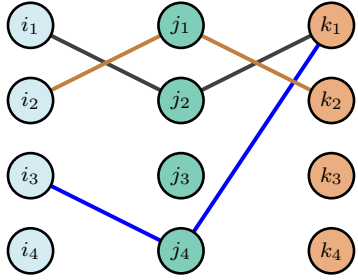

We model a triad's decision to form a hyperedge as a binary choice logit model with additive fixed effects. 
Let $X_{ijk} = h(X_i,X_j,X_k) \in \mathbb R^P$ be a triad-specific covariate (a known transformation of the node-specific observable characteristics) and $\beta_0$ be the associated $P \times 1$ parameter vector.
Let also $\varepsilon_{ijk}$ be a triad-specific logistic error with the CDF $\Lambda(u) = \exp(u)/(1+\exp(u))$.
We model the link formation decision as
\begin{align}
    Y_{ijk} = 1\{A_{ij} + B_{jk} + C_{ik} + X_{ijk}' \beta_0 - \varepsilon_{ijk} \geq 0\},
    \label{eq:ofe}
\end{align}
where $A_{ij},B_{jk},C_{ik}$ are dyad-specific fixed effects. 
For instance, in a setting where firm $i$ imports input $j$ from the exporting country $k$, these correspond to firm-input, input-exporter and firm-exporter effects, respectively. 
This overlapping heterogeneity structure provides the most general additive fixed effect structure in the given framework.
Finally, we assume that the random shocks $\varepsilon_{ijk}$ are independent across all triads, conditional on covariates and fixed effects.

In what follows, we use bold notation when collecting objects across the network. In particular, $\mathbf Y = \left(Y_{ijk}\right)_{\mathbb T_N}$ and 
$\mathbf X = \left(X_{ijk}\right)_{\mathbb T_N}$ collect the link formation information and observable characteristics across all triads, respectively. $\mathbf{Y}=\mathbf{y}$ denotes a particular realisation of the network.
The fixed effects for the triad $(i,j,k)$ are denoted  $\mathrm{F}_{ijk} = (A_{ij},B_{jk},C_{ik})$ and fixed effects across all triads are gathered in $\mathbf{F} = \left(\mathrm{F}_{ijk}\right)_{\mathbb T_N}$.

The next assumption formalises the discussion made so far.

\begin{assumption}[Likelihood]\label{a:logit}
The conditional likelihood of the network $\mathbf Y = \mathbf y$ is
\begin{equation*}
    \mathrm{P}\left(\mathbf Y = \mathbf y | \mathbf X, \mathbf{F}\right) 
    = 
    \prod_{(i,j,k) \in \mathbb T_N} \mathrm{P} \left(\left. Y_{ijk} = y_{ijk} \right| X_{ijk}, \mathrm{F}_{ijk} \right),
\end{equation*}
where 
\begin{align*}
    \mathrm{P} \left(\left. Y_{ijk} = y \right| X_{ijk}, \mathrm{F}_{ijk} \right) 
    = 
    \begin{cases}
    \Lambda\left( X_{ijk}' \beta_0 + A_{ij} + B_{jk} + C_{ik} \right) & \text{ if }y = 1, \\
    1-\Lambda\left( X_{ijk}' \beta_0 + A_{ij} + B_{jk} + C_{ik} \right) & \text{ if }y=0.
    \end{cases}
\end{align*}
\end{assumption}

\section{Identification}

In sparse networks, the fixed effects will generally not be point-identified, as connections involving different node-pairs are typically observed in very few instances, if at all. This is analogous to the identification failure in short nonlinear panels (see, e.g., \citealt{Chamberlain10} and \citealt{ArellanoBonhomme11}), and is an instance of the incidental parameter issue. 
As in the panel data literature, for models with logistic random shocks it is usually possible to overcome this problem by a conditional likelihood approach.
In what follows, we develop such a conditional likelihood approach that eliminates the fixed effects through carefully constructed sufficient statistics, thereby achieving identification of the common parameter $\beta_0$. 
Our sufficient statistics will be based on hexad subnetworks. 

\subsection{Hexad subnetworks}
\label{sec:hexads}

A conditional likelihood function that applies to the full network can in principle be constructed by conditioning on a sufficient statistic for the unobserved heterogeneity of the entire network. However, the resulting conditional likelihood would be computationally intractable, even for moderately large networks. Focussing on lower dimensional subnetworks, on the other hand, maintains the statistical properties needed for consistent estimation of $\beta_0$ while remaining computationally practical. This principle has been successfully applied in dyadic network models using tetrad subnetworks \citep{Charbonneau17, Graham17, Jochmans18}. To handle triadic networks, we use hexad subnetworks.

A hexad is a collection of six nodes. Our analysis is based on a specific type of hexad, consisting of two nodes from each of the three parts of the original full hypergraph. Let $\sigma=(i_1,j_1,k_1,i_2,j_2,k_2)$ be a generic hexad consisting of nodes $(i_1,i_2)$ from part 1, $(j_1,j_2)$ form part 2, and $(k_1,k_2)$ form part 3. Let $\Sigma$ be the set of all possible hexads consisting of two nodes from each of the three parts. There are $m_N = |\Sigma| = N^3(N-1)^3$ hexads in this structure. Notice that each hexad subnetwork contains $2^3$ distinct triads (and therefore anywhere from 0 to $2^3$ hyperedges). Let
$X_\sigma$ and $\mathrm{F}_\sigma$ collect the covariates and fixed effects for all triads in the subnetwork generated by the hexad $\sigma$. Hexad subnetworks provide enough structure to find sufficient statistics for the fixed effects $\mathrm{F}_\sigma$, as we will show in the next section. The resulting hexad-specific conditional likelihood functions can then be combined to obtain a (composite) conditional likelihood function for estimation of $\beta_0$.

Let $d_\sigma(\cdot)$ return the degree of a chosen node in $\sigma$; that is, the number of hyperedges that this node belongs to in the subnetwork generated by $\sigma$. The degree sequence for $\sigma=(i_1,j_1,k_1,i_2,j_2,k_2)$ is then given by 
$D_{\sigma} = (d_\sigma(i_1),d_\sigma(j_1),d_\sigma(k_1),d_\sigma(i_2),d_\sigma(j_2),d_\sigma(k_2)).$ The degree sequence is a key quantity in obtaining a conditional likelihood function; see \citet{Graham17}.

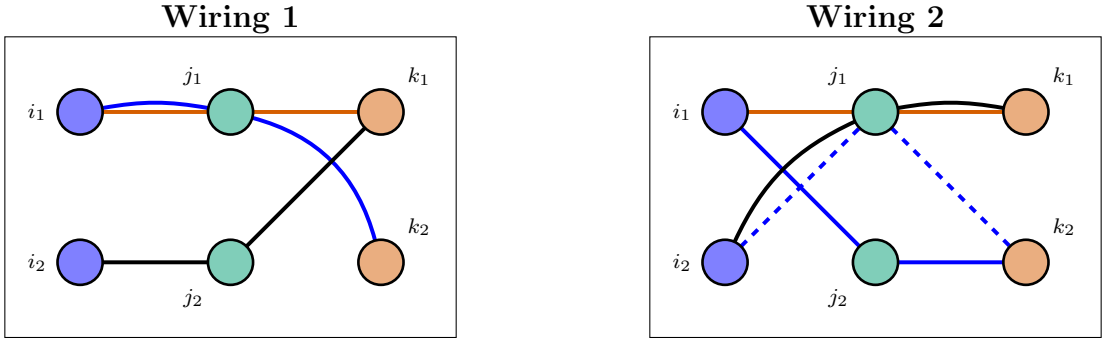
\begin{figure}[t]
    \centering
    
    \begin{subfigure}[b]{0.45\linewidth}
        \textbf{Wiring 1}
        \centering
        
        \begin{tikzpicture}
        \SetVertexStyle[FillOpacity = 0.5]
        \draw (0.0,0.0) rectangle (6,4);
        \Vertex[x = 1, y = 3, color = blue, label =$i_1$, position = left]{i1}
        \Vertex[x = 1, y = 1, color = blue, label =$i_2$, position = left]{i2}
        \Vertex[x = 3, y = 3, color = green, label =$j_1$, position = above left]{j1}
        \Vertex[x = 3, y = 1, color = green, label =$j_2$, position = below left]{j2}
        \Vertex[x = 5, y = 3, color = red, label =$k_1$, position = above right]{k1}
        \Vertex[x = 5, y = 1, color = red, label =$k_2$, position = above right]{k2}
        
        \Edge[color=red, bend = 0](i1)(j1)
        \Edge[color=red, bend = 0](j1)(k1)
        
        \Edge[color=blue, bend = 10](i1)(j1)
        \Edge[color=blue, bend = 30](j1)(k2)
        
        \Edge[color=black, bend = 0](i2)(j2)
        \Edge[color=black, bend = -0](j2)(k1)
        
        \end{tikzpicture}
        
    \end{subfigure}
    \hfill
    \begin{subfigure}[b]{0.45\linewidth}
        \textbf{Wiring 2}
        \centering
        \begin{tikzpicture}
        \SetVertexStyle[FillOpacity = 0.5]
        \draw (0.0,0.0) rectangle (6,4);
        \Vertex[x = 1, y = 3, color = blue, label =$i_1$, position = left]{i1}
        \Vertex[x = 1, y = 1, color = blue, label =$i_2$, position = left]{i2}
        \Vertex[x = 3, y = 3, color = green, label =$j_1$, position = above left]{j1}
        \Vertex[x = 3, y = 1, color = green, label =$j_2$, position = below left]{j2}
        \Vertex[x = 5, y = 3, color = red, label =$k_1$, position = above right]{k1}
        \Vertex[x = 5, y = 1, color = red, label =$k_2$, position = above right]{k2}

        \Edge[color=red, bend = 0](i1)(j1)
        \Edge[color=red, bend = 0](j1)(k1)
        
        \Edge[color=blue, bend = 0](i1)(j2)
        \Edge[color=blue, bend = -0](j2)(k2)
        
        \Edge[color=black, bend = 20](i2)(j1)
        \Edge[color=black, bend = 10](j1)(k1)

        \Edge[color=blue, bend = 0, style=dashed](i2)(j1)
        \Edge[color=blue, bend = 0, style=dashed](j1)(k2)
        
        \end{tikzpicture}
    \end{subfigure}

    \caption{Two examples of hexad wirings. Wiring 1 has degree sequence $(2,2,2,1,1,1)$ whereas Wiring 2 has degree sequence $(2,3,2,2,1,2)$.}
    \label{fig:subhypergraphs_illustration}
\end{figure}
As an illustration of the concepts introduced thus far, consider the hexads in Figure \ref{fig:subhypergraphs_illustration}. 
Hyperedges are illustrated as line or curve segments of the same colour.
In Wiring 1 there are three hyperedges: $[i_1,j_1,k_2]$, $[i_1,j_1,k_1]$ and $[i_2,j_2,k_1]$. 
Wiring 2, on the other hand, contains four hyperedges: $[i_1,j_1,k_1]$, $[i_1,j_2,k_2]$, $[i_2,j_1,k_1]$ and $[i_2,j_1,k_2]$. 
The degree sequences of Wirings 1 and 2 are $(2,2,2,1,1,1)$ and $(2,3,2,2,1,2)$, respectively. 

\subsection{Informative wirings and sufficiency}
\label{sec:sufficiency}

Finding a sufficient statistic for $\mathrm{F}_\sigma$ requires finding degree sequences that admit at least two different wirings that contain the same fixed effects. Consequently, such degree sequences can be used to isolate variation attributable to $\beta_0$ rather than to the fixed effects, thereby obtaining identification of $\beta_0$. This intuition has been used successfully in \citet{Charbonneau17}, \citet{Graham17}  and \citet{Jochmans18} in dyadic link formation. 

We are specifically interested in using the minimal degree sequence that permits identification---by minimal we mean that there exists no other identifying degree sequence where at least one node has a lower degree (and no nodes have a higher degree). In Appendix \ref{sec:simpleds} we show that in the given setting this is achieved by the degree sequence $(2,2,2,2,2,2)$, which yields two wirings that can be used to construct a conditional likelihood function.
We call these two wirings \emph{informative wirings} or \textit{identifying wirings}.
For a generic hexad $\sigma=(i_1,j_1,k_1,i_2,j_2,k_2)$ these two informative wirings correspond to the following collections of hyperedges:
\begin{align*}
    \text{Informative wiring \#1:} \quad \left\{[i_1,j_1,k_1]\,,\,[i_1,j_2,k_2]\,,\,[i_2,j_1,k_2]\,,\,[i_2,j_2,k_1]\right\};
    \\
    \text{Informative wiring \#2:} \quad \left\{[i_2,j_2,k_2]\,,\,[i_2,j_1,k_1]\,,\,[i_1,j_2,k_1]\,,\,[i_1,j_1,k_2]\right\}.
\end{align*}
These are illustrated in Figure \ref{fig:subhypergraphs_222222}: the first informative wiring corresponds to the left panel whereas the second is given in the right panel. Notice that the two informative wirings do not share any common hyperedges despite having the same degree sequence. However, and crucially, they both contain all the fixed effects in $\mathrm{F}_\sigma$. This can also be confirmed by noticing that both wirings contain all  possible dyadic links: since our fixed effects are dyad-specific, this confirms that both wirings contain all the fixed effects in $\mathrm{F}_\sigma$. Consequently, the differences between the wirings should be attributable to covariates, enabling point-identification of $\beta_0$.

Letting $\overline{Y}_{abc} = 1-Y_{abc}$, the indicator functions for these wirings are given by
\begin{align*}
    S_{\sigma, 1}
    = 
    Y_{i_1 j_1 k_1}
    Y_{i_1 j_2 k_2}
    Y_{i_2 j_1 k_2}
    Y_{i_2 j_2 k_1}
    \overline Y_{i_2 j_2 k_2}
    \overline Y_{i_1 j_2 k_1}
    \overline Y_{i_2 j_1 k_1}
    \overline Y_{i_1 j_1 k_2},
    \\
    S_{\sigma, 2}
    = 
    \overline Y_{i_1 j_1 k_1}
    \overline Y_{i_1 j_2 k_2}
    \overline Y_{i_2 j_1 k_2}
    \overline Y_{i_2 j_2 k_1}
    Y_{i_2 j_2 k_2}
    Y_{i_1 j_2 k_1}
    Y_{i_2 j_1 k_1}
    Y_{i_1 j_1 k_2}.
\end{align*}
In other words $S_{\sigma,1}$ indicates whether the hexad $\sigma$ admits the first informative wiring or not (and similarly for $S_{\sigma,2}$). Notice that a hexad cannot admit both informative wirings at once. Therefore, the binary variable
\begin{align}
    S_\sigma = S_{\sigma,1} + S_{\sigma,2},
    \label{eq:Stildedefn}
\end{align}
indicates whether the hexad $\sigma$ is informative or not.

\begin{figure}[t]
    \centering
    
    \begin{subfigure}[b]{0.45\linewidth}
        \centering
        \begin{tikzpicture}
        \SetVertexStyle[FillOpacity = 0.5]
        \draw (0.0,0.0) rectangle (6,4);
        \Vertex[x = 1, y = 3, color = blue, label =$i_1$, position = left]{i1}
        \Vertex[x = 1, y = 1, color = blue, label =$i_2$, position = left]{i2}
        \Vertex[x = 3, y = 3, color = green, label =$j_1$, position = above left]{j1}
        \Vertex[x = 3, y = 1, color = green, label =$j_2$, position = below left]{j2}
        \Vertex[x = 5, y = 3, color = red, label =$k_1$, position = above right]{k1}
        \Vertex[x = 5, y = 1, color = red, label =$k_2$, position = above right]{k2}

        \Edge[color=red, bend = 0](i1)(j1)
        \Edge[color=red, bend = 0](j1)(k1)
        
        \Edge[color=blue, bend = -10](i1)(j2)
        \Edge[color=blue, bend = -10](j2)(k2)
        
        \Edge[color=black, bend = 0](i2)(j1)
        \Edge[color=black, bend = 0](j1)(k2)

        \Edge[color=green, style={dashed}, bend = -10](i2)(j2)
        \Edge[color=green, style={dashed}, bend = -10](j2)(k1)
        
        \end{tikzpicture}
        \caption*{$S_{\sigma,1} = 1$}
    \end{subfigure}
    \hfill
    \begin{subfigure}[b]{0.45\linewidth}
        \centering
                \begin{tikzpicture}
        \SetVertexStyle[FillOpacity = 0.5]
        \draw (0.0,0.0) rectangle (6,4);
        \Vertex[x = 1, y = 3, color = blue, label =$i_1$, position = left]{i1}
        \Vertex[x = 1, y = 1, color = blue, label =$i_2$, position = left]{i2}
        \Vertex[x = 3, y = 3, color = green, label =$j_1$, position = above left]{j1}
        \Vertex[x = 3, y = 1, color = green, label =$j_2$, position = below left]{j2}
        \Vertex[x = 5, y = 3, color = red, label =$k_1$, position = above right]{k1}
        \Vertex[x = 5, y = 1, color = red, label =$k_2$, position = above right]{k2}

        \Edge[color=red, bend = 10](i1)(j1)
        \Edge[color=red, bend = 10](j1)(k2)
        
        \Edge[color=blue, bend = 0](i1)(j2)
        \Edge[color=blue, bend = 0](j2)(k1)
        
        \Edge[color=black, bend = 10](i2)(j1)
        \Edge[color=black, bend = 10](j1)(k1)

        \Edge[color=green, style={dashed}, bend = 0](i2)(j2)
        \Edge[color=green, style={dashed}, bend = 0](j2)(k2)
        
        \end{tikzpicture}
        \caption*{$S_{\sigma,2} = 1$}
    \end{subfigure}
    \caption{The two informative hexad wirings compatible with the degree sequence $(2, 2, 2, 2, 2, 2)$. These wirings, described by the indicator functions $S_{\sigma,1}$ and $S_{\sigma,2}$, provide identification of $\beta_0$ independent of the fixed effects.}
    \label{fig:subhypergraphs_222222}
\end{figure}
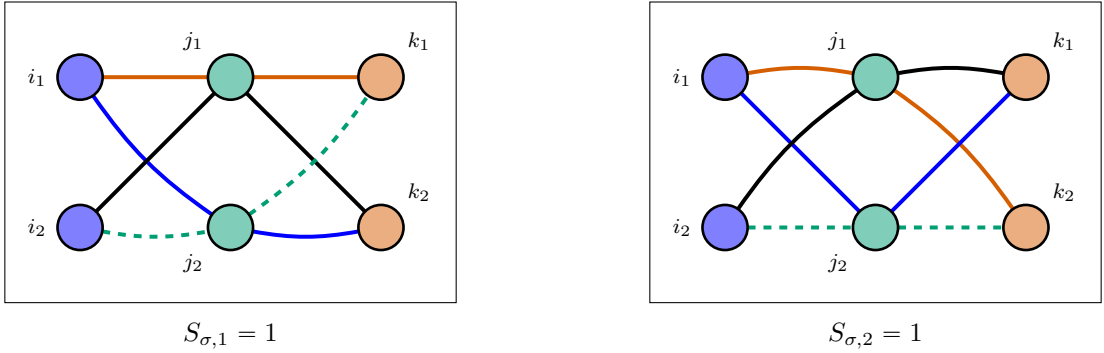

\begin{theorem}[Sufficiency]\label{thm:sufficiency_overlapping} 
Suppose that Assumption \ref{a:logit} holds. Then, 
\begin{align*}
    \mathrm{P}(S_{\sigma,1} = 1 | S_{\sigma} = 1, \mathrm{F}_{\sigma}, X_{\sigma}) 
    =
    \mathrm{P}(S_{\sigma,1} = 1 | S_{\sigma} = 1, X_{\sigma}) 
    = \Lambda(W_{\sigma}' \beta_0)
\end{align*}     
for any hexad $\sigma = (i_1,j_1,k_1,i_2,j_2,k_2)$,
where 
\begin{align*}
    W_{\sigma} = 
    \left(X_{i_1 j_1 k_1}+X_{i_1 j_2 k_2}+X_{i_2 j_1 k_2}+X_{i_2 j_2 k_1}\right) 
    - 
    \left(X_{i_1 j_2 k_1}+X_{i_1 j_1 k_2}+X_{i_2 j_1 k_1}+X_{i_2 j_2 k_2}\right)    .
\end{align*}
\end{theorem}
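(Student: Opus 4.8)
The plan is to prove this by a direct conditional-likelihood computation, exploiting the fact that conditional on $S_\sigma = 1$ exactly one of the two informative wirings is realised, and that both wirings involve precisely the same set of dyadic links, so all fixed effects cancel.

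First I would note that the event $\{S_\sigma = 1\}$ is the disjoint union of $\{S_{\sigma,1}=1\}$ and $\{S_{\sigma,2}=1\}$, so that
\[
\mathrm{P}(S_{\sigma,1}=1 \mid S_\sigma=1, \mathrm{F}_\sigma, X_\sigma)
= \frac{\mathrm{P}(S_{\sigma,1}=1 \mid \mathrm{F}_\sigma, X_\sigma)}{\mathrm{P}(S_{\sigma,1}=1 \mid \mathrm{F}_\sigma, X_\sigma) + \mathrm{P}(S_{\sigma,2}=1 \mid \mathrm{F}_\sigma, X_\sigma)}.
\]
By Assumption \ref{a:logit} the eight triads inside the hexad are conditionally independent, so $\mathrm{P}(S_{\sigma,1}=1 \mid \mathrm{F}_\sigma, X_\sigma)$ factors into a product of eight logit probabilities: the four triads in informative wiring \#1 contribute factors of the form $\Lambda(X_{abc}'\beta_0 + A_{ab}+B_{bc}+C_{ac})$, while the four complementary triads contribute $1-\Lambda(\cdot)$. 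The analogous statement holds for $S_{\sigma,2}=1$ with the roles reversed.

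Next I would carry out the algebra. Writing $\Lambda(u) = e^u/(1+e^u)$ and $1-\Lambda(u) = 1/(1+e^u)$, the denominators $\prod_{abc}(1+\exp(\cdot))$ are identical across all eight triads in both products, so they cancel in the ratio above. What remains in the numerator of $\mathrm{P}(S_{\sigma,1}=1\mid\cdot)$ is $\exp$ of the sum over the four triads in wiring \#1 of $(X_{abc}'\beta_0 + A_{ab}+B_{bc}+C_{ac})$, and similarly for wiring \#2. The key cancellation is that in \emph{both} wirings every one of the nine dyads appearing in $\mathrm{F}_\sigma$ — namely the four $i_a j_b$ dyads, the four $j_b k_c$ dyads, and the four $i_a k_c$ dyads (more precisely, each dyad with index pattern corresponding to the degree sequence $(2,2,2,2,2,2)$) appears with the same total multiplicity, because each wiring is a "perfect matching" pattern in which each node has degree two. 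Concretely, in wiring \#1 the pairs $(i_1 j_1), (i_1 j_2), (i_2 j_1), (i_2 j_2)$ each appear exactly once, and the same holds in wiring \#2; likewise for the $BC$ and $AC$ dyads. Hence $\sum_{\text{wiring }1} (A_{ij}+B_{jk}+C_{ik}) = \sum_{\text{wiring }2} (A_{ij}+B_{jk}+C_{ik})$, and this common term factors out of both numerator and denominator and cancels. After cancellation the ratio equals $\exp(U_1'\beta_0)/(\exp(U_1'\beta_0)+\exp(U_2'\beta_0))$ where $U_1, U_2$ are the covariate sums over the two wirings; dividing through by $\exp(U_2'\beta_0)$ gives $\Lambda((U_1-U_2)'\beta_0) = \Lambda(W_\sigma'\beta_0)$ with $W_\sigma = U_1 - U_2$ exactly as stated. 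The first equality in the theorem — that the conditional probability does not depend on $\mathrm{F}_\sigma$ — is an immediate consequence of this cancellation, which is precisely the statement that $S_\sigma$ (equivalently $D_\sigma$) is a sufficient statistic for $\mathrm{F}_\sigma$ within the hexad.

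The main obstacle, such as it is, is purely bookkeeping: one must verify carefully that each of the relevant dyads occurs with identical multiplicity in the two informative wirings, i.e. that the difference of the two fixed-effect sums is genuinely zero and not merely zero "on average". This is guaranteed by the construction in Appendix \ref{sec:simpleds} (both wirings realise the degree sequence $(2,2,2,2,2,2)$ and, as remarked after the definition of $S_\sigma$, both contain all possible dyadic links), so I would simply invoke that and check the eight-triad list explicitly. No measure-theoretic or asymptotic subtlety arises here; the result is a finite, exact computation, and the only care needed is to match signs in $W_\sigma$ against the triad labelling used in $S_{\sigma,1}$ and $S_{\sigma,2}$.
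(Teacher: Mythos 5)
Your proposal is correct and follows essentially the same route as the paper's proof: factor the two wiring probabilities by conditional independence, observe that the logistic denominators and the dyad-level fixed effects (which enter both wirings identically, since each wiring contains every dyad of the hexad exactly once) cancel in the ratio, and read off the conditional logit form with index $W_\sigma$. The only slip is the phrase ``nine dyads''---there are twelve (four of each of the $A$, $B$, $C$ types), as your own enumeration immediately makes clear---and this does not affect the argument.
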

Theorem \ref{thm:sufficiency_overlapping} confirms that conditional on $\sigma$ admitting one of the two identifying wirings, the probability of observing the first informative wiring is independent of $\mathrm{F}_\sigma$ and has the logistic form. It directly follows that the conditional probability of observing the second informative wiring is also independent of $\mathrm{F}_\sigma$.

\section{The hexad logit estimator}
\label{sec:estimation}

In this section we propose a conditional likelihood estimator for $\beta_0$, the \textit{hexad logit estimator}. For a generic hexad $\sigma=(i_1,j_1,k_1,i_2,j_2,k_2)$ define
\begin{align*}
    W_{\sigma,1} &= X_{i_1 j_1 k_1}+X_{i_1 j_2 k_2}+X_{i_2 j_1 k_2}+X_{i_2 j_2 k_1}, \\
    W_{\sigma,2} &= X_{i_1 j_2 k_1}+X_{i_1 j_1 k_2}+X_{i_2 j_1 k_1}+X_{i_2 j_2 k_2},
\end{align*}
noting that $W_\sigma = W_{\sigma,1} - W_{\sigma,2}$ as defined in Theorem \ref{thm:sufficiency_overlapping}. Define, furthermore, the shorthand notation
$p_{\sigma, c}(\beta) = \mathrm{P}(S_{\sigma,c} = 1 | S_\sigma = 1, X_\sigma)$ where $c\in\{1,2\}$ indexes the two informative wirings defined in the previous section. Therefore, remembering Theorem \ref{thm:sufficiency_overlapping}, we have
\begin{align}
    p_{\sigma, c}(\beta)
    &= 
    \frac
    {\exp( W_{\sigma, c}' \beta)}
    {\sum_{c'=1}^2 \exp  ( W_{\sigma, c'}' \beta ) }.
    \label{eq:nol_pcb_overlap}
\end{align}
The conditional likelihood function that combines information across all informative hexads is given by
\begin{align}
    l_N(\beta) 
    &= 
    \frac{1}{m_N}
    \sum_{\sigma \in \Sigma} 
    l_\sigma(\beta),
    \qquad
    \text{where}
    \qquad
    l_\sigma(\beta)
    = 
    \sum_{c=1}^2 S_{\sigma, c} \log p_{\sigma, c}(\beta).
    \label{eq:log_likelihood}
\end{align}
We note that this is a composite log-likelihood function due to dependence across hexads that share common triads. 
Our proposed hexad logit estimator is given by 
\begin{align*}
    \widehat \beta = \arg\max_{\beta \in \mathcal B} l_N(\beta), 
\end{align*}
where $\mathcal B$ is a compact parameter space containing the true parameter $\beta_0$. Defining 
\begin{align*}
    \overline{W}_\sigma(\beta) 
    = 
    p_{\sigma,1}(\beta) W_{\sigma, 1} 
    + 
    p_{\sigma,2}(\beta) W_{\sigma, 2},
\end{align*}
the corresponding score function is given by
\begin{align*}
    Z_N(\beta)
    =
    \frac{1}{m_N}
    \sum_{\sigma \in \Sigma} 
    s_\sigma(\beta),
    \qquad
    \text{where}
    \qquad
    s_{\sigma}(\beta)
    = 
    \sum_{c = 1}^2  S _{\sigma, c} \left(W_{\sigma, c} - \overline{W}_\sigma(\beta) \right).
\end{align*} 
Also, the Hessian contribution for each hexad $\sigma$ is equal to
\begin{align*}
    H_{\sigma}(\beta) 
    &= 
    - S_\sigma \sum_{c = 1}^2 p_{\sigma, c}(\beta)\left( W_{\sigma, c} - \overline{W}_\sigma(\beta) \right) \left(W_{\sigma, c} - \overline{W}_\sigma(\beta) \right)'.
\end{align*}
Derivations for the score and Hessian are provided in Appendix \ref{app:score_and_Hessian}.

\section{Large sample theory}

This section develops the large sample theory for the hexad logit estimator, deriving conditions on network sparsity under which it is consistent and asymptotically normal.

A quantity that plays an important role in the asymptotic properties of the hexad logit estimator is the unconditional probability of link formation, defined as
\begin{align}
\rho_N  =  \mathrm{P}\left(Y_{ijk} = 1 \right). 
\label{eq:rhodefn}
\end{align}
Asymptotic sparsity of the network is conceptualised by $\rho_N$ tending to zero as $N$ increases. Importantly, $\rho_N$ cannot go to zero faster than $O(1/N^2)$ or the average node degree $N^2 \rho_N$ will also converge to zero, rendering the network asymptotically empty.\footnote{See, for example, Assumption 4.(ii) of \citet{Graham17} for a corresponding requirement in dyadic link formation.}

A second key quantity is
\begin{equation*}
p_N = \mathrm{P}(S_\sigma=1),
\end{equation*}
the unconditional probability that a hexad $\sigma$ is informative. Recalling that informative wirings consist of four hyperedges, we have $p_N = O(\rho_N^4)$. This quantity directly controls the rate at which informative wirings accumulate and is therefore essential to the asymptotic behaviour of the hexad logit estimator. 

We next introduce the assumptions that will be used in obtaining consistency and asymptotic normality.

\begin{assumption}[Compactness]\label{a:compact_and_interior}
    The parameter space $\mathcal B \subset \mathbb R^P$ is compact, and $\beta_0$ is in the interior of $\mathcal B$. 
    The support of the regressors $X_{ijk}$, $\mathcal X \subset \mathbb R^P$, is also compact.
\end{assumption}

\begin{assumption}[Random sampling]\label{a:random_sampling_nodes}
    Let $i = 1,\ldots,N$, $j=1,\ldots,N$, and $k=1,\ldots,N$ index random samples of nodes from the first, second, and third parts respectively, drawn from a population satisfying Assumptions \ref{a:logit} and \ref{a:compact_and_interior}.
\end{assumption}

\begin{assumption}[Identification]\label{a:identification}
     The matrix
     \begin{align*}
         \Gamma_0
         =
         - \lim_{N\to\infty}
         \frac{1}{p_N m_N} \sum_{\sigma \in \Sigma} \mathbb E \left( S_\sigma \sum_{c = 1}^2 p_{\sigma, c}(\beta_0)\left(W_{\sigma, c} - \overline W_\sigma(\beta_0) \right) \left(W_{\sigma, c} - \overline W_\sigma(\beta_0) \right)'\right),
     \end{align*}
     exists and is negative definite.
\end{assumption}
Assumptions \ref{a:compact_and_interior},  \ref{a:random_sampling_nodes} and \ref{a:identification} are analogous to those made by \citet{Graham17} and \citet{Jochmans18} in dyadic link formation models. 
Compactness of regressors in Assumption \ref{a:compact_and_interior} is not essential and can be exchanged with appropriate assumptions on the moments of regressors.
Assumption \ref{a:random_sampling_nodes} is standard.
Assumption \ref{a:identification} is a full-rank condition on the Hessian and guarantees that the objective function has a unique optimiser in the limit.

The dyad-specific fixed effects structure admits  more complex forms of heterogeneity. 
This leads to fundamental differences with both dyadic and triadic models that involve node-specific fixed effects only. In particular, under the dyad-specific heterogeneity structure, even when the network is asymptotically non-empty, it can still be \textit{too sparse} in the sense that in the limit one can accumulate infinitely many links but asymptotically zero informative wirings. This is a novel phenomenon, absent from models with dyad-level fixed effects. We formally establish and discuss these points in detail below, and provide a precise measure of when sparse becomes too sparse in terms of link formation probability $\rho_N$.

\subsection{Consistency}\label{sec:consistency}

This subsection presents the first key result of our large-sample analysis: the consistency of the hexad logit estimator under specific conditions on network sparsity.
\begin{theorem}[Consistency]\label{thm:consistency} If Assumptions \ref{a:logit}-\ref{a:identification} hold, and if $\rho_N=O(1/N^{\delta})$ for some $\delta$ with $0\leq \delta < 3/2$ then, $\widehat{\beta} - \beta_0 \to_p 0$ as $N\to \infty$.
\end{theorem}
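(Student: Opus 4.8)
The plan is to follow the standard M-estimation route for extremum estimators: (i) show that the (normalized) objective function converges uniformly in probability to a non-stochastic limit, (ii) verify that this limit is uniquely maximized at $\beta_0$, and (iii) invoke the standard argmax consistency theorem. The wrinkle specific to sparse network data is that the objective function $l_N(\beta)$ has expectation vanishing at rate $p_N = O(\rho_N^4)$, so the natural object to analyze is the \emph{rescaled} objective $\bar l_N(\beta) = l_N(\beta)/p_N$ (equivalently, one works with contributions conditional on $S_\sigma = 1$). The sparsity restriction $\delta < 3/2$ will enter precisely at the point where we need a law of large numbers for this rescaled object.

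First I would establish the pointwise convergence $\bar l_N(\beta) \to_p \ell(\beta)$ for each fixed $\beta$, where $\ell(\beta) = \lim_N \frac{1}{p_N m_N}\sum_\sigma \mathbb E[l_\sigma(\beta)]$. Since $l_N(\beta) = \frac{1}{m_N}\sum_{\sigma}l_\sigma(\beta)$ is an average over $m_N = N^3(N-1)^3$ highly overlapping terms, the key is a variance bound. Writing $\mathrm{Var}(l_N(\beta)) = \frac{1}{m_N^2}\sum_{\sigma,\sigma'}\mathrm{Cov}(l_\sigma(\beta),l_{\sigma'}(\beta))$, I would classify pairs $(\sigma,\sigma')$ by how many nodes they share across the three parts: since $\varepsilon_{ijk}$ are independent across triads and $l_\sigma$ depends only on the eight $Y$'s in $\sigma$, the covariance is zero unless $\sigma$ and $\sigma'$ share at least one triad, which requires sharing at least one node in each part. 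Each $l_\sigma$ is bounded (by boundedness of $X$ on compact support, Assumption \ref{a:compact_and_interior}), and $l_\sigma \neq 0$ only on the event $\{S_\sigma=1\}$, which has probability $O(\rho_N^4)$. Counting: the number of pairs sharing a prescribed overlap pattern is of smaller order in $N$ than $m_N^2$, and each nonzero covariance term is $O(\rho_N^{4})$ (or $O(\rho_N^{8})$ for disjoint-triad but node-overlapping pairs, which is even smaller). Carrying out this bookkeeping, $\mathrm{Var}(l_N(\beta))/p_N^2 \to 0$ precisely when $\rho_N$ does not decay too fast — this is where $\delta < 3/2$ is used, matching the heuristic in the introduction that $p_N m_N = O(N^6 \rho_N^4)$ must dominate the variance contribution from the dominant overlap class (pairs sharing exactly one node in each part, of which there are $O(N^3 \cdot N^3 \cdot N^{-3}) = O(N^9 \cdot \text{something})$ — the precise exponent is what pins down $3/2$). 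Combined with $\mathbb E[\bar l_N(\beta)] \to \ell(\beta)$, Chebyshev gives pointwise convergence.

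Next I would upgrade pointwise to uniform convergence over the compact set $\mathcal B$. The map $\beta \mapsto l_\sigma(\beta)$ is smooth with derivative $s_\sigma(\beta)$ bounded uniformly in $\sigma$ and $\beta$ (again using compactness of $\mathcal X$ and $\mathcal B$), so $\{l_\sigma(\cdot)\}$ is a uniformly Lipschitz family; a standard stochastic-equicontinuity / finite-cover argument then promotes pointwise to uniform convergence of $\bar l_N$. For identification of the limit (step ii), I would argue as in the conditional-logit literature: for each hexad $l_\sigma(\beta)$ is, conditional on $S_\sigma=1$, the log-likelihood of a binary logit observation with "regressor" $W_\sigma = W_{\sigma,1}-W_{\sigma,2}$ and success probability $\Lambda(W_\sigma'\beta_0)$ (Theorem \ref{thm:sufficiency_overlapping}); hence $\mathbb E[l_\sigma(\beta) - l_\sigma(\beta_0)\mid S_\sigma=1] \leq 0$ with equality iff $W_\sigma'(\beta-\beta_0)=0$ a.s. Averaging and taking limits, $\ell(\beta)-\ell(\beta_0) \leq 0$, and the negative-definiteness of $\Gamma_0$ in Assumption \ref{a:identification} (which is exactly $-\lim \frac{1}{p_N m_N}\sum_\sigma \mathbb E[H_\sigma(\beta_0)]$, the conditional Fisher information) rules out a nontrivial null direction, so $\beta_0$ is the unique maximizer of $\ell$. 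Then the standard consistency theorem for extremum estimators delivers $\widehat\beta \to_p \beta_0$.

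The main obstacle is the variance calculation in the first step: one must carefully enumerate the overlap classes of hexad pairs, bound the number of pairs in each class as a power of $N$, bound the corresponding covariance using the $O(\rho_N^4)$ (or smaller) probability that the relevant hexads are informative, and verify that the \emph{worst} class — the one that produces the binding constraint — is controlled exactly when $\delta < 3/2$. Getting the combinatorial exponents right, and confirming that no overlap pattern forces a stricter threshold than $3/2$ for mere consistency (as opposed to the $\delta < 1$ needed later for asymptotic normality, which comes from a different, degeneracy-driven consideration), is the delicate part; everything else is routine once the rescaling by $p_N$ is in place.
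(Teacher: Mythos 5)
Your overall architecture (rescale by $p_N$, Chebyshev plus an overlap-class variance bound, then argmax consistency) is the same as the paper's; the paper even short-cuts your uniform-convergence and identification steps by noting that the sample objective is globally concave (its Hessian is a negative sum of outer products), so that pointwise convergence plus Assumption \ref{a:identification} already yields consistency. The genuine gap is in the variance bookkeeping, which is precisely the step you flag as delicate but then resolve with bounds that cannot deliver the stated threshold. You bound every covariance between triad-sharing hexad pairs by $O(\rho_N^4)$, i.e.\ by the marginal probability that a single hexad is informative. With that bound, the overlap class you call dominant --- pairs sharing exactly one node in each part, of which there are $O(N^9)$ --- contributes $O(N^9\rho_N^4)$ to $\sum_{\sigma,\sigma'}\mathrm{Cov}(l_\sigma,l_{\sigma'})$, and after dividing by $m_N^2p_N^2=O(N^{12}\rho_N^8)$ this is $O\left(1/(N^3\rho_N^4)\right)=O(N^{4\delta-3})$, which diverges for every $\delta>3/4$. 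As written, your argument therefore proves consistency only for $\delta<3/4$, not for the full range $\delta<3/2$.

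The missing ingredient is the class-by-class refinement of the probability that \emph{both} hexads are informative (the paper's Lemma \ref{lemma:main2}). Because any two hyperedges within one informative wiring share exactly one node, two distinct hexads whose overlap pattern is $(1,1,1)$, or one $2$ and two $1$s, can share at most one hyperedge, so joint informativeness needs at least seven distinct hyperedges and the covariance is $O(\rho_N^7)$; with two $2$s and one $1$ it is $O(\rho_N^6)$; only the diagonal class $\sigma=\sigma'$ retains the rate $O(\rho_N^4)$. Feeding these rates into Lemma \ref{lemma:count} gives a normalized variance of order $1/N + 1/(N^3\rho_N) + 1/(N^4\rho_N) + 1/(N^5\rho_N^2) + 1/(N^6\rho_N^4)$, whose binding term is the diagonal one, $1/(N^6\rho_N^4)$ --- the reciprocal of the expected number of informative hexads --- and it is this term, not the $(1,1,1)$ class you single out, that pins down $\delta<3/2$. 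A smaller slip: covariances are not exactly zero for node-overlapping pairs that share no triad, since conditional independence given $(\mathbf X,\mathbf F)$ does not imply unconditional independence ($\mathbb E[l_\sigma(\beta)\mid \mathbf X,\mathbf F]$ is not centred); your parenthetical $O(\rho_N^8)$ bound does handle those pairs, which end up contributing only $O(1/N)$.
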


This theorem formally establishes that for consistency to hold, link formation probability must tend to zero slower than $1/N^{3/2}$. In other words, in the given framework one needs something more than asymptotic non-emptiness of the network, which holds under the weaker condition that $N^2\rho_N$ remains above zero for $N$ sufficiently large. This is a fundamental deviation from the dyadic (and also triadic) model with node-level heterogeneity, as we discuss below. 

The stronger condition on $\delta$ in Theorem \ref{thm:consistency} is essentially caused by the more complex heterogeneity structure admitted by the dyad-specific effects. This translates into more complex informative wirings, involving a greater number of hyperedges and therefore forming with lower probability. Consequently, under the dyad-specific heterogeneity structure one requires more than asymptotic non-emptiness to collect sufficiently many informative wirings, leading to the requirement $0\leq \delta < 3/2$.

To formalise this intuition, we first investigate  the structure of informative wirings for the dyadic and triadic models with node-specific heterogeneity. The dyadic link formation model is given by 
\begin{align}
    Y_{ij}=1\{X_{ij}'\beta_0 + A_i + B_j - \varepsilon_{ij} \geq 0\},
    \label{eq:dyadicmodel}
\end{align}
which has already been considered in the literature. In particular, we know that in the bipartite case informative wirings consist of two edges only (\citealt{Charbonneau17} and \citealt{Jochmans18}). The triadic link formation model with node-specific effects is given by
\begin{align}
    Y_{ijk}=1\{X_{ijk}'\beta_0 + A_i + B_j + C_k - \varepsilon_{ijk} \geq 0\}.
    \label{eq:triadicmodel2}
\end{align}
We develop the theory for this variant in Appendix \ref{sec:triadicno} and
find that there are four informative wirings for this model, each of which consists of two hyperedges (see Section \ref{sec:suffnofe} and Figure \ref{fig:alternative_subhypergraphs_111111}).

As opposed to both these models, we have already seen that in the triadic model with dyad-effects, informative wirings consist of four hyperedges. Wirings with a greater number of hyperedges form with (much) lower probability as the probability of wiring formation decreases exponentially with the number of hyperedges involved in the wiring. This is why the current model requires more than an asymptotically non-empty network: informative wirings accumulate relatively slowly so something denser than a barely non-empty network is required. 

The reason for the specific threshold value of $3/2$ for $\delta$ is a novel pathological scenario under the triadic model with dyad-level heterogeneity: unless $\delta < 3/2$, as $N\to\infty$ the network will not produce infinitely many informative wirings in the limit even though there will be asymptotically infinitely many hyperedges. In fact, under $\delta >3/2$ there will be asymptotically zero informative wirings. This clearly is a problematic scenario, as collection of informative wirings is essential to convergence. Furthermore, this is a peculiarity of the model in \eqref{eq:ofe}: as it turns out,  the dyadic and triadic models with node-level heterogeneity automatically avoid such a scenario.
 
\begin{table}[t!]
\centering
\renewcommand{\arraystretch}{1.3}
\begin{tabular}{|l|c|c|c|}
\hline
 & \textbf{Dyadic} & \textbf{Triadic} & \textbf{Triadic} \\
\hline
Heterogeneity      & $A_{i} + B_{j}$ & $A_{i} + B_{j} + C_{k}$ & $A_{ij} + B_{jk} + C_{ik}$ \\
Link probability   & $\mathrm{P}(Y_{ij}=1)=\widecheck{\rho}_N$ & $\mathrm{P}(Y_{ijk}=1)=\widetilde{\rho}_N$ & $\mathrm{P}(Y_{ijk}=1)=\rho_N$ \\
Average node degree & $N\widecheck{\rho}_N$ & $N^2\widetilde{\rho}_N$ & $N^2\rho_N$ \\
\# links & $N^2\widecheck{\rho}_N$ & $N^3\widetilde{\rho}_N$ & $N^3\rho_N$ \\
\# informative wirings & $O(N^4\widecheck{\rho}_N^2)$ & $O(N^6\widetilde{\rho}_N^2)$ & $O(N^6\rho_N^4)$ \\
\hline
\end{tabular}
\caption{Key quantities for the dyadic link formation model (second column), and the triadic link formation model with node-specific effects (third column) or dyad-specific effects (fourth column). The fixed effects structure for each model is stated in the second row. A \emph{link} refers to an edge in dyadic models, and a hyperedge in triadic models. `Link probability' is the unconditional probability of link formation. `Average node degree' is the expected number of links that a generic node will belong to. `\# links' is the expected total number of links in the network.
`\# informative wirings' reports the asymptotic magnitude 
of the expected number of informative wirings in the network.}
\label{tbl:taxonomy}
\end{table}   

To establish these points formally, 
let $\widecheck{\rho}_N$ be the edge formation probability under the dyadic model \eqref{eq:dyadicmodel} and $\widetilde{\rho}_N$ be the hyperedge formation probability under the triadic model \eqref{eq:triadicmodel2}. As defined earlier in \eqref{eq:rhodefn}, $\rho_N$ is the corresponding probability for our main model in \eqref{eq:ofe}. In what follows, a \emph{link} refers to an edge in the dyadic model and a hyperedge in the triadic models. The expected number of links in the network is given by the product of the total number of dyads/triads and the relevant probability of link formation. This is $N^2\widecheck{\rho}_N$ for the dyadic model, $N^3\widetilde{\rho}_N$ for the triadic model with node-specific effects, and $N^3{\rho}_N$ for the triadic model with dyad-specific effects. The expected number of informative wirings, on the other hand, is proportional to the product of the total number of tetrads/hexads in the network and the probability of a generic tetrad/hexad being informative. As explained previously, under node-level heterogeneity an informative wiring consists of two edges/hyperedges. Then, the probability of observing an informative wiring is $O(\widecheck \rho _N^2)$ and $O(\widetilde \rho _N^2)$ under dyadic and triadic models with node-specific effects, respectively. In the triadic model with dyad-level effects, on the other hand, informative wirings form with probability $O(\rho_N^4)$. Noting that a dyadic model has $O(N^4)$ tetrads whereas a triadic model contains $O(N^6)$ hexads finally yields the asymptotic rates $O(N^4\widecheck \rho_N^2)$, $O(N^6\widetilde \rho_N^2)$, and $O(N^6 \rho_N^4)$. These calculations are summarised in Table \ref{tbl:taxonomy}. We also state the average node degree for each setting.

Our calculations now clearly reveal that in the models with node-specific effects, the expected number of informative wirings is proportional to the square of the expected number of links. Therefore, informative wirings accumulate automatically with links: by design, one simply cannot have infinitely many links without also accumulating infinitely many informative wirings in the limit. Under the triadic link formation model with dyad-level effects, however, it is now clear that unless $\delta<3/2$, one will end up in the limit with infinitely many links but zero (or finite) informative wirings, and fail to achieve consistency.\footnote{The importance of having both infinitely many links and infinitely many informative wirings is directly reflected in the proof of Theorem \ref{thm:consistency}. See in particular equation \eqref{eq:disc1} and the surrounding discussion. For consistency to hold, the right hand side of equation \eqref{eq:disc1} must go to zero, and this happens only if the number of both the links and the informative wirings tend to infinity. The corresponding theoretical arguments for the dyadic model and the triadic model with node-level effects can be found in equations \eqref{eq:disc3} and \eqref{eq:disc2}, respectively.} We also confirm that in the models with node-level effects, asymptotic non-emptiness (i.e. average node degree being above 0 for sufficiently large $N$) automatically guarantees accumulation of infinitely many informative wirings as $N\to\infty$. However, for the triadic model with dyad-level effects non-emptiness does not guarantee this.

\subsection{Asymptotic normality}\label{sec:an}

We now turn to the asymptotic distribution of the hexad logit estimator, establishing the conditions for its normality.

\begin{theorem}[Asymptotic normality]\label{thm:an} 
If Assumptions \ref{a:logit}-\ref{a:identification} hold, and if $\rho_N=O(1/N^{\delta})$ for some $\delta$ with $0\leq \delta < 1$, then for any non-random $P\times 1$ vector $c$ we have
\begin{align*}
    \sqrt{N^3 \rho _N}
    \frac
    { c'(\widehat{\beta} - \beta_0)}
    {\sqrt{c'\Gamma_0^{-1} \overline{M}_H \Gamma_0^{-1} c}}
    \to_d \mathcal{N}(0,2^6),
\end{align*}
where $\overline M_H$ is as defined in equation \eqref{eq:MH} in the Appendix.
\end{theorem}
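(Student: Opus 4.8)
The plan is the usual linearisation argument for extremum estimators, with essentially all of the work concentrated in the asymptotics of the score $Z_N(\beta_0)$, which behaves as a multipartite U-statistic. Since $\delta<1<3/2$, Theorem \ref{thm:consistency} applies, so with probability approaching one $\widehat\beta$ lies in the interior of $\mathcal B$ and solves $Z_N(\widehat\beta)=0$; a coordinatewise mean-value expansion around $\beta_0$ gives $0=Z_N(\beta_0)+H_N(\tilde\beta)(\widehat\beta-\beta_0)$ for some $\tilde\beta$ on the segment between $\widehat\beta$ and $\beta_0$, so that
\begin{align*}
    \sqrt{N^3\rho_N}\,(\widehat\beta-\beta_0)=-\left(\frac{H_N(\tilde\beta)}{p_N}\right)^{-1}\sqrt{N^3\rho_N}\;\frac{Z_N(\beta_0)}{p_N}.
\end{align*}
The result will follow by Slutsky's theorem once I establish (i) $H_N(\tilde\beta)/p_N\to_p\Gamma_0$ and (ii) $\sqrt{N^3\rho_N}\,Z_N(\beta_0)/p_N\to_d\mathcal N(0,2^6\,\overline M_H)$, since then $\sqrt{N^3\rho_N}(\widehat\beta-\beta_0)\to_d\mathcal N(0,2^6\,\Gamma_0^{-1}\overline M_H\Gamma_0^{-1})$, and projecting onto $c$ and normalising delivers the statement.

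For (i) I would show that $\mathbb E[H_N(\beta)/p_N]$ converges to a continuous limit on $\mathcal B$, equal to $\Gamma_0$ at $\beta_0$ by Assumption \ref{a:identification} (using that $H_\sigma(\beta)$ vanishes unless $S_\sigma=1$, so $\mathbb E\|H_\sigma(\beta)\|=O(p_N)$), and that $\mathrm{Var}(H_N(\beta)/p_N)\to 0$ because the expected number of informative hexads $p_N m_N\asymp N^{6-4\delta}$ diverges for $\delta<3/2$. Combined with the Lipschitz continuity of $\beta\mapsto H_\sigma(\beta)/p_N$ on the compact $\mathcal B$ (Assumption \ref{a:compact_and_interior}) this yields a uniform law of large numbers; together with $\tilde\beta\to_p\beta_0$ it gives $H_N(\tilde\beta)/p_N\to_p\Gamma_0$, which is negative definite and hence invertible for $N$ large.

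The core of the argument is (ii). First, $\mathbb E Z_N(\beta_0)=0$, because $\mathbb E[s_\sigma(\beta_0)\mid X_\sigma,\mathrm F_\sigma]=0$: write $s_\sigma(\beta_0)=\bigl(S_{\sigma,1}p_{\sigma,2}(\beta_0)-S_{\sigma,2}p_{\sigma,1}(\beta_0)\bigr)W_\sigma$ and apply Theorem \ref{thm:sufficiency_overlapping}. Next I would compute $\mathrm{Var}(Z_N(\beta_0))=m_N^{-2}\sum_{\sigma,\sigma'}\mathrm{Cov}(s_\sigma,s_{\sigma'})$. Since each $s_\sigma$ has conditional mean zero given $(\mathbf X,\mathbf F)$ and links are conditionally independent across triads, $\mathrm{Cov}(s_\sigma,s_{\sigma'})=0$ unless $\sigma$ and $\sigma'$ share at least one triad, i.e.\ share at least one node in each of the three parts. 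Classifying the surviving pairs by overlap pattern and, for each pattern, counting both the number of such pairs and the order of $\mathrm{Cov}(s_\sigma,s_{\sigma'})$ (the latter controlled by the number of hyperedges the two informative wirings are \emph{jointly forced} to contain on the shared triads), one finds that pairs sharing exactly one triad contribute at order $N^{-3}\rho_N^7$, while pairs with larger overlap contribute at orders $N^{-4}\rho_N^7$, $N^{-5}\rho_N^6$ and $N^{-6}\rho_N^4$, the last being the ``diagonal'' (fully degenerate) piece. For $\delta<1$ the one-shared-triad term strictly dominates all of these; this is exactly where the threshold $\delta=1$ enters, as the diagonal piece overtakes it once $\delta\ge 1$, producing the Gaussian-chaos regime flagged in the introduction. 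Equivalently, I would Hájek-project $Z_N(\beta_0)$ onto individual triads, writing $Z_N(\beta_0)=(2^3/N^3)\sum_{t}g_t+R_N$ with $g_t$ the triad-level projection; distinct triads share at most two nodes, so by Hoeffding orthogonality the $g_t$ are pairwise uncorrelated, $\mathrm{Var}(g_t)$ is of exact order $\rho_N^7$ with a well-defined limiting constant (the dominant contribution coming from the rare event that the shared triad's link is present), and $R_N$ collects precisely the more-overlapping terms, which are $o_p$ of the leading term when $\delta<1$. This identifies $\overline M_H$ as in \eqref{eq:MH}, and the factor $2^6=(2^3)^2$ is the squared multiplicity of triads within a hexad, which survives the normalisation.

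It remains to prove a central limit theorem for $(2^3/N^3)\sum_t g_t$. The $g_t$ are functions of i.i.d.\ node-level variables (covariates and dyad fixed effects) together with conditionally independent edge shocks, so they form a dissociated, locally dependent array indexed by triads that are dependent only when they share nodes; a CLT for such arrays --- via Stein's method for local dependence, or a martingale CLT along an ordering of the nodes, in the spirit of \citet{Graham17} --- applies once a Lyapunov condition is verified. That condition holds because $g_t$ equals a quantity of order $\rho_N^3$ with probability of order $\rho_N$ and a quantity of order $\rho_N^4$ otherwise, so $N^3\,\mathbb E|g_t|^3/(N^3\mathrm{Var}\,g_t)^{3/2}=O(N^{-3/2}\rho_N^{-1/2})=O(N^{(\delta-3)/2})\to 0$. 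Hence $\sqrt{N^3\rho_N}\,Z_N(\beta_0)/p_N\to_d\mathcal N(0,2^6\overline M_H)$, and combining with (i) and the expansion above yields the theorem. The main obstacle is part (ii): correctly identifying the dominant variance term and verifying that $\delta<1$ is exactly the boundary at which the triad-linear term stops dominating the fully degenerate term, and then carrying out the CLT for the resulting weakly dependent array of triad contributions with vanishing-probability ``spikes''; the bookkeeping of the limiting constants ($\overline M_H$ and $p_N/\rho_N^4$) under the drifting sparsity sequence also requires care.
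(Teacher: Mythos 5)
Your proposal follows essentially the same route as the paper: a mean-value expansion, a uniform-LLN argument for the Hessian scaled by $p_N$, the variance decomposition of the score by hexad-overlap pattern (with the same orders $N^{-3}\rho_N^{7}$, $N^{-4}\rho_N^{7}$, $N^{-5}\rho_N^{6}$, $N^{-6}\rho_N^{4}$ and the same $\delta<1$ threshold where the diagonal term takes over), a H\'ajek projection onto triad-level information with the $2^3$ multiplicity and asymptotic equivalence of score and projection, and a CLT for the projection in the spirit of \citet{Graham17}. The only real difference is cosmetic: the paper carries out that final CLT via a Chatterjee-type Lindeberg swapping argument that exploits full conditional independence of the triad contributions given $(\mathbf{X},\mathbf{F})$ and matches conditional variances --- which is exactly what delivers the self-normalized statement involving the random matrix $\overline{M}_H$ --- rather than a generic local-dependence Stein bound, whose naive dependency-neighborhood version (neighborhoods of size $O(N^2)$) would not by itself close the argument.
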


Theorem \ref{thm:an} shows that the hexad logit estimator has the $\sqrt{N^3\rho_N}$ convergence rate. This is in line with the extant results in the literature, where the convergence rate is the root of the expected number of links across the network.

What is different is the stronger requirement $0\leq \delta <1$ on link formation probability. To understand why, first recall that the score is given by
\begin{align*}
    Z_N
    =
    \frac{1}{N^3(N-1)^3}
    \sum_{\sigma \in \Sigma} 
    s_\sigma(\beta_0).
\end{align*} 
This score function is analogous to a U-statistic, as previously noted in the literature for the score functions of dyadic link formation models. This analogy is helpful in obtaining the asymptotic normality of the score function by using results from the U-statistic literature. Importantly, however, under $\delta\geq 1$ the score $Z_N$ becomes akin to a highly degenerate U-statistic---and the degeneracy of a U-statistic typically leads to asymptotic non-normality. 

To establish these points more formally, first consider the variance decomposition of the score, which is central to its asymptotic behaviour:
\begin{align}
    \mathrm{Var}(Z_{N}) &=  
    \sum_{q=0}^6 
    \overline{\mathcal{C}}_{q,N},
    \label{eq:decompprose}
\end{align}
where 
\begin{align*}
    \overline{\mathcal{C}}_{q,N} 
    =
    \frac{1}{N^6(N-1)^6}
    \underset{[\sigma,\sigma']_q}{\sum \sum}
    \mathbb{E} [s_\sigma s_{\sigma'}]
    \qquad
    \text{for }
    q=0,\ldots,6,
\end{align*}
and ${\sum \sum}_{[\sigma,\sigma']_q}$ denotes summation over all hexad pairs $(\sigma,\sigma')$ with $q$ common nodes.
This decomposition is derived and analysed in detail in Section \ref{sect:scorvar} in the Appendix.

That $\overline {\mathcal{C}}_{0,N}=\overline{\mathcal{C}}_{1,N}=\overline{\mathcal{C}}_{2,N}=0$ follows from the conditional independence of triad-specific shocks: hexad-pairs that share less than three common nodes cannot share a common triad and are therefore conditionally independent. 
More importantly, when $\overline{\mathcal{C}}_{3,N}$ is the leading term in \eqref{eq:decompprose}, a valid (triad-level) H\'{a}jek projection exists that is asymptotically equivalent to the score, and has a limiting Normal distribution. This proof approach is typical for U-statistics and is key to obtaining asymptotic normality of $\widehat \beta$.

However, our analysis also reveals that $\overline{\mathcal{C}}_{3,N}$ is not necessarily the leading term of the decomposition. Indeed, by equation \eqref{eq:altdecomp} in the Appendix we have,
\begin{align}
    \mathrm{Var}(Z_{N})
    &=
    \underset{\overline{\mathcal{C}}_{3,N}}
    {\underbrace{O\left( \frac{\rho_N^7}{N^3} \right)}}
    +
    \underset{\overline{\mathcal{C}}_{4,N}}
    {\underbrace{O\left( \frac{\rho_N^7}{N^4} \right)}}
    +
    \underset{\overline{\mathcal{C}}_{5,N}}
    {\underbrace{O\left( \frac{\rho_N^6}{N^5} \right)}}
    +
    \underset{\overline{\mathcal{C}}_{6,N}}
    {\underbrace{O\left( \frac{\rho_N^4}{N^6} \right)}}.
    \label{eq:decompprose2}
\end{align}
It is straightforward to confirm that, as long as $\delta <2$, $\overline{\mathcal{C}}_{4,N}$ and $\overline{\mathcal{C}}_{5,N}$ are always dominated by $\overline{\mathcal{C}}_{3,N}$. However, $\overline{\mathcal{C}}_{6,N}$ dominates $\overline{\mathcal{C}}_{3,N}$ if $\delta > 1$. This would render $Z_N$ analogous to a degenerate U-statistic.\footnote{Typically, a (textbook) U-statistic would be called degenerate if the decomposition component for the covariance between terms with one common index ($\overline{\mathcal{C}}_{1,N}=0$ in our case) is zero. That we are able to obtain asymptotic normality under $\overline{\mathcal{C}}_{1,N}=\overline{\mathcal{C}}_{2,N}=0$ is a consequence of the random shock $\varepsilon_{ijk}$ being conditionally independent across triads. This triadic independence is what obtains asymptotic normality and therefore triadic information corresponds to what \citet{Serfling} calls the ``basic information''. As discussed in detail in Appendix \ref{app:th1}, the appropriate H\'{a}jek projection is therefore on the triadic information and the term $\overline{\mathcal{C}}_{3,N}$ captures the variance of this projection. Therefore, degeneracy in our case corresponds to $\overline{\mathcal{C}}_{3,N}=0$.} This usually leads to non-normal limiting distributions known as \textit{Gaussian chaos}. While it is possible to pin down the resulting distribution, it is practically difficult to utilise, as a simple formula usually does not exist; see, e.g., Chapter 5.5.2 of \citet{Serfling} and Chapter 12.3 of \citet{vdV}. Consequently, we use $0\leq \delta<1$ as a sufficient condition for asymptotic normality.

The \textit{tension} between the number of links and the number of identifying wirings that we discussed in the previous section is also at work here, and rather subtly. To see this, notice one can also write
\begin{align*}
    \overline{\mathcal{C}}_{3,N} 
    = 
    O\left(
        \frac{\rho_N^8}{N^3 \rho_N}
    \right)
    \qquad
    \text{and}
    \qquad
    \overline{\mathcal{C}}_{6,N} 
    = 
    O\left(
        \frac{\rho_N^8}{N^6 \rho_N^4}
    \right),
\end{align*}
which reveals the number of links and the number of informative wirings in the denominators of $\overline{\mathcal{C}}_{3,N}$ and $\overline{\mathcal{C}}_{6,N}$.
For consistency, it was enough that both the number of links, $O(N^3 \rho_N)$, and number of informative wirings $O(N^6\rho_N^4)$, go to infinity. The above display reveals that for non-degeneracy we additionally need the informative wirings to accumulate faster than the number of links; otherwise, $\overline{\mathcal{C}}_{3,N}$ will not be the leading term. Intuitively, if the network is not \textit{too} sparse (that is if $0\leq \delta <1$), the hyperedges we accumulate will be enough to construct an eventually even larger number of informative wirings. However, under $1\leq \delta <3/2$, the network is too sparse for sufficiently fast accumulation of informative hexads.

The dyadic and triadic models with node-level effects have no such degeneracy issue; see, in particular, Sections \ref{sec:dyadicdecomp} and \ref{sec:tsimplean} for derivations of variance decompositions analogous to \eqref{eq:decompprose2} for these two models. From our discussion of Table \ref{tbl:taxonomy} we already know that these models do not admit a scenario where one can have many links but slow accumulation of informative wirings.

\section{Simulation study}
\label{sec:simulation}

In this section, we study the finite sample performance of the hexad logit estimator.
The data generating process follows the model in equation \eqref{eq:ofe}, specified for a scalar covariate $X_{ijk}$ and parameter $\beta_0$:
\begin{equation*}
Y_{ijk} = 1\{X_{ijk}\beta_0 + A_{ij} + B_{jk} + C_{ik} - \varepsilon_{ijk} \geq 0\}.
\end{equation*}
The errors $\varepsilon_{ijk}$ follow a standard logistic distribution, the covariate $X_{ijk} \sim \mathcal{N}(0,1)$ is drawn independently across triads and independently of the errors and fixed effects, and the true parameter is set to $\beta_0 = 1$.

We mimic the simulation design in \citet{Jochmans18} and adapt it to the triadic setting.
To induce degree heterogeneity and dependence driven by dyad-level fixed effects, each node $r \in \{1,\ldots,N\}$ is assigned a tendency parameter
\begin{equation*}
\phi_r = \frac{N-r}{N-1} \in [0,1];
\end{equation*}
so, lower indices imply higher tendency values.
Sparsity is governed by a common scale parameter $c_N$ that depends on the sample size.
We consider three regimes:
\begin{equation*}
\text{dense: } c_N = 0, \quad
\text{log-log: } c_N = \ln\ln N, \quad
\text{log-sqrt: } c_N = \ln\sqrt{N}.
\end{equation*}

\begin{table}[t!]
\centering
\begin{tabular}{lcccccccc}

\multicolumn{9}{c}{Dense Regime ($c_N = 0$)} \\
\midrule
$N$ & $\overline{\widehat{\beta}}$ & SD($\widehat{\beta}$) & $\overline{\text{se}}(\widehat{\beta})$ & RMSE & $\overline{\text{se}}/\text{SD}$ & $C_{90}$ & $C_{95}$ & Power \\
\midrule
20 & 1.00 & 0.04 & 0.06 & 0.04 & 1.43 & 0.977 & 0.994 & 1.00 \\
25 & 1.00 & 0.03 & 0.04 & 0.03 & 1.39 & 0.976 & 0.992 & 1.00 \\
30 & 1.00 & 0.02 & 0.02 & 0.02 & 1.26 & 0.965 & 0.988 & 1.00 \\
40 & 1.00 & 0.01 & 0.01 & 0.01 & 1.18 & 0.952 & 0.986 & 1.00 \\
50 & 1.00 & 0.01 & 0.01 & 0.01 & 1.11 & 0.935 & 0.972 & 1.00 \\
\addlinespace[1.4em]
\multicolumn{9}{c}{Log-Log Regime ($c_N = \ln\ln N$)} \\
\midrule
$N$ & $\overline{\widehat{\beta}}$ & SD($\widehat{\beta}$) & $\overline{\text{se}}(\widehat{\beta})$ & RMSE & $\overline{\text{se}}/\text{SD}$ & $C_{90}$ & $C_{95}$ & Power \\
\midrule
20 & 1.00 & 0.06 & 0.11 & 0.06 & 1.79 & 0.996 & 1.000 & 1.00 \\
25 & 1.00 & 0.04 & 0.06 & 0.04 & 1.61 & 0.988 & 0.999 & 1.00 \\
30 & 1.00 & 0.03 & 0.04 & 0.03 & 1.45 & 0.980 & 0.997 & 1.00 \\
40 & 1.00 & 0.02 & 0.02 & 0.02 & 1.29 & 0.963 & 0.993 & 1.00 \\
50 & 1.00 & 0.01 & 0.01 & 0.01 & 1.16 & 0.942 & 0.972 & 1.00 \\
\addlinespace[1.4em]
\multicolumn{9}{c}{Log-Sqrt Regime ($c_N = \ln\sqrt{N}$)} \\
\midrule
$N$ & $\overline{\widehat{\beta}}$ & SD($\widehat{\beta}$) & $\overline{\text{se}}(\widehat{\beta})$ & RMSE & $\overline{\text{se}}/\text{SD}$ & $C_{90}$ & $C_{95}$ & Power \\
\midrule
20 & 1.02 & 0.10 & 0.23 & 0.10 & 2.23 & 1.000 & 1.000 & 1.00 \\
25 & 1.01 & 0.06 & 0.12 & 0.06 & 1.95 & 0.996 & 1.000 & 1.00 \\
30 & 1.01 & 0.04 & 0.07 & 0.04 & 1.80 & 0.998 & 1.000 & 1.00 \\
40 & 1.00 & 0.02 & 0.04 & 0.02 & 1.58 & 0.994 & 0.999 & 1.00 \\
50 & 1.00 & 0.02 & 0.02 & 0.02 & 1.40 & 0.984 & 0.995 & 1.00 \\

\end{tabular}
\caption{Simulation results across regimes. The true parameter value is $\beta_0=1$ in all cases. The average and standard deviation of $\widehat{\beta}$ across replications are given by $\overline{\widehat{\beta}}$ and  $\mathrm{SD}(\widehat{\beta})$. $\overline{\text{se}}(\widehat{\beta})$ is the Monte Carlo average of estimated standard errors. $\overline{\text{se}}/\text{SD}$ yields the ratio of the average estimated standard error to the empirical standard deviation, whereas RMSE is the root mean square error. The 90\% and 95\% simulation coverage rates are based on the normal approximation and are given by $C_{90}$ and $C_{95}$, respectively. Finally,  Power is the rejection probability of $H_0:\beta_0=0$ at 5\% against the (true) alternative $\beta_0=1$. All results are based on 1000 replications.}
\label{tab:sim_stacked}
\label{tbl:sim}
\end{table}

Dyad-specific fixed effects are then constructed by taking the negative of the average of node tendencies and applying the sparsity parameter:
\begin{equation*}
A_{ij} = -\frac{c_N}{2}(\phi_i + \phi_j), \quad
B_{jk} = -\frac{c_N}{2}(\phi_j + \phi_k), \quad
C_{ik} = -\frac{c_N}{2}(\phi_i + \phi_k).
\end{equation*}
This specification has three key features.
First, the tendency profile induces degree heterogeneity: lower-indexed nodes have higher $\phi_r$ and therefore receive more negative fixed effects. 
Consequently, the lower a node is indexed, the less likely it is to form links.
Second, larger values of the sparsity parameter $c_N$ yield sparser networks by reducing the overall probability of link formation.
Third, setting $c_N = 0$ removes the fixed effects entirely, so link probabilities depend only on $X_{ijk}\beta_0$ and the logistic shock, producing the densest networks in our study.

\begin{figure}[t!]
    \centering
    \begin{subfigure}[t]{0.45\textwidth}
        \centering
        \includegraphics[width=\textwidth]{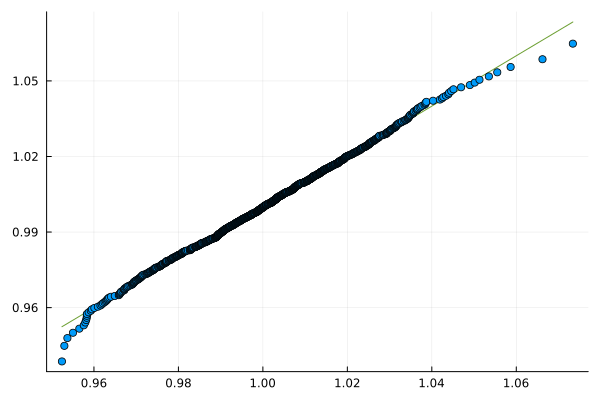}
        \caption*{$c_N = 0$ (Dense)}
    \end{subfigure}
    \hfill
    \begin{subfigure}[t]{0.45\textwidth}
        \centering
        \includegraphics[width=\textwidth]{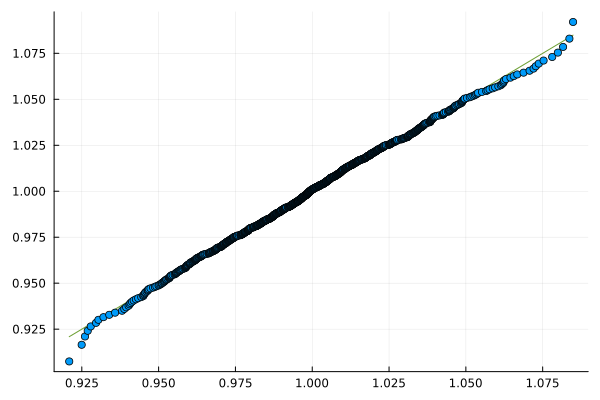}
        \caption*{$c_N = \ln\ln N$}
    \end{subfigure}
    
    \vspace{1em}
    
    \begin{subfigure}[t]{0.45\textwidth}
        \centering
        \includegraphics[width=\textwidth]{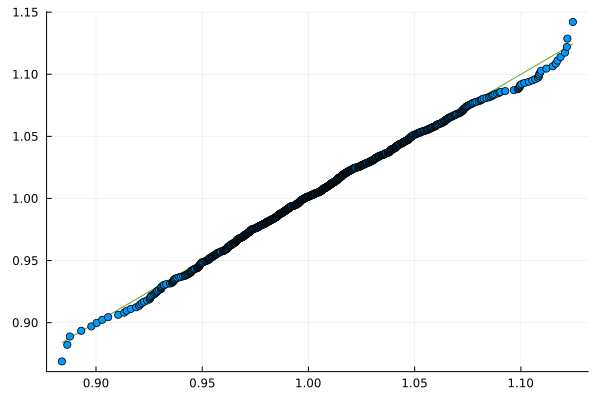}
        \caption*{$c_N = \ln\sqrt{N}$}
    \end{subfigure}
    \hfill
    \begin{subfigure}[t]{0.45\textwidth}
        \centering
        \includegraphics[width=\textwidth]{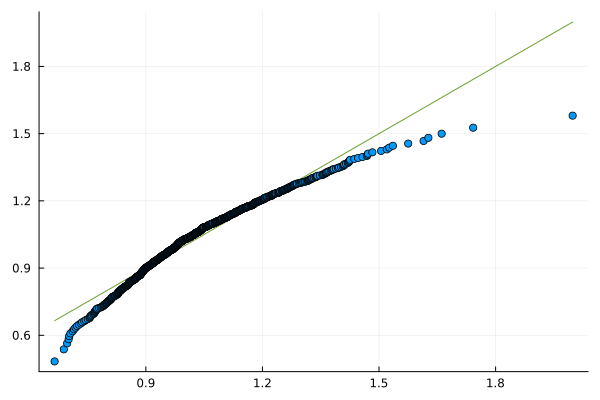}
        \caption*{$c_N = \ln N$}
    \end{subfigure}
    \caption{Q-Q Plots of the Hexad Logit Estimator for $N=30$.
    The plots compare the empirical distribution of $\widehat{\beta}$ from 1000 replications against a Normal distribution with mean and variance given by the Monte Carlo average and variance of $\widehat \beta$.}
    \label{fig:qqplots30}
\end{figure}

We examine $N \in \{20, 25, 30, 40, 50\}$ and the three sparsity regimes defined above.
Each design cell uses $K = 1000$ replications.
Inference is based on a triad-clustered sandwich variance estimator that sums hexad scores over shared triads $(i,j,k)$.
Convergence tolerance is set at $10^{-8}$ with a maximum of 1000 Newton-Raphson iterations.

Table \ref{tbl:sim} reports, for each $(N, c_N)$ design cell, the following: the Monte Carlo mean and standard deviation of $\widehat{\beta}$; the Monte Carlo average of the estimated standard error $\text{se}(\widehat{\beta})$; the root mean squared error relative to $\beta_0 = 1$; the calibration ratio $\overline{\text{se}}/\text{SD}$ comparing the average estimated standard error to the empirical standard deviation; coverage at 90\% and 95\% based on normal approximations; and the rejection probability of $H_0: \beta = 0$ at the 5\% level (power under $\beta_0 = 1$). We also provide Q-Q plots to compare the sample distribution of $\widehat \beta$ to the Normal distribution with mean and variance given by the Monte Carlo average and variance of $\widehat \beta$ (Figures \ref{fig:qqplots30} and \ref{fig:qqplots50}). In these Q-Q plots we also include the results for $c_N = \log N$, which is sparser than the settings introduced already.

\begin{figure}[t!]
    \centering
    \begin{subfigure}[t]{0.45\textwidth}
        \centering
        \includegraphics[width=\textwidth]{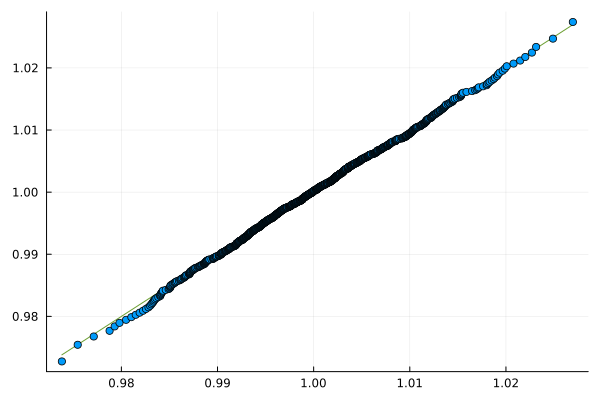}
        \caption*{$c_N = 0$ (Dense)}
    \end{subfigure}
    \hfill
    \begin{subfigure}[t]{0.45\textwidth}
        \centering
        \includegraphics[width=\textwidth]{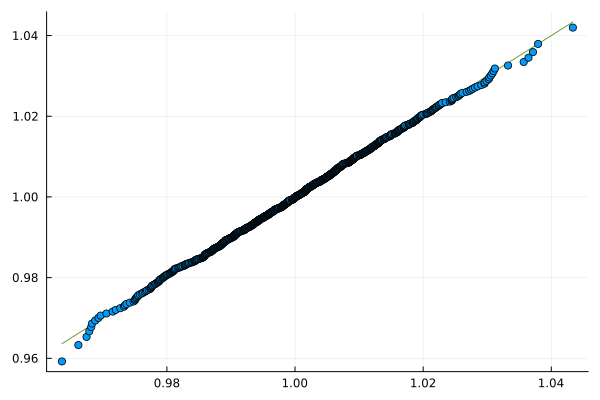}
        \caption*{$c_N = \ln\ln N$}
    \end{subfigure}
    
    \vspace{1em}
    
    \begin{subfigure}[t]{0.45\textwidth}
        \centering
        \includegraphics[width=\textwidth]{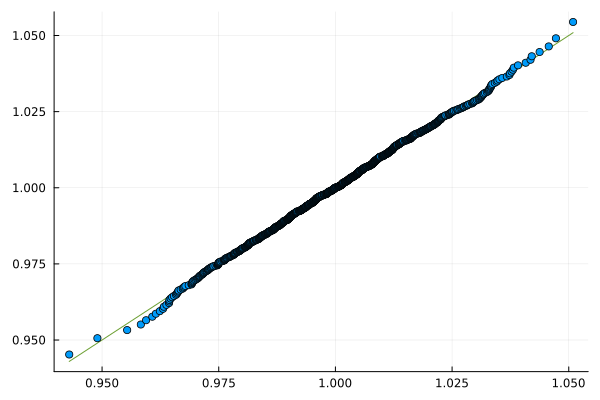}
        \caption*{$c_N = \ln\sqrt{N}$}
    \end{subfigure}
    \hfill
    \begin{subfigure}[t]{0.45\textwidth}
        \centering
        \includegraphics[width=\textwidth]{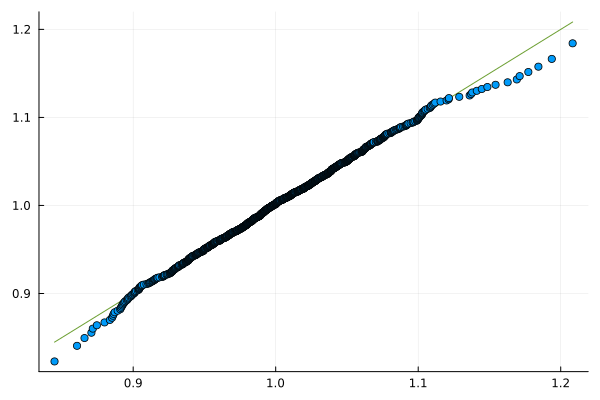}
        \caption*{$c_N = \ln N$}
    \end{subfigure}
    \caption{Q-Q Plots of the Hexad Logit Estimator for $N=50$.
    The plots compare the empirical distribution of $\widehat{\beta}$ from 1000 replications against a Normal distribution with mean and variance given by the Monte Carlo average and variance of $\widehat \beta$.}
    \label{fig:qqplots50}
\end{figure}

Several observations stand out:
First, especially across dense and log-log regimes, sampling distributions are tightly centered at $\beta_0$ with small standard deviations.
As would be expected, dispersion increases as informativeness declines; that is, as $N$ decreases and/or the network becomes sparser. This is most pronounced at smaller sample sizes but attenuates as $N$ reaches 50.
Second, the Q-Q plots confirm that the estimator is generally approximately normally distributed. One exception is the sparsest setting of $c_N=\ln N$, especially at the low sample size of $N=30$. However, we observe that the approximation improves quickly as $N$ increases to 50. This again highlights that information accumulation depends on both the sample size and the level of sparsity. 
Third, estimated standard errors are conservative in finite samples, especially when $N$ is small and the network is more sparse, yielding coverage at or above nominal levels.
However, even under the more sparse settings, $\overline{\text{se}}/\text{SD}$ and the coverage rates improve quickly with $N$.
Fourth, the test of $H_0: \beta = 0$ at the 5\% level has practically unit rejection probability across all reported designs, reflecting strong power against the fixed alternative $\beta_0 = 1$.

All in all, the hexad logit estimator performs well even at moderately large $N$. Although confidence bands are relatively conservative, we note that they become tighter as $N$ increases. When $N$ is in the hundreds, the network will generate a large number of informative wirings (even under more sparse settings) which will lead to tighter confidence bands.

\section{Conclusion}
In this paper we introduced a triadic link formation model with dyad-specific effects and developed a likelihood estimator based on hexad subnetworks---the hexad logit estimator. Our theoretical analysis shows that the estimator is consistent when the unconditional link probability satisfies $\rho_N = O(N^{-\delta})$ with $0 \le \delta < 3/2$. Furthermore, asymptotic normality is obtained with rate $\sqrt{N^3 \rho_N}$ when $0 \le \delta < 1$. 
These requirements on $\rho_N$ go beyond asymptotic non-emptiness because, under dyad-level heterogeneity, accumulation of links need not lead to asymptotic accumulation of informative wirings.
The stated sparsity thresholds provide precise conditions under which regular estimation and inference are attainable. These are novel features specific to the model at hand, which we do not observe in link formation models with node-level heterogeneity. 
We conjecture that models for higher dimensional graphs---which allow for more complicated forms of heterogeneity---will be subject to analogous and more severe sparsity constraints. A more general investigation of this is the subject of ongoing work. 

Our findings also carry implications for applied research. While dyad-level effects offer a richer and often more realistic way to capture heterogeneity, we see that this flexibility imposes a limit on the level of sparsity. This can be an important consideration in granular or disaggregate datasets which are becoming increasingly common in economic analysis. Aggregating data along an appropriate dimension could be a practical first step in such situations.

\newpage

\appendix

\section{Proofs and derivations} \label{app:proofs}

\subsection{Derivations for likelihood, score, and Hessian} \label{app:score_and_Hessian}

We derive the conditional choice probabilities, the per-hexad score, and the Hessian for the composite conditional likelihood.
These expressions are used throughout the paper and proofs.

Recall from \eqref{eq:nol_pcb_overlap} that 
\begin{align*}
    \mathbb E [S_{\sigma,c} | S_\sigma=1, \mathrm{F}_\sigma, X_\sigma]
    =
    p_{\sigma, c}(\beta) 
    &= 
    \frac
    {\exp(W_{\sigma, c}' \beta)}
    {\sum_{c'=1}^2 \exp  \left( W_{\sigma, c'}' \beta \right) }.
\end{align*}
To simplify notation (from time to time) we drop the argument of a function whenever it is evaluated at $\beta_0$ and write $p_{\sigma, c} = p_{\sigma, c}(\beta_0)$, $\overline{W}_\sigma = \overline{W}_{\sigma}(\beta_0)$, and similarly for  $l_\sigma$, $s_{\sigma}$, and $H_{\sigma}$.

We start with the following derivative which will be used in obtaining the score and the Hessian:
\begin{align}
    \frac{\partial p_{\sigma, c}(\beta)}{\partial \beta} 
    &=
    \frac{
      W_{\sigma, c} \exp(W_{\sigma, c}'\beta) 
      \left( \sum_{c'=1}^2 \exp\left(W_{\sigma, c'}' \beta\right) \right) 
    }{
        \left( \sum_{c'=1}^2 \exp\left(W_{\sigma, c'}' \beta\right) \right)^2
    } 
    \notag 
    \\
    & \qquad -
    \frac{
      \exp(W_{\sigma, c}'\beta)
      \left( \sum_{c'=1}^2 W_{\sigma, c'}\exp\left(W_{\sigma, c'}' \beta\right) \right) 
    }{
        \left( \sum_{c'=1}^2 \exp\left(W_{\sigma, c'}' \beta\right) \right)^2
    }
    \notag
    \\
    &= 
    p_{\sigma, c}(\beta) W_{\sigma, c} - p_{\sigma, c}(\beta) \sum_{c' = 1}^2 p_{\sigma, c'}(\beta) W_{\sigma, c'} \notag \\
    &= 
    p_{\sigma, c}(\beta) \left(W_{\sigma, c} - \sum_{c' = 1}^2 p_{\sigma, c'}(\beta) W_{\sigma, c'} \right) \notag \\
    &=
    p_{\sigma, c}(\beta) \left(W_{\sigma, c} - \overline W_\sigma(\beta) \right), \label{eq:dp_db}
\end{align}
which is a standard derivation for conditional logit models (cf. \citealt{CameronTrivedi05}, Section 15.4.2).

Next, recall from equation \eqref{eq:log_likelihood} that hexad $\sigma$'s contribution to the composite log-likelihood is
\begin{align*}
l_\sigma(\beta) 
  &= \sum_{c = 1}^2 S_{\sigma, c} \log p_{\sigma, c}(\beta).
\end{align*}
Then, using \eqref{eq:dp_db} we obtain
\begin{align}
    s_{\sigma}(\beta)
    &= 
    \sum_{c = 1}^2 \frac{S_{\sigma, c}}{p_{\sigma, c}(\beta)} \frac{\partial p_{\sigma, c}(\beta)}{\partial \beta} 
    = 
    \sum_{c = 1}^2 S_{\sigma, c} \left(W_{\sigma, c} - \overline W_\sigma(\beta) \right). 
    \label{eq:score_sigma}
\end{align}
Note that
\begin{align*}
    \mathbb E [s_\sigma | S_\sigma=1, \mathrm{F}_\sigma, X_\sigma]
    &= 
    \sum_{c=1}^2 
    \mathbb E[S_{\sigma, c} | S_\sigma =1, \mathrm{F}_\sigma, X_\sigma]
    \left(W_{\sigma, c} - \overline W_\sigma  \right)
    \\
    &= \sum_{c=1}^2 
    p_{\sigma,c} \left(W_{\sigma, c} - \overline W_\sigma \right) \\
    &= 
    \sum_{c=1}^2 p_{\sigma,c} W_{\sigma, c} - 
    \sum_{c=1}^2 p_{\sigma,c} \overline W_\sigma \\
    &= 0,
\end{align*}
where the first equality follows from \eqref{eq:score_sigma}, the second from the definition of $p_{\sigma,c}$, and the final equality follows because the first term equals $\overline W_\sigma$, $\overline W_\sigma$ in the second term does not depend on $c$, and $\sum_{c=1}^2 p_{\sigma,c}=1$.
By the Law of Iterated Expectations, it also follows that
\begin{align}
    \mathbb E [s_\sigma | \mathrm{F}_\sigma, X_\sigma] = 0.
    \label{eq:zeroscore}
\end{align}
    
To obtain the Hessian contribution, first observe that
\begin{align}
    H_\sigma(\beta)
    =
    \frac{\partial \sum_{c = 1}^2 S_{\sigma, c} \left(W_{\sigma, c} - \overline W_\sigma(\beta) \right)}{\partial \beta'}
    &= 
    - \left( \sum_{c = 1}^2 S_{\sigma, c} \right) \frac{\partial \overline W_\sigma(\beta)}{\partial \beta'}
    = 
    - S_\sigma \frac{\partial \overline W_\sigma(\beta)}{\partial \beta'}
    \label{eq:H}
\end{align}
so the outer summation over $c$ disappears, leaving only the summation implicit in $\overline W_\sigma(\beta)$.
Second,
\begin{align}
    \frac{\partial \overline W_\sigma(\beta)}{\partial \beta'} 
    &= \frac{\partial \sum_{c = 1}^2 p_{\sigma, c}(\beta) W_{\sigma, c}}{\partial \beta'} \notag \\
    &= \sum_{c = 1}^2 p_{\sigma, c}(\beta) W_{\sigma, c} \left(W_{\sigma, c} - \overline W_\sigma(\beta) \right) '
 \notag \\
    &= \sum_{c = 1}^2 p_{\sigma, c}(\beta) \left(W_{\sigma, c} - \overline W_\sigma(\beta) \right) \left(W_{\sigma, c} - \overline W_\sigma(\beta) \right)', \label{eq:H_no_outer}
\end{align}
where we have used that
\begin{align*}
\sum_{c = 1}^2 p_{\sigma, c}(\beta) \overline W_{\sigma}(\beta) \left(W_{\sigma, c} - \overline W_\sigma(\beta) \right)' 
&= 
\sum_{c = 1}^2 p_{\sigma, c}(\beta) \overline{W}_{\sigma}(\beta) W_{\sigma,c}' 
- 
\sum_{c = 1}^2 p_{\sigma, c}(\beta) \overline W_\sigma(\beta) \overline W_{\sigma}'(\beta) \\
&=
\overline W_{\sigma}(\beta) \left(\sum_{c = 1}^2 p_{\sigma, c}(\beta) W_{\sigma, c}'\right)  - \left(\sum_{c = 1}^2 p_{\sigma, c}(\beta)\right) \overline W_\sigma(\beta) \overline W_{\sigma}(\beta)' \\
&=
\overline W_\sigma(\beta) \overline W_{\sigma}(\beta)' - \overline W_\sigma(\beta) \overline W_{\sigma}(\beta)' = 0.
\end{align*}
Combining \eqref{eq:H} and \eqref{eq:H_no_outer} we therefore obtain
\begin{equation}
    H_{\sigma}(\beta) = - S_\sigma \sum_{c = 1}^2 p_{\sigma, c}(\beta)\left(W_{\sigma, c} - \overline W_\sigma(\beta) \right) \left(W_{\sigma, c} - \overline W_\sigma(\beta) \right)'. \label{eq:Hessian_sigma}
\end{equation}

Finally, by inspection of the Hessian in equation \eqref{eq:Hessian_sigma} we can deduce that for any $p=1,\ldots,P$, the third-order derivative component has the form
\begin{align}
    \frac{\partial H_{\sigma}(\beta)}{\partial \beta_p}
    =
    S_{\sigma}
    \sum_{c=1}^2  
    h_{\sigma,c,p}(\beta,W_\sigma)
    \label{eq:third_derivative}
\end{align}
where $\beta_p$ is entry $p$ of the parameter vector $\beta$, whereas $h_{\sigma,c,p}(\cdot,\cdot)$ does not depend on $(S_{\sigma,1},S_{\sigma,2})$. 

\subsection{Useful lemmas}\label{sect:lemma}

In the following, we denote a pair of hexads by $(\sigma,\sigma') \in \Sigma \times \Sigma = \Sigma^2$, 
and let the function $\overline{\mathrm{q}}_i: \Sigma^2 \to \{0,1,2\}$ return the number of nodes that hexads $\sigma$ and $\sigma'$ have in common in part $i$ where $i \in \{1,2,3\}$. To keep the notation concise, we also use the shorthand
\begin{align*}
    \overline {\mathrm q} (\sigma,\sigma') = (\overline {\mathrm q}_1 (\sigma,\sigma'),\overline {\mathrm q}_2 (\sigma,\sigma'),\overline {\mathrm q}_3 (\sigma,\sigma')). 
\end{align*}
Then, $\overline {\mathrm q} (\sigma,\sigma')=(q_1,q_2,q_3)$ denotes that the hexad pair $(\sigma,\sigma')$ has $q_1$ common nodes in part 1, $q_2$ common nodes in part 2, and $q_3$ common nodes in part 3.

Let the number of pairs of hexads with $(q_1, q_2, q_3)$ common nodes in parts 1, 2, and 3 be given by
\begin{align}
    m_{(q_1, q_2, q_3), N} = \sum_{(\sigma,\sigma') \in \Sigma^2} 1 
    \{
        \overline {\mathrm q} (\sigma,\sigma')=(q_1,q_2,q_3)
    \}.
    \label{eq:mbar}
\end{align}
The following lemma establishes the order of $m_{(q_1, q_2, q_3), N}$ as a function of $N$ as the number of nodes diverges.
    
\begin{lemma}\label{lemma:count} For $m_{(q_1, q_2, q_3), N}$ as defined in \eqref{eq:mbar} and for $q_i \in \{0,1,2\}$, we have 
    \begin{align*}
        m_{(q_1, q_2, q_3), N} = O(N^{12-(q_1+q_2+q_3)})
        \qquad
        \text{as }
        N\to\infty.
    \end{align*}
\end{lemma}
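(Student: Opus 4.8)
The plan is to count hexad pairs $(\sigma, \sigma')$ with a prescribed overlap profile $(q_1, q_2, q_3)$ by decomposing the count part-by-part, exploiting the fact that the three parts are combinatorially independent: a hexad consists of an (ordered or unordered, but in any case labelled) pair of nodes from each part, so choosing $\sigma$ and $\sigma'$ amounts to choosing, separately for each part $i \in \{1,2,3\}$, a pair of 2-subsets of $\{1,\ldots,N\}$ that overlap in exactly $q_i$ nodes. Hence $m_{(q_1,q_2,q_3),N}$ factors as a product $\prod_{i=1}^{3} n_{q_i, N}$, where $n_{q,N}$ is the number of ordered pairs of (2-node selections from one part) sharing exactly $q$ nodes, and it suffices to show $n_{q,N} = O(N^{4-q})$ for $q \in \{0,1,2\}$.

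First I would pin down $n_{q,N}$ for each value of $q$. For $q=0$: choose the $2$ nodes of $\sigma$ and then $2$ disjoint nodes for $\sigma'$, giving $\Theta(N^2 \cdot N^2) = \Theta(N^4) = O(N^{4-0})$. For $q=1$: choose the $2$ nodes of $\sigma$, pick which one is shared ($O(1)$ choices), then choose $1$ fresh node for $\sigma'$, giving $\Theta(N^2 \cdot N) = \Theta(N^3) = O(N^{4-1})$. For $q=2$: the two selections coincide as sets, so there are $\Theta(N^2) = O(N^{4-2})$ such pairs. In each case the exponent is exactly $4-q$, and the implicit constants absorb the $O(1)$ bookkeeping factors (which node is shared, ordering conventions, and the correction for the ``nodes distinct within a hexad'' constraint — note $m_N = N^3(N-1)^3$ already encodes that each hexad uses distinct nodes within a part). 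Multiplying across the three parts yields $m_{(q_1,q_2,q_3),N} = O(N^{(4-q_1)+(4-q_2)+(4-q_3)}) = O(N^{12-(q_1+q_2+q_3)})$, which is the claim.

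The only mildly delicate point is the factorisation step itself: one must check that the overlap in part $i$ is determined purely by the part-$i$ coordinates of $\sigma$ and $\sigma'$ and that these coordinates range freely and independently over the three parts, so that the count genuinely splits as a product and no cross-part constraints intervene. This is immediate from the definition of $\Sigma$ as the set of hexads with two nodes drawn from each part and the fact that the three node-sets are disjoint by construction, so ``common node'' can only mean ``common within the same part.'' Once that observation is in place, the rest is the elementary counting above, and since we only need an $O(\cdot)$ bound (not an exact formula) we need not track constants or lower-order terms. I would therefore expect the write-up to be short: state the factorisation, bound $n_{q,N}$ in the three cases, and multiply.
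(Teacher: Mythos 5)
Your proposal is correct and is essentially the paper's argument: the paper fixes $\sigma$ (in $O(N^6)$ ways) and then counts, part by part, the $O(N^{2-q_i})$ choices of non-common nodes for $\sigma'$, which is exactly your per-part factorisation with the count $n_{q,N}=\Theta(N^{2})\cdot\Theta(N^{2-q})$ organised as ``first hexad, then second.'' The only presentational difference is that you make the cross-part product structure explicit up front, while the paper carries it inside the choose-$\sigma'$-given-$\sigma$ step; both yield $O(N^{12-(q_1+q_2+q_3)})$.
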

\begin{proof} We count the number of hexad pairs $(\sigma,\sigma')$ that satisfy the condition $\overline {\mathrm q} (\sigma,\sigma')=(q_1,q_2,q_3)$. 
Fix the first hexad $\sigma = (i_1, j_1, k_1, i_2, j_2, k_2)$. 
Since the total number of hexads is $|\Sigma| = N^3(N-1)^3 = O(N^6)$, there are $O(N^6)$ ways of doing this.
Next, we consider forming the second hexad $\sigma' = (i'_1, j'_1, k'_1, i'_2, j'_2, k'_2)$ given $\sigma$ such that $\overline {\mathrm q} (\sigma,\sigma')=(q_1,q_2,q_3)$.

For each part $i$, we must select the two nodes $\{i'_1, i'_2\}$:
First, choose which $q_i$ nodes from the set $\{i_1, i_2\}$ used by $\sigma$ will also be used by $\sigma'$. There are $\binom{2}{q_i} = O(1)$ ways to choose these common nodes.
Second, choose the remaining $2-q_i$ nodes for $\sigma'$ from the $N-2$ nodes available in part $i$ that were not used by $\sigma$. There are $\binom{N-2}{2-q_i} = O(N^{2-q_i})$ ways to choose these non-common nodes.
Thus, the number of ways to choose $\{i_1',i_2'\}$ in $\sigma'$ with the specified overlap is $O(1) \times O(N^{2-q_i}) = O(N^{2-q_i})$. Once the specific nodes for part $i$ are chosen, there is a fixed number of ways to assign them to the ordered roles $i'_1, i'_2$ in the definition of $\sigma'$.

The preciding discussion holds for all $i\in \{1,2,3\}.$ It follows that for a given hexad $\sigma$ there are $O(N^{2-q_1})O(N^{2-q_2})O(N^{2-q_3})$ ways of finding another hexad $\sigma'$ such that $\overline {\mathrm q} (\sigma,\sigma')=(q_1,q_2,q_3)$. Since the number of ways to choose the initial $\sigma$ is $O(N^6)$ and the number of choices for $\sigma'$ given $\sigma$ is asymptotically independent of the specific nodes of $\sigma$, the total number of pairs $m_{(q_1, q_2, q_3), N}$ has the following asymptotic order:
\begin{align*}
    m_{(q_1, q_2, q_3), N} &= (\text{Number of ways to choose } \sigma) \times (\text{Number of ways to choose } \sigma' | \sigma) \\
    &= O(N^6) \times O(N^{6-(q_1+q_2+q_3)}) \\
    &= O(N^{12-(q_1+q_2+q_3)}).
\end{align*}
This completes the proof.
\end{proof}

\begin{lemma}\label{lemma:main2}
Let $\ell_\sigma = \ell(X_\sigma, \mathrm{F}_\sigma, \varepsilon_\sigma)$ be a random vector that is bounded for all $\sigma$. Suppose the following properties hold for each $\sigma\in \Sigma$:
(i) $\mathbb{E} [\ell_\sigma] = 0 $; 
(ii) $\ell_\sigma \indep \ell_{\sigma'} | \mathbf{X}, \mathbf{F}$ for all $\sigma'$ such that $\overline{\mathrm{q}}_i(\sigma,\sigma') = 0$ for at least one $i$; 
(iii) $\ell_\sigma \indep \ell_{\sigma'}$ for all $\sigma'$ with $\overline{\mathrm{q}}_1(\sigma,\sigma')=\overline{\mathrm{q}}_2(\sigma,\sigma')=\overline{\mathrm{q}}_3(\sigma,\sigma')=0$; 
(iv) for $S_\sigma$ as defined in \eqref{eq:Stildedefn}, $\ell_\sigma=0$ if $S_{\sigma}=0$. 
Let $\rho_N$ be as defined in equation \eqref{eq:rhodefn}. 

Then, 
\begin{align*}
    \mathbb{E}\left[
    \left(
    \sum_{\sigma \in \Sigma}
    \ell_{\sigma}
    \right)^2
    \right]
    =
    \sum_{q_1=0}^{2}\sum_{q_2=0}^{2}\sum_{q_3=0}^{2}
    \mathrm{C}_{(q_1,q_2,q_3),N},
\end{align*}
where 
\begin{enumerate}
    \item $\mathrm{C}_{(0,0,0),N} = 0 $;
    \item For $(q_1,q_2,q_3)$ with at least one 0 and at least one non-zero $q_i$, $\mathrm{C}_{(q_1,q_2,q_3),N}=O(N^{12-(q_1+q_2+q_3)} \rho_N^8)$;
    \item $\mathrm{C}_{(1,1,1),N} = O(N^9\rho_N^7)$;
    \item 
    $\mathrm{C}_{(2,1,1),N} = \mathrm{C}_{(1,2,1),N} =
    \mathrm{C}_{(1,1,2),N} = O(N^8\rho_N^7)$;
    \item $\mathrm{C}_{(2,2,1),N} = \mathrm{C}_{(1,2,2),N} =
    \mathrm{C}_{(2,1,2),N} = O(N^7\rho_N^6)$;
    \item $\mathrm{C}_{(2,2,2),N} = O(N^6\rho_N^4)$.
\end{enumerate}
If, in addition, $\mathbb{E}[\ell_\sigma | \mathbf{X}, \mathbf{F}] = 0$, then $\mathrm{C}_{(q_1,q_2,q_3),N} = 0$ for all cases where $q_i = 0$ for any $i$. 

We also have
\begin{align*}
    \mathrm{C}_{(1,1,1),N}
    =
    2^6(N(N-1)(N-2))^3
    \mathrm{Cov}(\ell_{((i,j,k),(a,b,c))} , \ell_{((i,j,k),(d,e,f))} ),
\end{align*}
for arbitrary triads $(i,j,k)$, $(a,b,c)$ and $(d,e,f)$ that share no nodes with each other.
\end{lemma}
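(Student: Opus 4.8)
\textbf{Proof proposal for Lemma \ref{lemma:main2}.}

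The plan is to prove the claimed variance decomposition by first expanding the square and sorting hexad pairs according to their overlap pattern $\overline{\mathrm q}(\sigma,\sigma')=(q_1,q_2,q_3)$, then bounding each resulting group of terms. Write
\begin{align*}
    \mathbb E\left[\left(\sum_{\sigma\in\Sigma}\ell_\sigma\right)^2\right]
    =
    \sum_{(\sigma,\sigma')\in\Sigma^2}\mathbb E[\ell_\sigma\ell_{\sigma'}]
    =
    \sum_{q_1=0}^2\sum_{q_2=0}^2\sum_{q_3=0}^2
    \underset{[\sigma,\sigma']_{(q_1,q_2,q_3)}}{\sum\sum}\mathbb E[\ell_\sigma\ell_{\sigma'}],
\end{align*}
and define $\mathrm C_{(q_1,q_2,q_3),N}$ as the inner double sum. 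For the $(0,0,0)$ case, property (iii) gives independence and property (i) gives mean zero, so each term vanishes, yielding part 1. For any pattern with at least one $q_i=0$, property (ii) gives conditional independence given $(\mathbf X,\mathbf F)$; combined with $\mathbb E[\ell_\sigma\mid\mathbf X,\mathbf F]$ possibly being nonzero, the Law of Iterated Expectations reduces $\mathbb E[\ell_\sigma\ell_{\sigma'}]$ to $\mathbb E[\mathbb E[\ell_\sigma\mid\mathbf X,\mathbf F]\mathbb E[\ell_{\sigma'}\mid\mathbf X,\mathbf F]]$; since $\ell_\sigma$ is bounded and supported on $\{S_\sigma=1\}$ by property (iv), each such conditional expectation is bounded in absolute value by a constant times $\mathrm P(S_\sigma=1\mid\mathbf X,\mathbf F)=O(\rho_N^4)$ uniformly (recall $p_N=O(\rho_N^4)$), so each term is $O(\rho_N^8)$; multiplying by the count $m_{(q_1,q_2,q_3),N}=O(N^{12-(q_1+q_2+q_3)})$ from Lemma \ref{lemma:count} gives part 2. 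This also immediately gives the final ``if in addition'' claim: when $\mathbb E[\ell_\sigma\mid\mathbf X,\mathbf F]=0$, these terms are exactly zero.

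For the patterns with all $q_i\geq 1$ (parts 3--6), conditional independence given $(\mathbf X,\mathbf F)$ no longer holds in general because the two hexads may share one or more common triads. Here the bound on a single term $|\mathbb E[\ell_\sigma\ell_{\sigma'}]|$ must instead come from property (iv): the product $\ell_\sigma\ell_{\sigma'}$ is supported on the event $\{S_\sigma=1\}\cap\{S_{\sigma'}=1\}$, and the order of $\mathrm P(S_\sigma=1,S_{\sigma'}=1)$ depends on how many hyperedges are \emph{forced} to be shared between the two informative wirings, which in turn depends on how many triads the two hexads have in common. The key combinatorial fact is that a hexad pair with overlap $(q_1,q_2,q_3)$ shares $q_1 q_2 q_3$ common triads; when $q_1 q_2 q_3=1$ (the $(1,1,1)$ case) the single shared triad imposes one shared link, so the two informative wirings together involve $4+4-1=7$ distinct hyperedges, giving $\mathrm P(S_\sigma=1,S_{\sigma'}=1)=O(\rho_N^7)$; when two of the $q_i$ equal $2$ and one equals $1$, there are two shared triads, again forcing the count down to $O(\rho_N^6)$ after accounting for the wiring structure; and when $(q_1,q_2,q_3)=(2,2,2)$ the two hexads coincide (eight shared triads), so $\mathbb E[\ell_\sigma\ell_{\sigma'}]=\mathbb E[\ell_\sigma^2]=O(\rho_N^4)$. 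Multiplying each per-term bound by the corresponding count $O(N^{12-(q_1+q_2+q_3)})$ from Lemma \ref{lemma:count} yields parts 3--6: $O(N^9\rho_N^7)$, $O(N^8\rho_N^7)$, $O(N^7\rho_N^6)$, $O(N^6\rho_N^4)$ respectively.

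Finally, for the exact expression for $\mathrm C_{(1,1,1),N}$: here every contributing pair consists of two hexads sharing exactly one node in each part, hence exactly one common triad. By property (iv) the summand is $\mathbb E[\ell_\sigma\ell_{\sigma'}]=\mathrm{Cov}(\ell_\sigma,\ell_{\sigma'})$ (since $\mathbb E\ell_\sigma=0$), and by the exchangeability implied by Assumption \ref{a:random_sampling_nodes} this covariance depends only on the overlap \emph{type}, not on the identities of the nodes; it equals $\mathrm{Cov}(\ell_{((i,j,k),(a,b,c))},\ell_{((i,j,k),(d,e,f))})$ for any three mutually node-disjoint triads. It then remains to count the number of ordered pairs $(\sigma,\sigma')$ with $\overline{\mathrm q}(\sigma,\sigma')=(1,1,1)$ exactly: in each part one chooses the shared node ($N$ ways after fixing $\sigma$... more carefully, one builds the pair from scratch), the node of $\sigma$ not shared, and the node of $\sigma'$ not shared, giving $N(N-1)(N-2)$ per part, and then $2^3$ for the placement of the shared node into the first-or-second slot within the ordered hexad tuples—yielding $2^6(N(N-1)(N-2))^3$ after a careful bookkeeping of the $2^3$ from $\sigma$'s slot assignment and $2^3$ from $\sigma'$'s. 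I expect the main obstacle to be precisely this last combinatorial bookkeeping—getting the constant $2^6$ exactly right rather than merely up to $O(1)$—together with verifying in parts 3--6 that the wiring structure of the two informative wirings genuinely forces the claimed number of shared hyperedges when two hexads share a triad (i.e. that a shared triad cannot be ``used'' by wiring \#1 of $\sigma$ and wiring \#2 of $\sigma'$ in a way that avoids the constraint); this requires inspecting the explicit hyperedge lists defining $S_{\sigma,1}$ and $S_{\sigma,2}$.
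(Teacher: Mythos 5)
Your overall architecture matches the paper's: expand the square, sort pairs by the overlap pattern $(q_1,q_2,q_3)$, bound each term by the probability that both hexads are informative, multiply by the counts from Lemma \ref{lemma:count}, and finish with an exact count for the $(1,1,1)$ configuration (your $2^6(N(N-1)(N-2))^3$ bookkeeping agrees with the paper's). However, there is a genuine gap in the cases with all $q_i\geq 1$, and it sits exactly where you yourself flag ``the main obstacle'' and then stop. Your enumeration jumps from $(1,1,1)$ to ``two of the $q_i$ equal 2'', so part 4 of the lemma --- the $(2,1,1)$-type patterns --- is never argued. This is the one case where the naive heuristic ``each shared triad can save one hyperedge'' fails: a $(2,1,1)$ pair shares two triads, so that heuristic would suggest $8-2=6$ distinct hyperedges and the rate $O(N^8\rho_N^6)$, which is \emph{not} what the lemma asserts. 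The paper's proof closes this by a structural inspection of the informative wirings: any two hyperedges within one informative wiring share exactly one node, so for two informative hyperedges to lie in both hexads the overlap must have at least two parts with $q_i=2$; hence under $(2,1,1)$ at most one informative hyperedge can be common, the minimum number of distinct hyperedges is still $7$, and $\mathrm{C}_{(2,1,1),N}=O(N^8\rho_N^7)$. Without this argument (and the companion observation that three shared informative hyperedges would force $(2,2,2)$, so $(2,2,1)$ gives at least six distinct hyperedges), parts 4 and 5 are unproven; your statement that ``two of the $q_i$ equal 2 and one equals 1'' gives ``two shared triads'' is also miscounted (it gives four), which suggests the combinatorics was not actually worked through.

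Two smaller points. First, for the patterns with some $q_i=0$ you bound $\mathbb{E}[\ell_\sigma\mid\mathbf X,\mathbf F]$ by a constant times $\mathrm{P}(S_\sigma=1\mid\mathbf X,\mathbf F)$ and then claim this conditional probability is $O(\rho_N^4)$ \emph{uniformly}; that uniform claim is not justified --- only the unconditional $p_N=\mathrm{P}(S_\sigma=1)$ is $O(\rho_N^4)$, while the conditional probability can be of order one for favourable realisations of the fixed effects. The paper avoids this by working directly with the unconditional joint probability: boundedness and $\ell_\sigma=\ell_\sigma S_\sigma$ give $\mathbb{E}[\ell_\sigma\ell_{\sigma'}]=O(\mathrm{P}(S_\sigma S_{\sigma'}=1))$, and with no hyperedge in common this requires eight distinct hyperedges, i.e. $O(\rho_N^8)$; your conditional-independence route is only needed for the ``if, in addition'' claim, where it is used exactly as in the paper. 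Second, your reading of the $(1,1,1)$ case (``the shared triad imposes one shared link'') should be ``at most one hyperedge can be shared, so at least seven distinct hyperedges are required''; the conclusion $O(\rho_N^7)$ is the same, but the inequality runs in the direction needed for an upper bound.
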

\begin{proof}
The proof proceeds in two parts. Part I establishes the decomposition and asymptotic rates, while Part II addresses the special case under the additional assumption $\mathbb{E}[\ell_\sigma | \mathbf{X}, \mathbf{F}] = 0$ and derives the specific coefficient for $\mathrm{C}_{(1,1,1),N}$.

\textbf{Part I.} 
We decompose the variance $\mathbb{E}[(\sum_{\sigma \in \Sigma} \ell_{\sigma})^2]$ into components according to the pattern $(q_1,q_2,q_3)$ of common nodes between pairs of hexads $(\sigma, \sigma')$:
\begin{align}
    \mathbb{E}\left[\left(\sum_{\sigma \in \Sigma} \ell_{\sigma}\right)^2\right] 
    &= 
    \sum_{\sigma \in \Sigma} \sum_{\sigma' \in \Sigma} \mathbb{E}[\ell_{\sigma} \ell_{\sigma'}] \notag \\
    &=
    \sum_{q_1=0}^{2}\sum_{q_2=0}^{2}\sum_{q_3=0}^{2} \mathrm{C}_{(q_1,q_2,q_3),N},
    \label{eq:varmdecomp_corrected}
\end{align}
where
\begin{align}
    \mathrm{C}_{(q_1,q_2,q_3),N}
    = 
    \sum_{\sigma \in \Sigma} \sum_{\sigma' \in \Sigma} \mathbb{E}[\ell_{\sigma} \ell_{\sigma'}] \cdot 1\{\overline{\mathrm{q}}_1(\sigma,\sigma')=q_1, \overline{\mathrm{q}}_2(\sigma,\sigma')=q_2, \overline{\mathrm{q}}_3(\sigma,\sigma')=q_3\}.
    \label{eq:defCqN_corrected} 
\end{align}
Importantly, 
recalling that by assumption $\ell_\sigma$ is bounded and $\ell_\sigma = 0$ if $S_\sigma = 0$, we obtain
\begin{align}
    \mathbb{E}[\ell_{\sigma} \ell_{\sigma'}]
    &=
    \mathbb{E}[\mathbb{E}[\ell_{\sigma} \ell_{\sigma'} | S_{\sigma} S_{\sigma'}]] \notag \\
    &=
    \mathbb{E}[\ell_{\sigma} \ell_{\sigma'} | S_{\sigma} S_{\sigma'} = 1] \cdot \mathrm{P}(S_{\sigma} S_{\sigma'} = 1) + \mathbb{E}[\ell_{\sigma} \ell_{\sigma'} | S_{\sigma} S_{\sigma'} = 0] \cdot \mathrm{P}(S_{\sigma} S_{\sigma'} = 0) \notag \\
    &=
    \mathbb{E}[\ell_{\sigma} \ell_{\sigma'} | S_{\sigma} S_{\sigma'} = 1] \cdot \mathrm{P}(S_{\sigma} S_{\sigma'} = 1) \quad (\text{since } \ell_\sigma \ell_{\sigma'}=0 \text{ if } S_\sigma S_{\sigma'}=0) \notag \\
    &=
    O(\mathrm{P}(S_{\sigma} S_{\sigma'} = 1)).
    \label{eq:pdelta1b_corrected} 
\end{align}
This shows that the magnitude of $\mathbb{E}[\ell_{\sigma} \ell_{\sigma'}]$ (and thereby that of $\mathrm{C}_{(q_1,q_2,q_3),N}$) depends on the magnitude of $\mathrm{P}(S_{\sigma} S_{\sigma'} = 1)$, the probability of both $\sigma$ and $\sigma'$ being informative when they exhibit the overlap pattern $(q_1,q_2,q_3)$. This, in turn, depends on the minimum number of distinct hyperedges one needs under this scenario; defining this number as $\Delta$ yields $O(\mathrm{P}(S_{\sigma} S_{\sigma'} = 1)) = O(\rho_N^\Delta)$.\footnote{
We avoid using the more precise notation $\mathrm P(S_\sigma S_{\sigma'}=1; (q_1,q_2,q_3))$ and $\Delta(q_1,q_2,q_3)$ as it would make the presentation cumbersome.}

We next analyse the required number of hyperedges $\Delta$ for each possible overlap pattern $(q_1,q_2,q_3)$ between two hexads $\sigma$ and $\sigma'$:

First, consider the case $q_1=q_2=q_3=0$. In this scenario, the hexads $\sigma$ and $\sigma'$ share no common nodes. It follows from condition (iii) and  $\mathbb{E}[\ell_\sigma]=0$ that $\mathbb{E}[\ell_\sigma \ell_{\sigma'}] = \mathbb{E}[\ell_\sigma]\mathbb{E}[\ell_{\sigma'}] = 0$.

Next, we focus on the scenario where $(q_1,q_2,q_3)$ includes at least one 0 and at least one non-zero $q_i$. In that case, the hexads $\sigma$ and $\sigma'$ cannot have any common hyperedges. Recalling that informative hexads require the presence of four specific hyperedges, we conclude that one requires 4+4=8 hyperedges in order for both $\sigma$ and $\sigma'$ to be informative. Consequently, for this scenario we have $\Delta=8$ leading to $\mathrm{P}(S_{\sigma} S_{\sigma'} = 1) = O(\rho_N^8)$.

The next setting $(q_1,q_2,q_3) = (1,1,1)$ is the first one under which $\sigma$ and $\sigma'$ can share (at most) one hyperedge connecting the common nodes. Once there is a common hyperedge between the two hexads, we will need a minimum of three hyperedges per hexad in order for both hexads to be informative (as informativeness requires four specific hyperedges per hexad). In other words, a minimum of $1+3+3=7$ distinct hyperedges is required for both hexads to be informative. Consequently, $\Delta=7$ and $\mathrm{P}(S_{\sigma} S_{\sigma'} = 1) = O(\rho_N^7)$ in this scenario. 

Consider now the case where $(q_1,q_2,q_3)$ consist of one 2 and two 1s. This makes it possible for the hexads $\sigma$ and $\sigma'$ to share at most two hyperedges. However, due to the structure of the informative wirings, $\sigma$ and $\sigma'$ can share \textit{at most one informative hyperedge} in this setting. To see why, consider, without loss of generality, the example where the first hexad is given by
\begin{align*}
    \sigma=(1,1,1,2,2,2).
\end{align*}
Then, this hexad admits an informative wiring if it contains either the hyperedges $(1,1,1),(1,2,2),(2,1,2),(2,2,1)$ or the hyperedges $(1,1,2),(2,1,1),(1,2,1),(2,2,2)$. The question then is: in the given setting can we think of another hexad $\sigma'$ which contains two hyperedges that appear either in the first wiring or the second? The answer is no: Notice that any pair of hyperedges within either informative wiring shares only one node in common. For such hyperedges to belong to both $\sigma$ and $\sigma'$ we would require, e.g., $q_1=2$, $q_2=2$ and $q_3=1$. Consequently, this setting yields the same $\Delta$ as in the previous case and we again have $\mathrm{P}(S_{\sigma} S_{\sigma'} = 1) = O(\rho_N^7)$.

The next setting where $(q_1,q_2,q_3)$ consist of two 2s and one 1 provides the flexibility to have pairs of hexads $\sigma$ and $\sigma'$ that contain multiple informative hyperedges. Continuing on the example mentioned in the previous part, we could for example have a $\sigma'$ which contains, e.g., the hyperedges $(1,1,1)$ and $(1,2,2)$ of the first informative wiring, or $(1,1,2)$ and $(2,2,2)$ of the second wiring (other selections are possible under this scenario). Therefore, it is possible to have two common informative hyperedges between $\sigma$ and $\sigma'$. Informativeness of both hexads then requires two more (non-common) hyperedges from each of the two hexads. This implies that $\Delta=2+2+2$ and $\mathrm{P}(S_{\sigma} S_{\sigma'} = 1) = O(\rho_N^6)$ in this case.

Finally, $(q_1,q_2,q_3)=(2,2,2)$ simply means that the hexads $\sigma$ and $\sigma'$ are identical and therefore there is only one unique hexad. Since a single hexad requires four specific hyperedges to be informative, in this scenario we have $\Delta=4$ yielding the rate $\mathrm P (S_{\sigma} S_{\sigma'} = 1) = O(\rho_N^4)$.

Finally, combining the above derived probability rates $O(\rho_N^\Delta)$ with the appropriate count coefficient $m_{(q_1,q_2,q_3),N} = O(N^{12-(q_1+q_2+q_3)})$ from Lemma \ref{lemma:count} proves the claim of the first part of the Lemma.

\textbf{Part II.}
Under the additional assumption that $\mathbb{E}[\ell_{\sigma} | \mathbf{X}, \mathbf{F}] = 0$, consider any pair $(\sigma, \sigma')$ where $q_i = 0$ for at least one $i$. In that case, by assumption (ii) of the Lemma, $\ell_\sigma$ and $\ell_{\sigma'}$ are conditionally independent given $\mathbf{X}, \mathbf{F}$. It follows that
\begin{align*}
    \mathbb{E}[\ell_{\sigma}\ell_{\sigma'}]
    &=
    \mathbb{E}[\mathbb{E}[\ell_{\sigma}\ell_{\sigma'} | \mathbf{X}, \mathbf{F}]] \notag \\
    &=
    \mathbb{E}[\mathbb{E}[\ell_{\sigma} | \mathbf{X}, \mathbf{F}] \cdot \mathbb{E}[\ell_{\sigma'} | \mathbf{X}, \mathbf{F}]] \quad (\text{by conditional independence}) \notag \\
    &=
    \mathbb{E}[0 \cdot 0] = 0.
\end{align*}
This implies that $\mathrm{C}_{(q_1,q_2,q_3),N} = 0$ whenever $q_i=0$ for any $i$.

Next, we derive the explicit formula for $\mathrm{C}_{(1,1,1),N}$. This term sums $\mathbb{E}[\ell_{\sigma}\ell_{\sigma'}]$ over pairs $(\sigma, \sigma')$ that share exactly one node in each part. Let such a pair be represented by $\sigma = (i,j,k,a,b,c)$ and $\sigma' = (i,j,k,d,e,f)$, where $(i,j,k)$ are the common nodes, and $a,d \in \{1 \dots N\}\setminus\{i\}$; $b,e \in \{1 \ldots N\}\setminus\{j\}$; $c,f \in \{1 \ldots N\}\setminus\{k\}$; and $a \neq d$, $b \neq e$, $c \neq f$.

Next we establish that there are $2^6[N(N-1)(N-2)]^3 $ ways in which such a set of nodes can be obtained. First, there are $N^3$ choices for the set of common nodes $(i,j,k)$. For each part, this leaves $(N-1)$ ways in which the first non-common node can be chosen, and then $(N-2)$ ways in which the second non-common node can be chosen. Since there are three parts, this yields $(N-1)^3(N-2)^3$ ways in which the non-common nodes can be chosen. Combining this with $N^3$ yields $[N(N-1)(N-2)]^3$ different node selections. Next, although all the nodes have been determined, their ordering within each part is left undetermined. The nodes for each part can appear in one of two orderings; e.g. if the nodes are $i$ and $a$, then the first node can be either $i$ or $a$, yielding two choices. Since there are six parts in total (three for each hexad), we have $2^6$ orderings for each selection of nodes. Consequently, the number of cases with the overlap pattern $(1,1,1)$ is given by $2^6[N(N-1)(N-2)]^3$.

We finally have
\begin{align*}
    \mathrm{C}_{(1,1,1),N}
    &= 
    \sum_{\sigma \in \Sigma} \sum_{\sigma' \in \Sigma} \mathbb{E}[\ell_{\sigma} \ell_{\sigma'}] \cdot 1\{\overline{\mathrm{q}}_1(\sigma,\sigma')=1, \overline{\mathrm{q}}_2(\sigma,\sigma')=1, \overline{\mathrm{q}}_3(\sigma,\sigma')=1\}
    \\
    &=
    2^6 [N(N-1)(N-2)]^3 \times \mathbb{E}[\ell_{((i,j,k),(a,b,c))} \ell_{((i,j,k),(d,e,f))}] \\
    &=
    2^6 [N(N-1)(N-2)]^3 \times \mathrm{Cov}(\ell_{((i,j,k),(a,b,c))}, \ell_{((i,j,k),(d,e,f))}),
\end{align*}
where the last equality uses $\mathbb{E}[\ell_\sigma]=0$. The indices $i,j,k,a,b,c,d,e,f$ represent arbitrary distinct nodes chosen according to the specified overlap structure. This completes the proof.
\end{proof}

\subsection{Sufficiency} 
\label{app:sufficiency_proof}

\begin{proof}[Proof of Theorem \ref{thm:sufficiency_overlapping}]
Let $\sigma = (i_1,j_1,k_1,i_2,j_2,k_2)$ be the generic hexad.
This proof demonstrates that, when conditioning on $S_\sigma = 1$, the fixed effects cancel out, leaving a conditional probability that depends only on the parameter $\beta_0$ and not on $\mathrm{F}_\sigma$.

To that end, define
\begin{align*}
    \kappa _1 
    &= 
    \exp (A_{i_1 j_1}+A_{i_1 j_2}+A_{i_2 j_1}+A_{i_2 j_2}
        + B_{j_1 k_1}+B_{j_1 k_2}+B_{j_2 k_1}+B_{j_2 k_2}
        \\
        & \qquad + C_{i_1 k_1}+C_{i_1 k_2}+C_{i_2 k_1}+C_{i_2 k_2}),
\end{align*}
and
\begin{align*}
    \kappa _2
    &= 
    \prod_{(i,j,k) \in \mathbb I_\sigma} 
    \frac{1}{1 + \exp(A_{ij} + B_{jk} + C_{ik} + X_{ijk}^\prime \beta_0)},
\end{align*}
where $\mathbb{I}_\sigma$ is the set of all triads that can be formed from the nodes constituting the hexad $\sigma$. 
The term $\kappa_2$ represents the product of the denominators from the logistic probability expressions for all possible triads that can be formed within hexad $\sigma$.

By the definition of $S_{\sigma,1}$ and $S_{\sigma,2}$, we have
\begin{align*}
    \mathrm{P} \left( S_{\sigma, 1} = 1 | X_\sigma, \mathrm F_{\sigma}\right) 
    &=
    \kappa _1 \kappa _2 
    \exp\left( \left(X_{i_1 j_1 k_1} + X_{i_1 j_2 k_2} + 
        X_{i_2 j_1 k_2} + X_{i_2 j_2 k_1}\right)^\prime \beta_0 \right)
    \\
    \mathrm{P} \left(  S_{\sigma, 2} = 1 | X_\sigma, \mathrm F_{\sigma}\right) 
    &=
     \kappa _1  \kappa _2 
    \exp\left( \left(X_{i_1 j_1 k_2} + X_{i_1 j_2 k_1} + 
        X_{i_2 j_1 k_1} + X_{i_2 j_2 k_2}\right)^\prime \beta_0 \right).
\end{align*}
Then, noticing that
\begin{align*}
    \frac{\mathrm{P} \left(  S_{\sigma, 2} = 1 | X_\sigma, \mathrm F_{\sigma}\right) }{\mathrm{P} \left(  S_{\sigma, 1} = 1 | X_\sigma, \mathrm F_{\sigma}\right) }
    =
    \exp\left(-  W_{\sigma}^\prime \beta_0\right),    
\end{align*}
we obtain
\begin{align*}
    \mathrm{P} \left(  S_{\sigma, 1} = 1 |  S_{\sigma,1} +  S_{\sigma,2} = 1, X_\sigma, \mathrm F_{\sigma}\right) &=
    \frac{
            \mathrm{P} \left(  S_{\sigma, 1} = 1 | X_\sigma, \mathrm F_{\sigma}\right)
        }
        {
            \mathrm{P} \left(  S_{\sigma, 1} = 1 | X_\sigma, \mathrm F_{\sigma}\right) 
            + 
            \mathrm{P} \left(  S_{\sigma, 2} = 1 | X_\sigma, \mathrm F_{\sigma}\right)
        } \\
    &= \frac{1}{1 + \exp\left(-  W_{\sigma}^\prime \beta_0\right)},
\end{align*}
and the result follows.
\end{proof}

\subsection{Consistency} \label{sec:consistency_proof}

\begin{proof}[Proof of Theorem \ref{thm:consistency}] 
Recalling that $m_N=|\Sigma|=N^3(N-1)^3$ and $p_N=\mathrm{P}(S_\sigma = 1) = O(\rho_N^4)$, we define the scaled sample and population objective functions
        \begin{align*}
        \widehat Q_N(\beta) 
        = 
        \frac{1}{m_N {p}_N} \sum_{\sigma \in \Sigma} l_\sigma(\beta)
        \qquad
        \text{and}
        \qquad
        Q_{N,0}(\beta)
        =
        \frac{1}{m_N p_N} \sum_{\sigma \in \Sigma} 
        \mathbb{E} [l_\sigma(\beta) ].
    \end{align*}
Notice that since 
    $\sum_{\sigma \in \Sigma} \mathbb{E} [l_\sigma(\beta) ] 
    = 
    |\Sigma| \times \mathbb{E} [l_\sigma(\beta) | S_\sigma=1] \times \mathrm P(S_\sigma =1)$,
    the scaling by $m_N p_N$ is the appropriate one, yielding $Q_{N,0}(\beta)=O(1)$. This is crucial for establishing convergence.
    
By equation \eqref{eq:Hessian_sigma}, the sample Hessian for $\widehat Q _N (\beta)$ is given by 
    \begin{align*}
        \frac{1}{m_N p_N} \sum_{\sigma \in \Sigma}
        H_\sigma(\beta)
        =
        -\frac{1}{m_N p_N} \sum_{\sigma \in \Sigma} S_\sigma \sum_{c = 1}^2 p_{\sigma, c}(\beta)\left(W_{\sigma, c} - \overline W_\sigma(\beta) \right) \left(W_{\sigma, c} - \overline W_\sigma(\beta) \right)'.
    \end{align*}
This matrix is negative semi-definite: this is because it is a negative sum of outer products (which are positive semi-definite matrices). This confirms that the sample objective function $\widehat{Q}_N(\beta)$ is concave in $\beta$. By Assumption \ref{a:identification}, the limiting population objective function $Q_0(\beta) = \lim_{N \to \infty} Q_{N,0}(\beta)$ has a negative definite Hessian at $\beta_0$, and is therefore strictly concave in a neighborhood of $\beta_0$. Therefore, $\beta_0$ is the unique maximiser of $Q_0(\beta)$. Establishing pointwise convergence in probability of $\widehat Q_N(\beta)$ to $Q_0(\beta)$ is therefore sufficient to prove consistency. This is what we turn to next.

First, we show that $\widehat{Q}_N(\beta) - Q_{N,0}(\beta) = o_p(1)$. Let
    \begin{align*}
        \widebar l_\sigma(\beta) = l_\sigma(\beta) - \mathbb{E}[l_\sigma(\beta)].
    \end{align*}
By Chebyshev's inequality, for any $\epsilon > 0$ we have
    \begin{align}
        \mathrm{P}\left( \left| \frac{\sum_{\sigma \in \Sigma} 
        \widebar{l}_\sigma(\beta)}
                 {m_N p_N} 
        \right| 
        > \epsilon \right) 
        \leq
        \frac{1}{\epsilon^2}
        \mathbb{E} 
        \left[
        \left(
            \frac{\sum_{\sigma \in \Sigma} \widebar l_\sigma(\beta)}
            {m_N p_N}
        \right)^2
        \right],
        \label{eq:chebyshev}
    \end{align}
where existence of the second-order moment in \eqref{eq:chebyshev} follows from Assumption \ref{a:compact_and_interior}. The random variable $\widebar l_\sigma(\beta)$ satisfies conditions (i), (ii), (iii), and (iv) of Lemma \ref{lemma:main2}.
By an application of Lemma \ref{lemma:main2}, and recalling that $m_N p_N=O(N^6\rho_N^4)$, we then obtain
\begin{align}
        \mathbb{E} 
        \left[
        \left(
            \frac{\sum_{\sigma \in \Sigma} \widebar l_\sigma(\beta)}
            {m_N p_N}
        \right)^2
        \right]
        =&
        O\left(\frac{N^{11}\rho_N^8}{N^{12}\rho_N^8}\right)
        +
        O\left(\frac{N^9\rho_N^7}{N^{12}\rho_N^8}\right)
        \notag
        \\
        &+
        O\left(\frac{N^8\rho_N^7}{N^{12}\rho_N^8}\right)
        +
        O\left(\frac{N^7\rho_N^6}{N^{12}\rho_N^8}\right)
        +
        O\left(\frac{N^6\rho_N^4}{N^{12}\rho_N^8}\right)
        \notag
        \\
        =&
        O\left(\frac{1}{N}\right)
        +
        O\left(\frac{1}{N^3\rho_N}\right)
        \notag
        \\
        &+
        O\left(\frac{1}{N^4\rho_N}\right)
        +
        O\left(\frac{1}{N^5\rho_N^2}\right)
        +
        O\left(\frac{1}{N^6\rho_N^4}\right).
        \label{eq:disc1}
    \end{align}
Under the maintained assumption that $\rho_N = O(1/N^\delta)$ where $0 \leq \delta < 3/2$, all the terms in the above display converge to 0 as $N\to\infty$. Consequently,
    \begin{align*}
    \lim_{N\to\infty}
        \mathrm{P}\left(\left|\frac{\sum_{\sigma \in \Sigma} \widetilde l_\sigma(\beta)}{m_N p_N}\right| > \epsilon\right) = 0,
    \end{align*}
which establishes that $\widehat{Q}_N(\beta) - Q_{N,0}(\beta) = o_p(1)$. Next, by the definition of the limit, $Q_{N,0}(\beta) - Q_0(\beta) = o(1)$. Therefore, we finally obtain
    \begin{align*}
        \widehat{Q}_N(\beta) - Q_0(\beta) = \widehat{Q}_N(\beta) - Q_{N,0}(\beta) + Q_{N,0}(\beta) - Q_0(\beta) = o_p(1),
    \end{align*}
which proves pointwise convergence in probability of $\widehat{Q}_N(\beta)$ to $Q_0(\beta)$. Given the strict concavity of $Q_0(\beta)$ and the fact that $\beta_0$ is its unique maximiser, standard arguments for extremum estimators imply that $\widehat{\beta} \to_p \beta_0$ as $N \to \infty$, completing the proof.
\end{proof}

\subsection{Asymptotic normality}\label{app:th1}

This section contains the proof of Theorem \ref{thm:an}.

\textbf{Notational convention:} In what follows, from time to time we will use an alternate notation for indexing hexad-specific quantities such as the hexad score. To illustrate, consider the score function for the generic hexad $\sigma=(i_1,j_1,k_1,i_2,j_2,k_2)$. In the standard notation this is given by $s_\sigma(\beta)$. In the alternate notation, we write
\begin{align*}
    s_{i_1,i_2|j_1,j_2|k_1,k_2}(\beta).
\end{align*}
This notation makes it easier to visually separate nodes belonging to different parts, which will be convenient in some of our later arguments. 

We use the following shorthand:
$\underset
    {i_1 \neq i_2}
    {\sum \sum}$
denotes 
$\sum_{i_1=1}^N \sum_{\{i_2=1,\ldots,N \,\, : \,\, i_2 \neq i_1\}}$;  $\underset {j_1 \neq j_2} {\sum \sum}$ and $\underset {k_1 \neq k_2} {\sum \sum}$ are defined analogously. Also, $\underset{1\leq i,j,k \leq N}{\sum \sum \sum}$ denotes summation across all values of $i,j,k$.

\subsubsection*{Main steps of the proof} 
The proof is analogous to the corresponding proof for the tetrad logit estimator in \citet{Graham17}, and consists of several steps. Below, we first discuss the main steps of the proof, and then prove these individual steps in Sections \ref{sect:scorvar}, \ref{sect:projection}, \ref{sect:projequiv}, \ref{sect:an1proj} and \ref{sect:hessian}. 
 
Remember that the score function for the conditional likelihood estimator is given by
\begin{align*}
    Z_{N}(\beta)
    &=
    \frac{1}{N^3 (N-1)^3}
    \underset
        {i_1 \neq i_2}
        {\sum \sum}
    \underset
        {j_1 \neq j_2}
        {\sum \sum}
    \underset
        {k_1 \neq k_2}
        {\sum \sum}
    s_{i_1,i_2|j_1,j_2|k_1,k_2}(\beta).
\end{align*}
Sections \ref{sect:scorvar}, \ref{sect:projection} and \ref{sect:projequiv} obtain a valid H\'{a}jek projection for $Z_N$, defined as $Z_N^*$, and prove that $Z_N$ and $Z_N^*$ (when properly scaled) are asymptotically equivalent in the sense that
\begin{align}
\mathbb{E}
\left[ 
\left(
    \sqrt{N^3/\rho_N^7} Z_{N} - \sqrt{N^3/\rho_N^7} Z_{N}^*
\right)^2
\right]    
\to 0,
\qquad
\text{as } N\to\infty.
\label{eq:m1}
\end{align}
It is then proved in Section \ref{sect:an1proj} that
\begin{align}
    \frac{c'\Gamma_0^{-1}}
    {\sqrt{c'\Gamma_0^{-1} \overline M _H \Gamma_0^{-1} c }}
    \sqrt{\frac{N^3}{\rho_N^7}}
    Z_N^*
    &\to_d
    \mathcal{N}(0,2^6),
    \label{eq:m2}
\end{align}
where
\begin{align*}
    \overline M _H 
    = 
    \frac{1}{\rho_N^7 N^3} \underset{1\leq i,j,k \leq N}{\sum \sum \sum} 
    \mathbb{E} [\overline{s}_{ijk} \overline{s}_{ijk}' | \mathbf{X}, \mathbf{F}],
\end{align*}
and $\overline{s}_{ijk}$ is defined in equation \eqref{eq:sijk}. Section \ref{sect:hessian} proves convergence of the (scaled) sample Hessian to $\Gamma_0$; that is,
\begin{align}
    \frac{1}{\rho_N^4}\nabla_{\beta \beta}l_N(\widehat \beta)    
    =
    \frac{1}{\rho_N^4 m_N} \sum_{\sigma \in \Sigma} H_\sigma (\widehat \beta) 
    \to_p 
    \Gamma_0.
    \label{eq:m3}
\end{align}
To bring all these results together, consider now the mean-value expansion
\begin{align*}
    (\widehat \beta - \beta_0)
    &=
    -[\nabla_{\beta \beta} l_N (\overline \beta)]^{-1}Z_N,
\end{align*}
where $\overline \beta$ is some mean value between $\widehat \beta$ and $\beta_0$.
Rescaling and using \eqref{eq:m3} yields
\begin{align*}
    \sqrt{N^3 \rho_N}
    (\widehat \beta - \beta_0)
    &=
    -\left[\frac{1}{\rho_N^4}\nabla_{\beta \beta} l_N (\overline \beta)\right]^{-1} 
    \sqrt{\frac{N^3}{\rho_N^7}}
    Z_N
    =
    -\Gamma_0^{-1} 
    \sqrt{\frac{N^3}{\rho_N^7}}
    Z_N
    +
    o_p(1).
\end{align*}
Combining this with \eqref{eq:m1} and \eqref{eq:m2} finally yields
\begin{align*}
    \sqrt{N^3\rho_N}
    \frac
    { c'(\widehat{\beta} - \beta_0)}
    {\sqrt{c'\Gamma_0^{-1} \overline M _H \Gamma_0^{-1} c}}
    &=
    \frac{-c'\Gamma_0^{-1} }{\sqrt{c'\Gamma_0^{-1} \overline M _H \Gamma_0^{-1} c}}
    \sqrt{\frac{N^3}{\rho_N^7}}
    Z_N^*
    +
    o_p(1)
    \to_d \mathcal{N}(0,2^6).
\end{align*}

\subsubsection{Rate of the variance of the score}\label{sect:scorvar}
First, we note that the score function $s_{\sigma} = s_{\sigma}(\beta_0)$ satisfies assumptions (i)-(iv) of Lemma \ref{lemma:main2}. Furthermore, by \eqref{eq:zeroscore} we also have $\mathbb E [s_\sigma | \mathbf{X},\mathbf{F}]=0$. We can therefore use both parts of Lemma \ref{lemma:main2} to analyse the variance decomposition for the score function
\begin{align*}
    Z_N 
    =
    \frac{1}{N^3(N-1)^3} \sum_{\sigma \in \Sigma} s_\sigma.
\end{align*}
In particular, analogous to the generic definition in \eqref{eq:defCqN_corrected}, let
\begin{align*}
    \mathcal{C}_{(q_1,q_2,q_3),N}
    = 
    \sum_{\sigma \in \Sigma} \sum_{\sigma' \in \Sigma} \mathbb{E}[s_{\sigma} s_{\sigma'}] \cdot 1\{\overline{\mathrm{q}}_1(\sigma,\sigma')=q_1, \overline{\mathrm{q}}_2(\sigma,\sigma')=q_2, \overline{\mathrm{q}}_3(\sigma,\sigma')=q_3\}.
\end{align*}
We then have
\begin{align}
    \mathrm{Var}(Z_{N}) &= \frac{1}{N^6(N-1)^6} \sum_{q_1=0}^{2}\sum_{q_2=0}^{2}\sum_{q_3=0}^{2} \mathcal{C}_{(q_1,q_2,q_3),N},
    \label{eq:vardecomp1}
\end{align}
where, by Lemma \ref{lemma:main2}, 
(i) $\mathcal{C}_{(q_1,q_2,q_3),N}=0$ whenever $q_i=0$ for any $i$; 
(ii)  $\mathcal{C}_{(1,1,1),N}=O(N^9 \rho_N^7)$;
(iii) $\mathcal{C}_{(2,1,1),N}=\mathcal{C}_{(1,2,1),N}=\mathcal{C}_{(1,1,2),N}=O(N^8 \rho_N^7)$; 
(iv) $\mathcal{C}_{(2,2,1),N}=\mathcal{C}_{(1,2,2),N}=\mathcal{C}_{(2,1,2),N}=O(N^7 \rho_N^6)$;
(v) $\mathcal{C}_{(2,2,2),N}=O(N^6\rho_N^4)$. It follows that,
\begin{align}
 \mathrm{Var}(Z_{N}) &= \frac{1}{O(N^{12})} 
 \left[ 
     O(N^9 \rho_N^7) 
     + {O(N^8 \rho_N^7)} 
     + {O(N^7 \rho_N^6)} 
     + O(N^6 \rho_N^4)
 \right] 
 \notag
 \\
 &= O\left(\frac{\rho_N^7}{N^3}\right) + O\left(\frac{\rho_N^7}{N^4}\right) + O\left(\frac{\rho_N^6}{N^5}\right) + O\left(\frac{\rho_N^4}{N^6}\right).
 \label{eq:keydecomp}
\end{align}
Under the assumption $\rho_N=O(1/N^\delta)$ with $0\leq \delta <1$, the first term (which is associated with $\mathcal C _{(1,1,1),N}$) provides the leading order contribution, $O(\rho_N^7/N^3)$. Consequently,
\begin{align*}
 \mathrm{Var}\left( \sqrt{\frac{N^3}{\rho_N^7}} Z_{N}\right) 
 &= O(1) + O\left( \frac{1}{N} \right) + O\left( \frac{1}{N^2 \rho_N} \right) + O\left( \frac{1}{N^3 \rho_N^3 } \right).
\end{align*}
As $N\to\infty$, all terms except the leading $O(1)$ term converge to zero under the condition $0\leq \delta <1$.

Using the second part of Lemma \ref{lemma:main2}, and also recalling that $\mathrm{Cov}(s_{(i,j,k,a,b,c)} , s_{(i,j,k,d,e,f)} )$ is $O(\rho_N^7)$, we also have 
\begin{align}
 \mathrm{Var}\left(\sqrt{\frac{N^3}{\rho_N^7}} Z_{N}\right)
 &= 
 \frac{2^6}{\rho_N^7}\mathrm{Cov}(s_{(i,j,k,a,b,c)} , s_{(i,j,k,d,e,f)} )
 +o(1),
 \label{eq:varzn}
\end{align}
where $(i,j,k)$, $(a,b,c)$ and $(d,e,f)$ share no nodes with each other.

Before moving on, we note that we can represent the decomposition for $\mathrm {Var} (Z_N)$ also in terms of components that depend on the number of common nodes between hexad-pairs, irrelevant of which parts these nodes belong to. More specifically,
\begin{align}
    \mathrm{Var}(Z_{N}) &= \frac{1}{N^6(N-1)^6} 
    \sum_{q=0}^6 
    \mathcal{C}_{q,N},
    \label{eq:vardecomp3}
\end{align}
where 
\begin{align*}
    \mathcal{C}_{q,N} 
    =
    \underset{[\sigma,\sigma']_q}{\sum \sum}
    \mathbb{E} [s_\sigma s_{\sigma'}]
    \qquad
    \text{for }
    q=0,\ldots,6,
\end{align*}
and ${\sum \sum}_{[\sigma,\sigma']_q}$ denotes summation over all hexad pairs $(\sigma,\sigma')$ with $q$ common nodes. In other words, $\mathcal C _{q,N}$ is the contribution of the covariances between scores of hexad-pairs that share $q$ nodes, irrelevant of which parts these nodes $q$ belong to. To see why \eqref{eq:vardecomp3} is equivalent to \eqref{eq:vardecomp1}, note the following: first,
\begin{align*}
    \mathcal C _{0,N} = \mathcal C _{1,N} = \mathcal C _{2,N} = 0
\end{align*} 
since these all correspond to variants of $\mathcal C _{(q_1,q_2,q_3),N}$ where $q_i=0$ for at least one of $i=1,2,3$. The correspondences between the remaining terms (excluding $\mathcal C _{(q_1,q_2,q_3),N}$ that are directly equal to zero) are as follows:
\begin{align*}
    \mathcal C _{3,N} &= \mathcal C_{(1,1,1),N} = O(N^9 \rho_N^7) ,
    \\
    \mathcal C _{4,N} &= \mathcal C_{(2,1,1),N} + \mathcal C_{(1,2,1),N} + \mathcal C_{(1,1,2),N} = O(N^8 \rho_N^7) ,
    \\
    \mathcal C _{5,N} &= \mathcal C_{(2,2,1),N} + \mathcal C_{(1,2,2),N} + \mathcal C_{(2,1,2),N} = O(N^7 \rho_N^6) ,
    \\
    \mathcal C _{6,N} &= \mathcal C_{(2,2,2),N} = O(N^6 \rho_N^4).
\end{align*}
Therefore, we have
\begin{align}
    \mathrm{Var}(Z_{N})
    &=
    \frac{1}{N^6(N-1)^6}
    \left(
        \mathcal{C}_{3,N}
        +
        \mathcal{C}_{4,N}
        +
        \mathcal{C}_{5,N}
        +
        \mathcal{C}_{6,N}
    \right)
    \notag
    \\
    &=
    \underset{\frac{\mathcal{C}_{3,N}}{N^6(N-1)^6}}{\underbrace{O\left( \frac{\rho_N^7}{N^3} \right)}}
    +
    \underset{\frac{\mathcal{C}_{4,N}}{N^6(N-1)^6}}
    {\underbrace{O\left( \frac{\rho_N^7}{N^4} \right)}}
    +
    \underset{\frac{\mathcal{C}_{5,N}}{N^6(N-1)^6}}
    {\underbrace{O\left( \frac{\rho_N^6}{N^5} \right)}}
    +
    \underset{\frac{\mathcal{C}_{6,N}}{N^6(N-1)^6}}
    {\underbrace{O\left( \frac{\rho_N^4}{N^6} \right)}},
    \label{eq:altdecomp}
\end{align}
as in \eqref{eq:keydecomp}.

\subsubsection{Projection}\label{sect:projection}

Let $\mathcal I_{ijk} = (X_{ijk},\mathrm{F}_{ijk},\varepsilon_{ijk})$ and define
\begin{align}
    \overline{s}_{ijk} = \mathbb E [s_{a,a'|b,b'|c,c'} | \mathcal{I}_{ijk}]
    \label{eq:sijk}
\end{align}
where, $(a,b,c)$ and $(a',b',c')$ are some triads such that $a\neq a'$, $b\neq b'$ and $c \neq c'$. Notice that $\overline{s}_{ijk}=0$ unless $i$ equals one of $(a,a')$ \textbf{and} $j$ equals one of $(b,b')$ \textbf{and} $k$ equals one of $(c,c')$. To see why, first note that unless this condition holds, none of the random shocks belonging to the hexad $a,a'|b,b'|c,c'$ can be the same as $\varepsilon_{ijk}$.\footnote{These are given by $\varepsilon_{abc}$, $\varepsilon_{a'bc}$, $\varepsilon_{ab'c}$, $\varepsilon_{abc'}$, $\varepsilon_{a'b'c}$, $\varepsilon_{a'bc'}$, $\varepsilon_{ab'c'}$ and $\varepsilon_{a'b'c'}$.} 
Consequently, all these random shocks will be independent of $\varepsilon_{ijk}$
conditional on $(\mathbf{X},\mathbf{F})$. This is a consequence of conditional independence of triad-specific shocks, as implied by Assumption \ref{a:logit}. Then,
\begin{align}
    \mathbb E [s_{a,a'|b,b'|c,c'} | \mathcal{I}_{ijk}]
    &=
    \mathbb E 
    \left[
        \mathbb E [s_{a,a'|b,b'|c,c'} | \mathbf{X}, \mathbf{F}, \varepsilon_{ijk}]
    \,
    |
    \,
    \mathcal I _{ijk}
    \right]
    \notag
    \\
    &=
    \mathbb E 
    \left[
        \mathbb E [s_{a,a'|b,b'|c,c'} | \mathbf{X}, \mathbf{F}]
    \,
    |
    \,
    \mathcal I _{ijk}
    \right]
    \notag
    \\
    &= 0,
    \label{eq:proj0}
\end{align}
where the first equality is an application of the Law of Iterated Expectations, whereas the final result follows from \eqref{eq:zeroscore}.

Next, we obtain the H\'{a}jek projection for $Z_N$.
Consider the class of functions 
\begin{align}
    \underset{1\leq i, j, k \leq N}{ \sum\sum \sum }
    g_{ijk}(\mathcal{I}_{ijk}),
    \label{eq:pclass}
\end{align}
where $g_{ijk}$ are arbitrary measurable functions with finite second moments. We will show that 
\begin{align}
    Z_N^*
    &= 
    \underset{1 \leq i,j,k \leq N}{\sum \sum \sum }
    \mathbb E 
    [Z_N
    \,|\, 
    \mathcal{I}_{ijk}],
    \label{eq:hajek0}
\end{align}
is a valid (H\'{a}jek) projection of $Z_N$ onto the class of functions defined in \eqref{eq:pclass}. For this we have to verify two conditions: (i) that \eqref{eq:hajek0} belongs to the class given in \eqref{eq:pclass}, and (ii) that the projection error is orthogonal to the class of functions given in \eqref{eq:pclass}. The first condition obviously holds. As for the second condition, we first note that
\begin{align*}
    \mathbb E 
    [Z_N
    \,|\, 
    \mathcal{I}_{ijk}]
    &=
    \frac{1}{N^3(N-1)^3}
    \underset
        {i_1 \neq i_2}
        {\sum \sum}
    \underset
        {j_1 \neq j_2}
        {\sum \sum}
    \underset
        {k_1 \neq k_2}
        {\sum \sum}
    \mathbb E [s_{i_1,i_2|j_1,j_2|k_1,k_2}  | \mathcal I _{ijk}]
    =
    \frac{2^3}{N^3}
    \overline{s}_{ijk},
\end{align*}
where the final result follows from the fact that for any given $(i,j,k)$, the event 
$\{i\in \{i_1,i_2\},
\, 
j\in \{j_1,j_2\},
\,
k\in \{k_1,k_2\}\}$
with the condition $i_1\neq i_2, j_1\neq j_2, k_1\neq k_2 $
will occur in $(2(N-1))^3$ ways. It follows that
\begin{align*}
    Z_N^*
    &= 
    \frac{2^3}{N^3}
    \underset{1 \leq i,j,k \leq N}{\sum \sum \sum }
    \overline{s}_{ijk}.
\end{align*}
To prove orthogonality, we will use the following two results: First, for any $(i,j,k)$ we have
\begin{align}
    \mathbb E
    \left[
    Z_N
    g_{ijk}(\mathcal I_{ijk}) 
    \right]  
    &=
    \mathbb E
    \left\{
    \left.
    \mathbb E
    \left[
        Z_N
        g_{ijk}(\mathcal I_{ijk}) 
    \, \right| \,
    \mathcal I_{ijk}
    \right]
    \right\}
    \notag
    \\
    &=
    \mathbb E
    \left\{
    \left.
    \mathbb E
    \left[
        Z_N
    \, \right| \,
    \mathcal I_{ijk}
    \right]
    g_{ijk}(\mathcal I_{ijk}) 
    \right\}
    \notag
    \\
    &=
    \frac{2^3}{N^3}
    \mathbb E
    \left[
    \overline{s}_{ijk}
    g_{ijk}(\mathcal I_{ijk}) 
    \right].
    \label{eq:hajek1}
\end{align}
Second, for any $(i,j,k)$ it also holds that
\begin{align}
    \mathbb E
    \left[
        Z_N^* g_{ijk}(\mathcal I_{ijk}) 
    \right]
    &=
    \frac{2^3}{N^3}
    \underset{1 \leq a,b,c \leq N}{\sum \sum \sum }
        \mathbb E[
        \overline{s}_{abc}
        g_{ijk}(\mathcal I_{ijk}) 
        ]
    \notag
    \\
    &=
    \frac{2^3}{N^3}
    \underset{1 \leq a,b,c \leq N}{\sum \sum \sum }
        \mathbb E\left\{
        \mathbb E\left[
        \left.
        \overline{s}_{abc}
        g_{ijk}(\mathcal I_{ijk}) 
        \, \right| \,
        \mathcal{I}_{ijk}
        \right]
        \right\}
    \notag
    \\
    &=
    \frac{2^3}{N^3}
    \underset{1 \leq a,b,c \leq N}{\sum \sum \sum }
        \mathbb E\left\{
        \mathbb E\left[
        \left.
        \overline{s}_{abc}
        \, \right| \,
        \mathcal{I}_{ijk}
        \right]
        g_{ijk}(\mathcal I_{ijk})
        \right\}
    \notag
    \\
    &=
    \frac{2^3}{N^3}
    \mathbb E
    \left[
    \overline{s}_{ijk}
    g_{ijk}(\mathcal I_{ijk})
    \right],
    \label{eq:hajek2}
\end{align}
since 
$\mathbb E\left[ \left. \overline{s}_{abc} \, \right| \, \mathcal{I}_{ijk} \right]$ 
is equal to
$\overline{s}_{ijk}$ if $(a,b,c)=(i,j,k)$, and to $0$ otherwise as implied by equation \eqref{eq:proj0}.
Using \eqref{eq:hajek1} and \eqref{eq:hajek2} it follows that
\begin{align*}
    \mathbb E
    \left[
    (Z_N - Z_N^*)
    \underset{1\leq i,j,k \leq N}{\sum \sum \sum }
    g_{ijk}(\mathcal I_{ijk}) 
    \right]
    &=
    \underset{1\leq i,j,k \leq N}{\sum \sum \sum }
    \mathbb E
    \left[
    (Z_N - Z_N^*)
    g_{ijk}(\mathcal I_{ijk}) 
    \right]
    =0,
\end{align*}
which proves orthogonality of the projection error. Hence, 
\begin{align}
    Z_N^*
    &= 
    \frac{2^3}{N^3}
    \underset{1 \leq i,j,k \leq N}{\sum \sum \sum }
    \overline{s}_{ijk}
    \label{eq:hajek3}
\end{align}
is a valid H\'{a}jek projection.

\subsubsection{Asymptotic equivalence of the score and its projection}\label{sect:projequiv}
We next show that the score and its H\'{a}jek projection are asymptotically equivalent, in the sense that
\begin{align*}
\mathbb{E}
\left[ 
\left(
    \sqrt{N^3/\rho_N^7} Z_{N} - \sqrt{N^3/\rho_N^7} Z_{N}^*
\right)^2
\right]    
\to 0,
\qquad
\text{as } N\to\infty.
\end{align*}
Define 
\begin{align*}
    \mathrm{Cov}_{(1,1,1),N}
    =
    \mathrm{Cov}(s_{(i,j,k,a,b,c)} , s_{(i,j,k,d,e,f)} ),
\end{align*}
and let $\sigma$ and $\sigma'$ be two hexads that share the triad $(i,j,k)$. 
Then,
for any two such hexads $(\sigma,\sigma')$ we have
\begin{align}
    \mathrm{Cov} _{(1,1,1),N}
    &=
    \mathbb{E}\{ \mathbb{E} [s_\sigma \, s_{\sigma'} | \mathcal{I}_{ijk}] \} 
    =
    \mathbb{E}
    \{ 
    \mathbb{E} [s_{\sigma} | \mathcal{I}_{ijk}]
    \mathbb{E} [s_{\sigma'} | \mathcal{I}_{ijk}]
    \}
    =
    \mathbb{E} [ \overline{s}_{ijk} \, \overline{s}_{ijk} ].
    \label{eq:proj1}
\end{align}
Consequently,
\begin{align}
    \frac{N^3}{\rho_N^7}
    \mathrm{Var}\left( Z_N^* \right)
    &=
    \frac{N^3}{\rho_N^7}
    \mathrm{Var}\left(
        \frac{2^3}{N^3}
        \underset{1\leq i, j, k\leq N}{\sum \sum \sum }
        \overline{s}_{ijk}
    \right)
    \notag
    \\
    &=
    \frac{2^6}{N^3 \rho_N^7}
    \underset{1\leq i, j, k\leq N}{\sum \sum \sum }
    \,
    \mathrm{Var}(\overline{s}_{ijk})
    \notag
    \\
    &=
    \frac{2^6}{N^3 \rho_N^7}
    \underset{1\leq i, j, k\leq N}{\sum \sum \sum }
    \mathbb{E}[ \overline{s}_{ijk}  \, \overline{s}_{ijk} ]
    \notag
    \\
    &=
    \frac{2^6}{N^3 \rho_N^7}
    \underset{1\leq i, j, k\leq N}{\sum \sum \sum }
    \mathrm{Cov} _{(1,1,1),N}
    \notag
    \\
    &=
    \frac{2^6}{\rho_N^7}\mathrm{Cov} _{(1,1,1),N},
    \label{eq:varproject}
\end{align}
where the third equality follows from the fact that for any $\sigma$, $\mathbb{E} [\overline{s}_{ijk} ] = \mathbb{E}\{\mathbb{E} [s_{\sigma} | \mathcal{I}_{ijk} ]\} = \mathbb{E}[s_{\sigma}]=0 $, and the fourth equality follows from \eqref{eq:proj1}.
Next, since $(Z_N-Z_N^*)$ is the projection error, it must be orthogonal to the projection, yielding $\mathbb{E}[(Z_N-Z_N^*)Z_N^*]=0$. It follows that,
\begin{align}
    \mathrm{Cov}(Z_N,Z_N^*)
    &=
    \mathbb{E}[(Z_N-Z_N^*)Z_N^*] + \mathbb{E}[Z_N^* Z_N^*]
    =
    \mathrm{Var}(Z_N^*).
    \label{eq:covproject}
\end{align}
Combining \eqref{eq:varzn}, \eqref{eq:varproject} and \eqref{eq:covproject}, we obtain that as $N\to \infty$
\begin{align*}
    \mathbb{E}
    \left[ 
    \left(
        \sqrt{\frac{N^3}{\rho_N^7}} Z_{N} - \sqrt{\frac{N^3}{\rho_N^7}} Z_{N}^*
    \right)^2
    \right]
    &=
    \frac{N^3}{\rho_N^7}\mathrm{Var}(Z_N) 
    +
    \frac{N^3}{\rho_N^7}\mathrm{Var}(Z_N^*)
    -
    \frac{2N^3}{\rho_N^7}\mathrm{Cov}(Z_N,Z_N^*)
    \notag
    \\
    &=
    2^6 \frac{\mathrm{Cov}_{(1,1,1),N}}{\rho_N^7} + o(1)
    -
    2^6 \frac{\mathrm{Cov}_{(1,1,1),N}}{\rho_N^7}
    \notag
    \\
    &=
    o(1).
\end{align*}
This proves that the normalised score and its H\'{a}jek projection are asymptotically equivalent.

\subsubsection{CLT for the projection}\label{sect:an1proj}

Our argument here is analogous to the CLT argument of \citet{Graham17}, which in turn is based on \citet{Chatterjee06}. We first switch to `triad indexing'. That is, we map the triad indices $(i,j,k)$ to $h=1,\ldots,H$ where $H=N^3$. This relabelling is convenient as $\overline{s}_{ijk}$ is conditionally independent across triads $(i,j,k)$. Switching now to $\overline{s}_h$ we define
\begin{align}
    M_h = \rho_N^{-7} \mathbb{E} [\overline{s}_h \overline{s}_h' | \mathbf{X}, \mathbf{F}],
    \qquad
    \overline M _H = \frac{1}{H} \sum_{h=1}^H M_h,
    \label{eq:MH}
\end{align}
and
\begin{align}
    R_h
    =
    \frac{c'\Gamma_0^{-1} \left(\overline{s}_h / \sqrt{\rho_N^7} \right) }
    {\sqrt{c'\Gamma_0^{-1} \overline M _H \Gamma_0^{-1} c }}.
    \label{eq:defrh}
\end{align}
Let $Y_1,\ldots,Y_H$ be some (conditional on $\mathbf X$ and $\mathbf F$) independently distributed random variables where
\begin{align*}
    Y_h|{\mathbf X , \mathbf F} \sim \mathcal{N} 
    \left( 
    0,
    \frac{c'\Gamma_0^{-1} M _h \Gamma_0^{-1} c}
    {c'\Gamma_0^{-1} \overline M _H \Gamma_0^{-1} c }
    \right).
\end{align*}
This implies that $\frac{1}{\sqrt{H}}\sum_{h=1}^H Y_h \sim \mathcal{N}(0,1)$.
Define 
\begin{align*}
    Z_h 
    = 
    (R_1,\ldots,R_h,Y_{h+1},\ldots,Y_H)
    \qquad
    \text{and}
    \qquad
    Z_h^0 
    = 
    (R_1,\ldots,R_{h-1},0,Y_{h+1},\ldots,Y_H).
\end{align*}
$Z_h^0$ is the same as $Z_h$, except that its $h^{\mathrm{th}}$ entry is replaced by zero. Let $f(x):\mathbb{R}\to\mathbb{R}$ be an arbitrary three-times differentiable function with $\sup_x | \partial^p f(x)/ \partial x^p | \leq B < \infty$ for $p=1,2,3.$ Notice now that for any such function
\begin{align}
    & f\left(\frac{1}{\sqrt{H}} \sum_{h=1}^H R_h \right)
    -
    f\left(\frac{1}{\sqrt{H}} \sum_{h=1}^H Y_h \right)
    \notag
    \\
    &=
    f\left(\frac{1}{\sqrt{H}} (R_1 + Y_2 + Y_3 + \ldots + Y_H) \right)
    -
    f\left(\frac{1}{\sqrt{H}} (Y_1 + Y_2 + Y_3 + \ldots + Y_H) \right)
    \notag
    \\
    & +
    f\left(\frac{1}{\sqrt{H}} (R_1 + R_2 + Y_3 + \ldots + Y_H) \right)
    -
    f\left(\frac{1}{\sqrt{H}} (R_1 + Y_2 + Y_3 + \ldots + Y_H) \right)
    \notag
    \\
    & +
    f\left(\frac{1}{\sqrt{H}} (R_1 + R_2 + R_3 + \ldots + Y_H) \right)
    -
    f\left(\frac{1}{\sqrt{H}} (R_1 + R_2 + Y_3 + \ldots + Y_H) \right)
    \notag
    \\
    & \vdots
    \notag
    \\
    & +
    f\left(\frac{1}{\sqrt{H}} (R_1 + \ldots + R_{H-1} + R_H) \right)
    -
    f\left(\frac{1}{\sqrt{H}} (R_1 + \ldots + R_{H-1} + Y_H) \right)
    \notag
    \\
    &=
    \sum_{h=1}^H 
    \left[
    f\left(\frac{1}{\sqrt{H}} Z_h' \iota \right)
    -
    f\left(\frac{1}{\sqrt{H}} Z_{h-1}' \iota \right)
    \right],
    \label{eq:telescope}
\end{align}
where $\iota$ is an $H\times 1$ vector of ones.
Expanding $f\left( \frac{1}{\sqrt{H}} Z_h'\iota  \right) $ about  $R_h=0$ yields
\begin{align}
    f\left( \frac{1}{\sqrt{H}} Z_h'\iota  \right) 
    =&
    f\left( \frac{1}{\sqrt{H}} (Z_h^0)'\iota  \right)
    +
    \frac{1}{\sqrt{H}} f' \left( \frac{1}{\sqrt{H}} (Z_h^0)'\iota  \right) R_h
    +
    \frac{1}{2H} 
    f'' \left( \frac{1}{\sqrt{H}} (Z_h^0)'\iota  \right) R_h^2
    \notag
    \\
    &+
    \frac{1}{6H^{3/2}} 
    f''' \left( \frac{1}{\sqrt{H}} (\overline{Z}_h^0)'\iota  \right) R_h^3,
    \label{eq:r1}
\end{align}
where $\overline{Z}_h^0$ is some mean value between $Z_h^0$ and $Z_h$. Next, an expansion of $f\left( \frac{1}{\sqrt{H}} Z_{h-1}'\iota  \right) $ about $Y_h=0$ yields
\begin{align}
    f\left( \frac{1}{\sqrt{H}} Z_{h-1}'\iota  \right) 
    =&
    f\left( \frac{1}{\sqrt{H}} (Z_h^0)'\iota  \right)
    +
    \frac{1}{\sqrt{H}} f' \left( \frac{1}{\sqrt{H}} (Z_h^0)'\iota  \right) Y_h
    +
    \frac{1}{2H} 
    f'' \left( \frac{1}{\sqrt{H}} (Z_h^0)'\iota  \right) Y_h^2
    \notag
    \\
    &+
    \frac{1}{6H^{3/2}} 
    f''' \left( \frac{1}{\sqrt{H}} (\overline{Z}_h^0)'\iota  \right) Y_h^3,
    \label{eq:r2}
\end{align}
where $\overline{Z}_h^0$ is again a mean value (not necessarily the same as the one appearing in \eqref{eq:r1}).
By \eqref{eq:r1} and \eqref{eq:r2} we obtain
\begin{align}
    f\left(\frac{1}{\sqrt{H}} Z_h' \iota \right)
    -
    f\left(\frac{1}{\sqrt{H}} Z_{h-1}' \iota \right)
    \leq &
    \frac{1}{\sqrt{H}} f' \left( \frac{1}{\sqrt{H}} (Z_h^0)'\iota  \right) (R_h - Y_h)
    \notag
    \\
    &+
    \frac{1}{2H} 
    f'' \left( \frac{1}{\sqrt{H}} (Z_h^0)'\iota  \right) (R_h^2 - Y_h^2)
    \notag
    \\
    &+
    \frac{ \left( |R_h^3| + |Y_h^3| \right) B }{6H^{3/2}} ,
    \label{eq:r3}
\end{align}
where we have used
$\sup_{x} | f''' \left( x \right) | \leq B$.
Next, we note the following: (i) $Z_h^0$ does not contain $R_h$; (ii) $Y_h$ is, by definition, (conditionally) independent of $R_h$ and
$Z_h^0$; and (iii)
\begin{align*}
    \mathbb{E}[R_h | \mathbf{X}, \mathbf{F}]
    =
    \frac{c'\Gamma_0^{-1} \mathbb{E}[\overline{s}_h | \mathbf{X}, \mathbf{F}] / \sqrt{\rho_N^7}  }
    {\sqrt{c'\Gamma_0^{-1} \overline{M}_H \Gamma_0^{-1} c }}
    =
    0.
\end{align*}
Then,
\begin{align}
    \mathbb{E}
    \left[
        f' \left( \frac{1}{\sqrt{H}} (Z_h^0)'\iota  \right) (R_h - Y_h)
    \right]
    =
    0,
    \label{eq:r4}
\end{align}
where we have also used $\mathbb{E}[Y_h|\mathbf X , \mathbf F]=\mathbb{E}[R_h|\mathbf X , \mathbf F]=0$.
Similarly, and also using the fact that {$\mathbb{E}[Y_h^2|\mathbf X , \mathbf F] = \mathbb{E}[R_h^2|\mathbf X , \mathbf F]$} by design, we obtain
\begin{align}
    \mathbb{E}
    \left[
        f'' \left( \frac{1}{\sqrt{H}} (Z_h^0)'\iota  \right) (R_h^2-Y_h^2)
    \right]
    =
    0.
    \label{eq:r5}
\end{align}
Combining \eqref{eq:r3}, \eqref{eq:r4} and \eqref{eq:r5}, we obtain
\begin{align}
\mathbb{E}
\left[
    f\left(\frac{1}{\sqrt{H}} Z_h' \iota \right)
    -
    f\left(\frac{1}{\sqrt{H}} Z_{h-1}' \iota \right)
\right]
\leq
\frac{  (\sup_h \mathbb{E}[|R_h^3|] + \sup_h \mathbb{E}[|Y_h^3|])  B }{6H^{3/2}} .
\label{eq:r6}
\end{align}
We note that by general properties of normally distributed random variables, $\sup_h \mathbb{E}[|Y_h^3|] < \infty$. Moreover, by Assumption \ref{a:compact_and_interior} we also have $\sup_h \mathbb{E}[|R_h^3|]<\infty$. Then, there exists some $\overline B < \infty$ such that 
$(\sup_h \mathbb{E}[|R_h^3|] + \sup_h \mathbb{E}[|Y_h^3|])  B \leq \overline B$.
From \eqref{eq:telescope} and \eqref{eq:r6} it follows that
\begin{align*}
    \left|
    \mathbb{E}\left[f\left(\frac{1}{\sqrt{H}} \sum_{h=1}^H R_h \right)\right]
    -
    \mathbb{E}\left[f\left(\frac{1}{\sqrt{H}} \sum_{h=1}^H Y_h \right)\right]
    \right|
    &\leq
    \sum_{h=1}^H
    \left|
    \mathbb{E}
    \left[
        f\left(\frac{1}{\sqrt{H}} (Z_h' \iota) \right)
        -
        f\left(\frac{1}{\sqrt{H}} (Z_{h-1}' \iota) \right)
    \right]
    \right|
    \\
    & \leq
    O\left(\frac{1}{\sqrt H }\right) ,
\end{align*}
and consequently,
\begin{align}
    \lim_{H\to\infty}
    \left|
    \mathbb{E}\left[f\left(\frac{1}{\sqrt{H}} \sum_{h=1}^H R_h \right)\right]
    -
    \mathbb{E}\left[f\left(\frac{1}{\sqrt{H}} \sum_{h=1}^H Y_h \right)\right]
    \right|
    =0,
    \label{eq:clt}
\end{align}
for all functions $f(\cdot)$ for which $\sup_{x} | f''' \left( x \right)| <\infty$. Recalling that $\frac{1}{\sqrt{H}} \sum_{h=1}^H Y_h \to_d \mathcal{N}(0,1)$ by design, equation \eqref{eq:clt} finally implies that
\begin{align}
    \frac{1}{\sqrt{H}} \sum_{h=1}^H R_h \to_d \mathcal{N}(0,1).
    \label{eq:cltfin}
\end{align}
Now, for the H\'{a}jek projection in equation \eqref{eq:hajek3} we have 
\begin{align*}
    \frac{1}{\sqrt H}
    \sum_{h=1}^H
    \overline{s}_{h}/\sqrt{\rho_N^7}
    &=
    \frac{1}{2^3}\sqrt{\frac{H}{\rho_N^7}} Z_N^* .
\end{align*}
Then,
\begin{align*}
    \frac{c'\Gamma_0^{-1}}
    {\sqrt{c'\Gamma_0^{-1} \overline M _H \Gamma_0^{-1} c }}
    \frac{1}{2^3}\sqrt{\frac{H}{\rho_N^7}}
    Z_N^*
    &=
    \frac{c'\Gamma_0^{-1}}
    {\sqrt{c'\Gamma_0^{-1} \overline{M}_H \Gamma_0^{-1} c }}
    \left( \frac{1}{\sqrt{H}} \sum_{h=1}^H \overline{s}_h /\sqrt{\rho_N^7}  \right)
    \\
    &=
    \frac{1}{\sqrt{H}}\sum_{h=1}^H R_h
    \\
    &\to_d
    \mathcal{N}(0,1),
\end{align*}
where we have used \eqref{eq:defrh} and \eqref{eq:cltfin}.
It follows that
\begin{align*}
    \frac{c'\Gamma_0^{-1}}
    {\sqrt{c'\Gamma_0^{-1} \overline M _H \Gamma_0^{-1} c }}
    \sqrt{\frac{H}{\rho_N^7}}
    Z_N^*
    &\to_d
    \mathcal{N}(0,2^6).
\end{align*}

\subsubsection{Proof of Hessian convergence}\label{sect:hessian}
In this part we prove that
\begin{align}
    \frac{1}{m_N p_N} \sum_{\sigma \in \Sigma} H_{\sigma}(\widehat \beta) \to_p \Gamma_0,
    \label{eq:hessian1}
\end{align}
where
\begin{align*}
    \Gamma_0 = \lim_{N\to\infty} \frac{1}{m_N p_N} \sum_{\sigma \in \Sigma} \mathbb E [H_\sigma(\beta_0)],
\end{align*}
as defined in Assumption \ref{a:identification}. From \eqref{eq:hessian1} we also obtain
\begin{align*}
    \frac{1}{\rho_N^4 m_N} \sum_{\sigma \in \Sigma} H_\sigma (\widehat \beta) 
    \to_p
    \Gamma_0.
\end{align*}
To see why, notice that 
\begin{align*}
    \frac{1}{\rho_N^4 m_N} \sum_{\sigma \in \Sigma} H_\sigma (\widehat \beta) 
    -
    \frac{1}{p_N m_N} \sum_{\sigma \in \Sigma} H_\sigma (\widehat \beta) 
    &=
    \frac{(1-\rho_N)^4 - 1}{\rho_N^4(1-\rho_N)^4 m_N} \sum_{\sigma \in \Sigma} H_\sigma (\widehat \beta) 
    \\
    &=
    ((1-\rho_N)^4 - 1)
    (\Gamma_0 + o_p(1))
    \\
    &=
    o_p(1),
\end{align*}
where the first equality follows from $p_N=\rho_N^4 (1-\rho_N)^4$ whereas the second equality uses \eqref{eq:hessian1}.

We now proceed to the proof of \eqref{eq:hessian1}.

\textbf{Part I: Pointwise convergence.}
For any fixed $\beta \in \mathcal{B}$, by arguments analogous to those made in the proof of Theorem \ref{thm:consistency}, it can be shown that
\begin{align*}
    \frac{1}{m_N p_N} \sum_{\sigma \in \Sigma} H_\sigma (\beta) 
    -
    \frac{1}{m_N p_N} \sum_{\sigma\in \Sigma} \mathbb E [H_\sigma (\beta)]
    \to_p 0,
\end{align*}
as this involves convergence of a sample average (of $H_\sigma(\beta)$ over $\sigma \in \Sigma$).
Defining,
\begin{align*}
    \Gamma_0(\beta) 
    =
    \lim_{N\to\infty}\frac{1}{m_N p_N} \sum_{\sigma\in \Sigma} \mathbb E [H_\sigma (\beta)],
\end{align*}
we then obtain the pointwise convergence
\begin{align}
    \frac{1}{m_N p_N} \sum_{\sigma \in \Sigma} H_\sigma (\beta) 
    \to_p \Gamma_0(\beta)
    \qquad
    \text{for each }
    \beta \in \mathcal B.
    \label{eq:Hpwise}
\end{align}
Note that $\Gamma_0(\beta_0)=\Gamma_0$.

\textbf{Part II: Stochastic equicontinuity.}
To upgrade the pointwise convergence in \eqref{eq:Hpwise} to uniform convergence over the compact set $\mathcal{B}$, we next establish stochastic equicontinuity. Define,
\begin{align*}
    H_N(\beta) = \frac{1}{m_N}\sum_{\sigma\in \Sigma} H_\sigma(\beta).
\end{align*}
A sufficient condition for stochastic equicontinuity is that the gradient of $p_N^{-1}H_N(\beta)$ with respect to $\beta$ is uniformly bounded in probability, i.e., $\sup_{\beta \in \mathcal{B}} ||p_N^{-1} \nabla_{\beta_p} H_N(\beta) || = O_p(1)$ for each $p$, for a suitable matrix norm.
For any component $p \in \{1, \dots, P\}$ we have
\begin{align*}
    \nabla_{\beta_p} H_N(\beta) = \frac{1}{m_N} \sum_{\sigma \in \Sigma} \nabla_{\beta\beta\beta_p} l_\sigma(\beta);
\end{align*}
in other words, 
$\nabla_{\beta_p} H_N(\beta)$ is a $P\times P$ matrix with its row $p_r$ and column $p_c$ entry given by 
\begin{align*}
    \frac{1}{m_N} \sum_{\sigma \in \Sigma} \frac{\partial^3 l_\sigma(\beta)}{\partial \beta_{p_r} \, \partial \beta_{p_c} \, \partial \beta_{p}}    .
\end{align*}
From equation \eqref{eq:third_derivative}, we know that $\nabla_{\beta\beta\beta_p}l_\sigma(\beta) = S_\sigma \sum_{c=1}^2 h_{\sigma,c,p}(\beta, W_\sigma)$. Under Assumption \ref{a:compact_and_interior}, the function $h_{\sigma,c,p}(\cdot,\cdot)$ is uniformly bounded in Frobenius norm; that is, there exists some $B<\infty$ such that $||h_{\sigma,c,p}(\beta,W_\sigma)||_F \leq B < \infty$. Using the triangle inequality, we obtain
\begin{align}
    \left\| \nabla_{\beta\beta\beta_p}l_N(\beta) \right\|_F 
    &= \left\| \frac{1}{m_N} \sum_{\sigma\in \Sigma} S_\sigma \sum_{c=1}^2 h_{\sigma,c,p}(\beta, W_\sigma) \right\|_F \notag 
    \\
    &\leq \frac{1}{m_N} \sum_{\sigma\in \Sigma} S_\sigma \sum_{c=1}^2 \| h_{\sigma,c,p}(\beta, W_\sigma) \|_F 
    \notag
    \\
    &\leq \frac{2B}{m_N} \sum_{\sigma\in \Sigma} S_\sigma.
    \label{eq:h0}
\end{align}
This bound holds uniformly over $\beta \in \mathcal{B}$. 

Now, notice that the centred object $S_\sigma - \mathbb E [S_\sigma]$ satisfies conditions (i)-(iv) of Lemma \ref{lemma:main2}. 
Then, invoking Lemma \ref{lemma:main2},
\begin{align}
    \frac{1}{(N(N-1))^3} 
    \sum_{\sigma\in \Sigma} 
    (S _\sigma 
    -
    \mathbb E [ S _\sigma ]
    )
    &=
    O_p\left(
        \sqrt{\max\left\{
        \frac{ \rho _N^8}{N},
        \frac{ \rho _N^7}{N^3},
        \frac{ \rho _N^7}{N^4},
        \frac{ \rho _N^6}{N^5},
        \frac{ \rho _N^4}{N^6}
        \right\}}
    \notag
    \right)
    \\
    &=
    O_p\left(
        \sqrt{
        \frac{ \rho _N^8}{N}
        }
    \right),
    \label{eq:h1}
\end{align}
where we have also used the maintained assumption that $ \rho _N = O(1/N^{ \delta})$ with $0\leq \delta <1$.
Notice also that, since $\mathbb E [S_\sigma]= \mathrm{P}(S_\sigma=1)$ and since $\mathrm P (S_\sigma=1)=O(\rho_N^4)$, we have
\begin{align}
    \frac{1}{(N(N-1))^3} 
        \sum_{\sigma\in \Sigma}
            \mathbb E [S_{\sigma}]
    &= O(\rho_N^4).
    \label{eq:h2}
\end{align}
Recalling that $p_N=O(\rho_N^4)$ and $m_N=(N(N-1))^3$, it follows from \eqref{eq:h0}, \eqref{eq:h1} and \eqref{eq:h2} that
\begin{align}
    \sup_{\beta \in \mathcal B}
    \left|\left|
    \frac{1}{ p_N}\nabla_{\beta\beta\beta_p}l_N( \beta)
    \right| \right|_{F}
    &=
    O\left(\frac{ \rho_N^4}{ \rho_N^4}\right)
    +
    O_p\left(\frac{ \rho_N^4}{ \rho_N^4 \sqrt{N}}\right)
    =
    O_p(1).
    \label{eq:h3}
\end{align}
This establishes the necessary condition for stochastic equicontinuity of $p_N^{-1} H_N(\beta)$ on $\mathcal{B}$.
   
\textbf{Part III: Uniform Convergence.}
With pointwise convergence in probability established in Step 1, stochastic equicontinuity established in Step 2, and the compactness of the parameter space $\mathcal{B}$ under Assumption \ref{a:compact_and_interior}, standard uniform convergence theorems (e.g., Lemma 2.8 of \citet{NeweyMcFadden94}) apply. 
This gives the desired uniform convergence in probability,
\begin{align}
    \sup_{\beta \in \mathcal{B}} \left\| 
    \frac{1}{m_N p_N} \sum_{\sigma \in \Sigma} H_\sigma (\beta) 
    - 
    \Gamma_0(\beta) \right\| 
    \to_p 
    0,
    \label{eq:Huniform}
\end{align}    
because $\Gamma_0(\beta)$ is continuous on $\mathcal{B}$.

\textbf{Part IV: Convergence at $\widehat{\beta}$.}
Finally, by $\widehat{\beta} \to_p \beta_0$ (Theorem \ref{thm:consistency}), $\beta_0$ being in the interior of $\mathcal{B}$ (Assumption \ref{a:compact_and_interior}), and the uniform convergence result obtained in \eqref{eq:Huniform}, it follows that
\begin{align*}
    \frac{1}{m_N p_N} \sum_{\sigma \in \Sigma} H_{\sigma}(\widehat \beta) \to_p \Gamma_0(\beta_0) = \Gamma_0,
\end{align*}
as desired.

\newpage

\section{Minimal degree sequence}
\label{sec:simpleds}

Consider the link formation model in \eqref{eq:ofe} with edge-specific fixed effects. Let the hexad degree sequence be as defined in Section \ref{sec:hexads}. In this part we establish the following:
\begin{enumerate}
    \item $(2,2,2,2,2,2)$ is the minimal degree sequence under which we can find informative wirings. By `minimal' we mean that there is no other degree sequence with all node degrees at most two, which admits informative wirings.
    \item The degree sequence $(2,2,2,2,2,2)$ admits a total of eight wirings and only two of them are informative.
\end{enumerate}

We first note that a set of informative wirings should contain exactly the same fixed effects and exactly the same number of times: this ensures that the same fixed effects  appear in all the individual probabilities used in constructing the conditional likelihood function. This, in turn, leads to cancelling-out of all fixed effects in the resulting conditional likelihood function, as desired.

We next remember that in the link formation model \eqref{eq:ofe}, each hyperedge consists of three edges and the fixed effects involved in a hyperedge directly correspond to these edges. For example, the generic hyperedge $[i,j,k]$ involves edges between the node pairs $(i,j)$, $(j,k)$ and $(i,k)$, and therefore contains the edge-specific fixed effects $A_{ij}$, $B_{jk}$ and $C_{ik}$.

Rather than considering different degree sequences and checking whether they admit informative wirings that satisfy the required conditions, we instead start with a generic first hyperedge (which is the minimal requirement for a wiring to exist), and then add the minimal number of  hyperedges and wirings to maintain the condition that
``we should have at least two wirings that contain the same number of fixed effects the same number of times''. This will directly leads us to informative wirings with the lowest degree sequences.

Specifically, we start, without loss of generality, with the hyperedge $[1,1,1]$. This also means that the first triad of the hexad is $(1,1,1)$; that is, a triad consisting of the first node of each part. With only this hyperedge, we obviously have only one wiring yet:
\begin{align*}
    \text{Wiring 1} : \qquad [1,1,1].
\end{align*}
This hyperedge contains the fixed effects $A_{11}$, $B_{11}$ and $C_{11}$. Then any other informative wiring should also contain the same fixed effects. The following set of hyperedges satisfies these requirements:
	\begin{align}
		[1,1,2], \quad [2,1,1] \quad 
		\text{and} \quad [1,2,1].
		\label{eq:hyper2}
	\end{align}
    Obviously, one can also use the hyperedge $[1,1,1]$ but that would simply duplicate the first hyperedge and the first wiring.
	We observe the following:
	\begin{enumerate}
		\item As required, the fixed effects $A_{11}$, $B_{11}$ and $C_{11}$ are contained in the hyperedges in equation \eqref{eq:hyper2}, and only once.
		\item The hyperedges in \eqref{eq:hyper2} introduce the following fixed effects as a by-product: $A_{21}$, $A_{12}$, $B_{12}$, $B_{21}$, $C_{12}$ and $C_{21}$. These fixed effects appear only once.
		\item The hyperedges in \eqref{eq:hyper2} imply the presence of the second triad $(2,2,2)$. This choice of nodes is without loss of generality, and what matters is that we need one more triad that has no common nodes with the original triad $(1,1,1)$. Therefore, the generic hexad under consideration is given by the nodes of $(1,1,1)$ and $(2,2,2)$.
	\end{enumerate}
    At this point we have
    \begin{align*}
        \text{Wiring 1} &: \qquad [1,1,1].
        \\
        \text{Wiring 2} &: \qquad [1,1,2], \quad [2,1,1], \quad [1,2,1].
    \end{align*}
    We note that for the hexad generated by the triads $(1,1,1)$ and $(2,2,2)$ we cannot find any other hyperedge or sets of hyperedges that contain $A_{11}$, $B_{11}$ and $C_{11}$. Therefore, there can be only two informative wirings.

    The hyperedges in \eqref{eq:hyper2} include the fixed effects $A_{21}$, $A_{12}$, $B_{12}$, $B_{21}$, $C_{12}$ and $C_{21}$. These have to be matched in Wiring 1. Therefore, we have to find a new set of hyperedges which contain these fixed effects, but are different from the hyperedges already included in Wiring 2. These are given by
    \begin{align}
        [2,1,2], \quad [2,2,1] \quad 
		\text{and} \quad [1,2,2].
		\label{eq:hyper3}
    \end{align}
    We note the following:
    \begin{enumerate}
        \item All the fixed effects $A_{21}$, $A_{12}$, $B_{12}$, $B_{21}$, $C_{12}$ and $C_{21}$ are contained in the collection of hyperedges in equation \eqref{eq:hyper3} and only once.
        \item\label{e2:p2} The hyperedges in \eqref{eq:hyper3} introduce the following fixed effects as a by-product, which now have to be matched in Wiring 2: $A_{22}$, $B_{22}$, $C_{22}$. This can be achieved by adding the hyperedge $[2,2,2]$ to Wiring 2, which contains all the stated fixed effects but is not one of the hyperedges contained in Wiring 1.
    \end{enumerate}
	Therefore, we finally have
    \begin{align*}
        \text{Wiring 1} &: \qquad [1,1,1], \quad [2,1,2], \quad [2,2,1],  \quad [1,2,2],
        \\
        \text{Wiring 2} &: \qquad [2,2,2], \quad  [1,1,2], \quad [2,1,1], \quad [1,2,1].
    \end{align*}

    Both these wirings contain all the fixed effects involved in the hexad generated by the triads $(1,1,1)$ and $(2,2,2)$. Moreover, both wirings contain these fixed effects only once. Since both wirings contain exactly the same fixed effects, we do not have to generate any further hyperedges. We also note that, both wirings have the degree sequence $(2,2,2,2,2,2)$. We therefore finally conclude that we cannot obtain informative wirings with a lesser degree sequence, and these are the only informative wirings we can generate under this degree sequence. We reached these two wirings by starting from a generic hyperedge $[1,1,1]$. Therefore, this result applies to any hexad: one can simply relabel the two nodes from each part as 1 and 2, and then use the wirings found above. 

    \begin{table}[t!]
    \centering
    \begin{tabular}{cccc}
    \hline
    \textbf{Wiring 1} & \textbf{Wiring 2} & \textbf{Wiring 3} & \textbf{Wiring 4} \\
    \hline
    $[1,1,1]$ & $[2,2,2]$ & $[1,1,1]$ & $[1,1,1]$ \\
    $[2,1,2]$ & $[1,1,2]$ & $[1,1,2]$ & $[1,2,1]$ \\
    $[2,2,1]$ & $[2,1,1]$ & $[2,2,1]$ & $[2,1,2]$ \\
    $[1,2,2]$ & $[1,2,1]$ & $[2,2,2]$ & $[2,2,2]$ \\
    \hline
    \\
    \hline
    \textbf{Wiring 5} & \textbf{Wiring 6} & \textbf{Wiring 7} & \textbf{Wiring 8} \\
    \hline
    $[1,1,1]$ & $[1,1,2]$ & $[1,1,2]$ & $[1,2,1]$ \\
    $[2,1,1]$ & $[2,2,1]$ & $[1,2,1]$ & $[2,1,1]$ \\
    $[1,2,2]$ & $[1,2,2]$ & $[2,1,2]$ & $[1,2,2]$ \\
    $[2,2,2]$ & $[2,1,1]$ & $[2,2,1]$ & $[2,1,2]$ \\
    \end{tabular}
    \caption{Wirings admitted by the degree sequence \((2,2,2,2,2,2)\).}
    \label{tbl:wirings}
    \end{table}
    
    We also note that Wirings 1 and 2 are not the only wirings that one can have under the degree sequence $(2,2,2,2,2,2)$. There are eight possible wirings under this degree sequence, as presented in Table \ref{tbl:wirings}.
    Importantly, it can be confirmed that, apart from Wirings 1 and 2, there are no two wirings that contain the same fixed effects (even if they are allowed to contain the same hyperedges). In other words, the degree sequence $(2,2,2,2,2,2)$ is by itself not sufficient for constructing a conditional likelihood function.

\newpage

\section{Triadic model with node-level heterogeneity}\label{sec:triadicno}

In this part we analyse the triadic model with node-level effects. A large portion of the theory is analogous to that for the dyad-level effects model. Therefore---except when a new argument has to be developed for the model at hand---our treatment here will mostly be a sketch of the underlying steps.

The notation here is also analogous to that for the dyad-level fixed effects model. To avoid confusion, we distinguish key objects that belong to the node-level effects model by the ``$\sim$'' sign, e.g. $\widetilde{\mathrm{F}} _{ijk}$ instead of $\mathrm F _{ijk}$ etc.

All the proofs for this part can be found in Section \ref{sec:nofe_proofs}

\subsection{The model}
Link formation is determined by 
\begin{align}
    Y_{ijk} = 1\{A_{i} + B_{j} + C_{k} + X_{ijk}' \beta_0 - \varepsilon_{ijk} \geq 0\},
    \label{eq:nw_formation_nonoverlapping}
\end{align}
where $\varepsilon_{ijk}$ is logistic and independent across triads conditional on $(X_{ijk},A_i,B_j,C_k)$. We use $\widetilde{\mathrm{F}}_{ijk}=(A_i,B_j,C_k) $ to denote the fixed effects for the ordered triple $(i,j,k)$. We also let $\widetilde{\mathbf{F}}=(\widetilde{\mathrm F} _{ijk})_ {\mathbb T_N}$ whereas $ \mathbb T_N$ is defined as before. The next assumption  formalises the DGP and is analogous to Assumption \ref{a:logit}.
\begin{sectionassumption}\label{a:logit_nofe}
The conditional likelihood of the network $\mathbf Y = \mathbf y$ is
\begin{equation*}
    \mathrm{P} \left(\mathbf Y = \mathbf y | \mathbf X, \widetilde{\mathbf{F}}\right) 
    = 
    \prod_{(i,j,k) \in \mathbb T_N} 
    \mathrm{P} \left(\left. Y_{ijk} = y_{ijk} \right| X_{ijk}, \widetilde{\mathrm{F}}_{ijk} \right),
\end{equation*}
where
\begin{align*}
    \mathrm{P} \left(\left. Y_{ijk} = y \right| X_{ijk}, \widetilde{\mathrm{F}}_{ijk} \right) 
    = 
    \begin{cases}
    \Lambda\left( X_{ijk}' \beta_0 + A_i + B_j + C_k \right) & \text{ if }y = 1, \\
    1-\Lambda\left( X_{ijk}' \beta_0 + A_i + B_j + C_k \right) & \text{ if }y=0.
    \end{cases}
\end{align*}
\end{sectionassumption}

\subsection{Sufficiency}\label{sec:suffnofe}
Let $\overline Y_{abc} = 1 - Y_{abc} $. For a generic hexad $\sigma=(i_1,j_1,k_1,i_2,j_2,k_2)$ we define the indicator functions
\begin{align*}
    \widetilde{S}_{\sigma, 1}
    = 
    Y_{i_1 j_1 k_1}
    Y_{i_2 j_2 k_2}
    \overline Y_{i_1 j_2 k_1}
    \overline Y_{i_2 j_1 k_2}
    \overline Y_{i_2 j_1 k_1}
    \overline Y_{i_1 j_2 k_2}
    \overline Y_{i_1 j_1 k_2}
    \overline Y_{i_2 j_2 k_1},
    \\
    \widetilde{S}_{\sigma, 2}
    = 
    \overline Y_{i_1 j_1 k_1}
    \overline Y_{i_2 j_2 k_2}
    Y_{i_1 j_2 k_1}
    Y_{i_2 j_1 k_2}
    \overline Y_{i_2 j_1 k_1}
    \overline Y_{i_1 j_2 k_2}
    \overline Y_{i_1 j_1 k_2}
    \overline Y_{i_2 j_2 k_1},
    \\
    \widetilde{S}_{\sigma, 3}
    = 
    \overline Y_{i_1 j_1 k_1}
    \overline Y_{i_2 j_2 k_2}
    \overline Y_{i_1 j_2 k_1}
    \overline Y_{i_2 j_1 k_2}
    Y_{i_2 j_1 k_1}
    Y_{i_1 j_2 k_2}
    \overline Y_{i_1 j_1 k_2}
    \overline Y_{i_2 j_2 k_1},
    \\
    \widetilde{S}_{\sigma, 4}
    = 
    \overline Y_{i_1 j_1 k_1}
    \overline Y_{i_2 j_2 k_2}
    \overline Y_{i_1 j_2 k_1}
    \overline Y_{i_2 j_1 k_2}
    \overline Y_{i_2 j_1 k_1}
    \overline Y_{i_1 j_2 k_2}
    Y_{i_1 j_1 k_2}
    Y_{i_2 j_2 k_1}.
\end{align*}
Each of these indicator functions checks whether the hexad $\sigma$ admits one of the wirings that we will use for constructing the conditional likelihood function (see also Figure \ref{fig:alternative_subhypergraphs_111111}). Notice that these are the four possible wirings admitted by the degree sequence $(1,1,1,1,1,1)$.

For a given $\sigma$, at most one of $\widetilde{S}_{\sigma,1},\ldots,\widetilde{S}_{\sigma,4}$ will be equal to one.
Then, whether a given hexad $\sigma$ admits any one of these wirings is indicated by the binary variable
\begin{align}
    \widetilde{S}_{\sigma} 
    = 
    \widetilde{S}_{\sigma, 1} 
    + 
    \widetilde{S}_{\sigma, 2} 
    + 
    \widetilde{S}_{\sigma, 3} 
    + 
    \widetilde{S}_{\sigma, 4}.
    \label{eq:suff1}
\end{align}

\begin{figure}[t]
    \centering
    
    \begin{subfigure}[b]{0.45\linewidth}
        \centering
        \begin{tikzpicture}
        \SetVertexStyle[FillOpacity = 0.5]
        \draw (0.0,0.0) rectangle (6,4);
        \Vertex[x = 1, y = 3, color = blue, label = $i_1$, position = left]{i1}
        \Vertex[x = 1, y = 1, color = blue, label = $i_2$, position = left]{i2}
        \Vertex[x = 3, y = 3, color = green, label = $j_1$, position = above left]{j1}
        \Vertex[x = 3, y = 1, color = green, label = $j_2$, position = above left]{j2}
        \Vertex[x = 5, y = 3, color = red, label = $k_1$, position = above left]{k1}
        \Vertex[x = 5, y = 1, color = red, label = $k_2$, position = above left]{k2}
        
        \Edge(i1)(j1)
        \Edge(j1)(k1)
        \Edge(i2)(j2)
        \Edge(j2)(k2)
        \end{tikzpicture}
        \caption*{$S_{\sigma, 1}=1$}
    \end{subfigure}
    \hfill
    \begin{subfigure}[b]{0.45\linewidth}
        \centering
        \begin{tikzpicture}
        \SetVertexStyle[FillOpacity = 0.5]
        \draw (0.0,0.0) rectangle (6,4);
        \Vertex[x = 1, y = 3, color = blue, label = $i_1$, position = left]{i1}
        \Vertex[x = 1, y = 1, color = blue, label = $i_2$, position = left]{i2}
        \Vertex[x = 3, y = 3, color = green, label = $j_1$, position = left]{j1}
        \Vertex[x = 3, y = 1, color = green, label = $j_2$, position = left]{j2}
        \Vertex[x = 5, y = 3, color = red, label = $k_1$, position = left]{k1}
        \Vertex[x = 5, y = 1, color = red, label = $k_2$, position = left]{k2}
        \Edge(i1)(j2)
        \Edge(j2)(k1)
        \Edge(i2)(j1)
        \Edge(j1)(k2)
        \end{tikzpicture}
        \caption*{$S_{\sigma, 2}=1$}
    \end{subfigure}

    \begin{subfigure}[b]{0.45\linewidth}
        \centering
        \vspace{22pt}

        \begin{tikzpicture}
        \SetVertexStyle[FillOpacity = 0.5]
        \draw (0.0,0.0) rectangle (6,4);
        \Vertex[x = 1, y = 3, color = blue, label = $i_1$, position = left]{i1}
        \Vertex[x = 1, y = 1, color = blue, label = $i_2$, position = left]{i2}
        \Vertex[x = 3, y = 3, color = green, label = $j_1$, position = left]{j1}
        \Vertex[x = 3, y = 1, color = green, label = $j_2$, position = left]{j2}
        \Vertex[x = 5, y = 3, color = red, label = $k_1$, position = above left]{k1}
        \Vertex[x = 5, y = 1, color = red, label = $k_2$, position = above left]{k2}
        \Edge(i1)(j2)
        \Edge(j2)(k2)
        \Edge(i2)(j1)
        \Edge(j1)(k1)
        \end{tikzpicture}
        
        \caption*{$S_{\sigma, 3} = 1$}
    \end{subfigure}
    \hfill
    \begin{subfigure}[b]{0.45\linewidth}
        \centering
        \vspace{22pt}
        
        \begin{tikzpicture}
        \SetVertexStyle[FillOpacity = 0.5]
        
        \draw (0.0,0.0) rectangle (6,4);
        \Vertex[x = 1, y = 3, color = blue, label = $i_1$, position = left]{i1}
        \Vertex[x = 1, y = 1, color = blue, label = $i_2$, position = left]{i2}
        \Vertex[x = 3, y = 3, color = green, label = $j_1$, position = above left]{j1}
        \Vertex[x = 3, y = 1, color = green, label = $j_2$, position = above left]{j2}
        \Vertex[x = 5, y = 3, color = red, label = $k_1$, position = left]{k1}
        \Vertex[x = 5, y = 1, color = red, label = $k_2$, position = left]{k2}
        
        \Edge(i1)(j1)
        \Edge(j1)(k2)
        \Edge(i2)(j2)
        \Edge(j2)(k1)
        \end{tikzpicture}
        \caption*{$S_{\sigma, 4}=1$}
    \end{subfigure}
    \caption{All hexad wirings compatible with degree sequence $(1, 1, 1, 1, 1, 1)$.}
    \label{fig:alternative_subhypergraphs_111111}
\end{figure}
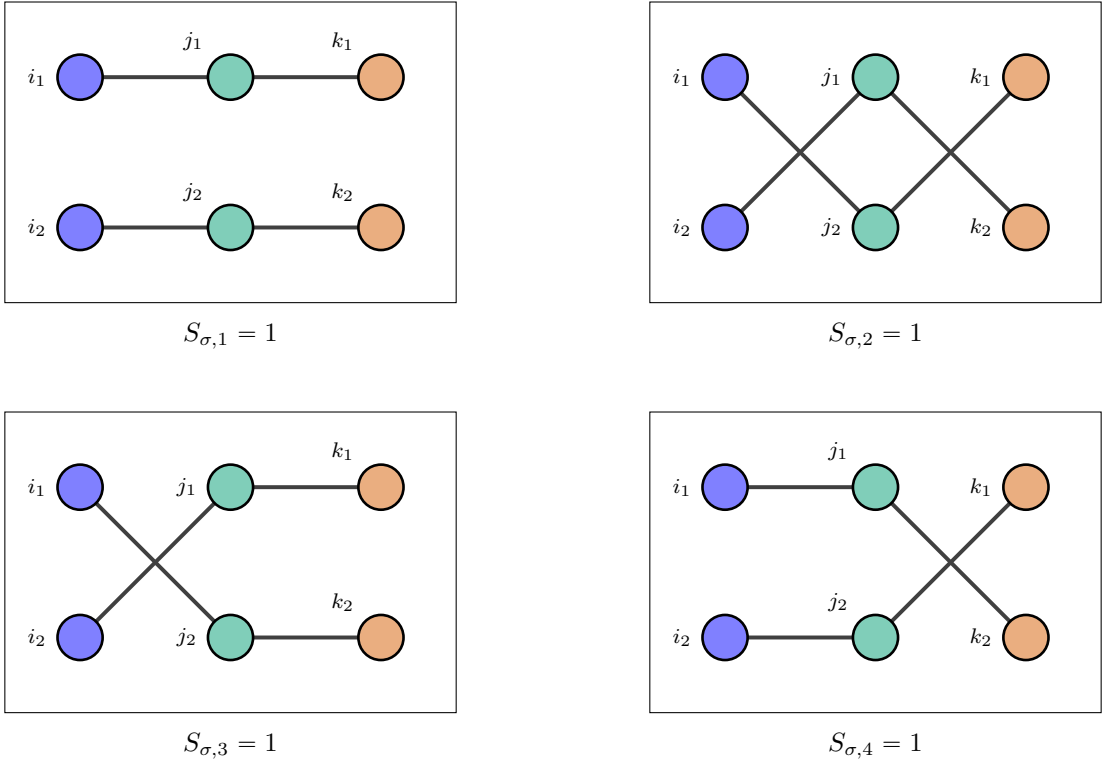

Let $\widetilde{\mathrm F} _\sigma$ and $\mathrm X _\sigma$ be the collections of fixed effects and covariates for the hexad $\sigma$. The next result formally establishes that the degree sequence $(1,1,1,1,1,1)$  allows for the construction of a conditional likelihood function.

\begin{sectiontheorem}\label{thm:sufficiency_nofe} 
Suppose that network formation is determined by Assumption \ref{a:logit_nofe}. Then, for any hexad $\sigma = (i_1,j_1,k_1,i_2,j_2,k_2)$,
\begin{align*}
    \mathrm{P}(\widetilde S _{\sigma, c} = 1 | \widetilde S_{\sigma} = 1, \widetilde{\mathrm{F}}_{\sigma}, X_{\sigma})
    &= 
    \frac
    {\exp(\widetilde{W}_{\sigma, c}' \beta_0)}
    {\sum_{c'=1}^4 \exp  \left( \widetilde{W}_{\sigma, c'}' \beta_0 \right) },
    \qquad
    \text{for } 
    c=1,2,3,4,
\end{align*}
where $\widetilde{W}_{\sigma, 1} = X_{i_1 j_1 k_1}+X_{i_2 j_2 k_2}$,
    $\widetilde{W}_{\sigma, 2} = X_{i_1 j_2 k_1} + X_{i_2 j_1 k_2}$,
    $\widetilde{W}_{\sigma, 3} = X_{i_2 j_1 k_1} + X_{i_1 j_2 k_2}$ and 
    $\widetilde{W}_{\sigma, 4} = X_{i_1 j_1 k_2} + X_{i_2 j_2 k_1}$.
\end{sectiontheorem}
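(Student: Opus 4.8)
The plan is to mirror, almost verbatim, the proof of Theorem~\ref{thm:sufficiency_overlapping} in Section~\ref{app:sufficiency_proof}, the only change being that informative wirings now consist of two hyperedges (degree sequence $(1,1,1,1,1,1)$) rather than four. Fix the generic hexad $\sigma=(i_1,j_1,k_1,i_2,j_2,k_2)$ and let $\mathbb{I}_\sigma$ denote its $2^3$ triads. By the factorisation in Assumption~\ref{a:logit_nofe}, the probability of any link configuration on $\mathbb{I}_\sigma$ is a product of per-triad logistic terms. For the configuration in which exactly the two hyperedges of wiring $c$ are present and the remaining six triads are linkless, every triad contributes its denominator $1+\exp(X_{ijk}'\beta_0+A_i+B_j+C_k)$, the six absent triads contribute numerator factor $1$, and the two present triads contribute numerator factors $\exp(X'\beta_0+A_\cdot+B_\cdot+C_\cdot)$.

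Next I would isolate the two quantities that will turn out to be common to all four wirings: $\widetilde\kappa_2 = \prod_{(i,j,k)\in\mathbb{I}_\sigma}\bigl(1+\exp(X_{ijk}'\beta_0+A_i+B_j+C_k)\bigr)^{-1}$, the product of all $2^3$ denominators; and $\widetilde\kappa_1 = \exp(A_{i_1}+A_{i_2}+B_{j_1}+B_{j_2}+C_{k_1}+C_{k_2})$, the fixed-effect content of the two present hyperedges. The key combinatorial step is to verify that this fixed-effect content is indeed the same for each $c\in\{1,2,3,4\}$: since the degree sequence is $(1,1,1,1,1,1)$, the two hyperedges of any wiring $\widetilde S_{\sigma,c}$ use each of $i_1,i_2$ once, each of $j_1,j_2$ once, and each of $k_1,k_2$ once, so their pooled fixed effects always sum to $A_{i_1}+A_{i_2}+B_{j_1}+B_{j_2}+C_{k_1}+C_{k_2}$. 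This gives, for each $c$, $\mathrm{P}(\widetilde S_{\sigma,c}=1\mid X_\sigma,\widetilde{\mathrm F}_\sigma)=\widetilde\kappa_1\widetilde\kappa_2\exp(\widetilde W_{\sigma,c}'\beta_0)$, where $\widetilde W_{\sigma,c}$ is the sum of the covariates of the two present triads; one checks directly that these match the $\widetilde W_{\sigma,1},\dots,\widetilde W_{\sigma,4}$ in the statement.

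Finally, since at most one of $\widetilde S_{\sigma,1},\dots,\widetilde S_{\sigma,4}$ can equal one, the event $\{\widetilde S_\sigma=1\}$ defined in \eqref{eq:suff1} is the disjoint union of the four events $\{\widetilde S_{\sigma,c}=1\}$, so $\mathrm{P}(\widetilde S_\sigma=1\mid X_\sigma,\widetilde{\mathrm F}_\sigma)=\widetilde\kappa_1\widetilde\kappa_2\sum_{c'=1}^4\exp(\widetilde W_{\sigma,c'}'\beta_0)$. Dividing the two displays cancels $\widetilde\kappa_1$ and $\widetilde\kappa_2$ and yields $\mathrm{P}(\widetilde S_{\sigma,c}=1\mid \widetilde S_\sigma=1, X_\sigma,\widetilde{\mathrm F}_\sigma)=\exp(\widetilde W_{\sigma,c}'\beta_0)\big/\sum_{c'=1}^4\exp(\widetilde W_{\sigma,c'}'\beta_0)$, which no longer involves $\widetilde{\mathrm F}_\sigma$ — exactly the claimed multinomial logit form. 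I expect the only step requiring genuine care to be the combinatorial bookkeeping that all four degree-$(1,1,1,1,1,1)$ wirings carry identical fixed effects, but this is immediate once one observes that each such wiring is a bijective pairing across the two nodes of each part; everything else is a routine repetition of the dyad-level argument, and the removal of $\widetilde{\mathrm F}_\sigma$ from the conditioning set (if one wanted the second equality, as in Theorem~\ref{thm:sufficiency_overlapping}) would follow by the law of iterated expectations.
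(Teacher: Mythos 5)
Your proposal is correct and follows essentially the same route as the paper: the paper does not write out a separate proof for this theorem but treats it as the direct analogue of the proof of Theorem \ref{thm:sufficiency_overlapping} in Appendix \ref{app:sufficiency_proof}, which is exactly what you do — factor each wiring probability into a common $\widetilde\kappa_1\widetilde\kappa_2$ times $\exp(\widetilde W_{\sigma,c}'\beta_0)$, use that every degree-$(1,1,1,1,1,1)$ wiring carries the same node fixed effects, and cancel upon conditioning on the disjoint union $\{\widetilde S_\sigma=1\}$. No gaps.
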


\subsection{Estimation and large sample results}\label{sec:estimation_nofe}
The notation used here mimics the notation for the dyad-specific effects model. In particular we now have the conditional likelihood estimator of $\beta_0$ given by
\begin{align}
    \widehat \beta = \text{argmax}_{\beta \in \mathcal B} \widetilde{l}_N(\beta), \label{eq:CCMLE_nofe}
\end{align}
where
\begin{align*}
    \widetilde{l}_N(\beta) 
    &= 
    \frac{1}{m_N}\sum_{\sigma \in \Sigma} \sum_{c=1}^4 \widetilde{S}_{\sigma, c} \log \widetilde{p}_{\sigma, c}(\beta), 
\end{align*}
and
\begin{align*}
    \widetilde{p}_{\sigma, c}(\beta) 
    &= 
    \frac
    {\exp(\widetilde{W}_{\sigma, c}' \beta)}
    {\sum_{c'=1}^4 \exp  \left( \widetilde{W}_{\sigma, c'}' \beta \right) }.
\end{align*}
The quantity $m_N$ is the same as before as the cardinality of the hexad set is independent of the fixed effect structure.
The hexad-specific likelihood, score and Hessian functions are defined analogously:
\begin{align*}
    \widetilde{l}_{\sigma}(\beta)
    &=
    \sum_{c=1}^4 \widetilde{S}_{\sigma, c} \log \widetilde{p}_{\sigma, c}(\beta), 
    \\
    \widetilde{s}_{\sigma}(\beta) 
    &= 
    \sum_{c = 1}^4 \widetilde{S}_{\sigma, c} \left(\widetilde{W}_{\sigma, c} - \overline{\widetilde{W}}_\sigma(\beta) \right), 
    \\
    \widetilde{H}_{\sigma}(\beta) 
    &= 
    - \widetilde{S}_\sigma \sum_{c = 1}^4 \widetilde{p}_{\sigma, c}(\beta)\left(\widetilde{W}_{\sigma, c} - \overline{\widetilde{W}}_\sigma(\beta) \right) \left(\widetilde{W}_{\sigma, c} - \overline{\widetilde{W}}_\sigma(\beta) \right)',
\end{align*}
where
\begin{align*}
    \overline{\widetilde{W}}_\sigma(\beta) = \sum_{c = 1}^4 \widetilde{p}_{\sigma, c}(\beta) \widetilde{W}_{\sigma, c}.
\end{align*}
We define
\begin{align}
    \widetilde \rho _N = \mathrm P (Y_{ijk}=1),
    \label{eq:rhotilde}
\end{align}
the unconditional link formation probability under the model \eqref{eq:nw_formation_nonoverlapping}. 
Notice that this is not necessarily the same as $\rho_N$. We also define $\widetilde p _N = \mathrm P (\widetilde S _{\sigma}=1)$, the unconditional probability of a hexad being informative in the given setting. Importantly, we now have
\begin{align*}
    \widetilde p _N=O(\widetilde \rho _N^2),
\end{align*}
since informative wirings require formation of two specific hyperedges.

The original Assumptions \ref{a:compact_and_interior} and \ref{a:random_sampling_nodes} are still relevant in the current setting. The next Assumption is a straightforward modification of Assumption \ref{a:identification}.
\setcounter{sectionassumption}{3}
\begin{sectionassumption}\label{a:identification_nofe}
    The matrix
     \begin{align*}
         \widetilde{\Gamma}_0
         =
         - \lim_{N\to\infty}
         \frac{1}{\widetilde{p}_N m_N} 
         \sum_{\sigma \in \Sigma} 
         \mathbb E \left( \widetilde{S}_\sigma \sum_{c = 1}^4 \widetilde{p}_{\sigma, c}(\beta_0)\left(\widetilde{W}_{\sigma, c} - \overline{\widetilde{W}}_\sigma \right) \left(\widetilde{W}_{\sigma, c} - \overline{\widetilde{W}}_\sigma \right)'\right),
     \end{align*}
     exists and is negative definite.
\end{sectionassumption}

\begin{sectionassumption}\label{a:nonempty}
    $N^2 \widetilde{\rho}_N>0$ for large enough $N$.
\end{sectionassumption}
Assumption \ref{a:nonempty} is analogous to Assumption 4.(ii) of \citet{Graham17}, and ensures that the average node degree remains positive; consequently the network remains asymptotically non-empty.

The consistency and asymptotic normality results are presented next.
\begin{sectiontheorem}[Consistency]\label{thm:consistency_nofe} Let Assumptions \ref{a:logit_nofe}, \ref{a:compact_and_interior}, \ref{a:random_sampling_nodes}, \ref{a:identification_nofe} and \ref{a:nonempty} hold. Then, for $\widehat \beta$ as defined in \eqref{eq:CCMLE_nofe} we have
$\widehat{\beta} - \beta_0 \to_p 0$ as $N\to \infty$.
\end{sectiontheorem}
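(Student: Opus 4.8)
The plan is to adapt the proof of Theorem \ref{thm:consistency} to the node-level heterogeneity structure, the only substantive changes being the scaling constant and the combinatorial bookkeeping. First I would introduce the scaled objectives
\[
\widehat Q_N(\beta) = \frac{1}{m_N \widetilde p_N}\sum_{\sigma\in\Sigma}\widetilde l_\sigma(\beta),
\qquad
Q_{N,0}(\beta) = \frac{1}{m_N \widetilde p_N}\sum_{\sigma\in\Sigma}\mathbb E[\widetilde l_\sigma(\beta)],
\]
where now $\widetilde p_N = \mathrm P(\widetilde S_\sigma = 1) = O(\widetilde\rho_N^2)$, since under node-level effects an informative wiring consists of only two hyperedges (Section \ref{sec:suffnofe} and Figure \ref{fig:alternative_subhypergraphs_111111}). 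Because $\sum_{\sigma}\mathbb E[\widetilde l_\sigma(\beta)] = |\Sigma|\,\mathbb E[\widetilde l_\sigma(\beta)\mid\widetilde S_\sigma=1]\,\mathrm P(\widetilde S_\sigma=1)$, this normalisation keeps $Q_{N,0}(\beta) = O(1)$. Concavity of $\widehat Q_N$ follows verbatim from $\widetilde H_\sigma(\beta)$ being a negative sum of outer products, and Assumption \ref{a:identification_nofe} makes the limiting Hessian negative definite at $\beta_0$, so $\beta_0$ is the unique maximiser of $Q_0(\beta) = \lim_N Q_{N,0}(\beta)$. It therefore suffices to establish pointwise convergence $\widehat Q_N(\beta)\to_p Q_0(\beta)$.

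For pointwise convergence I would apply Chebyshev's inequality to $m_N^{-1}\widetilde p_N^{-1}\sum_{\sigma}(\widetilde l_\sigma(\beta) - \mathbb E[\widetilde l_\sigma(\beta)])$ after checking that the centred summand satisfies conditions (i)--(iv) of the node-level analogue of Lemma \ref{lemma:main2}. The crucial input is the variance decomposition $\sum_{q_1,q_2,q_3}\mathrm C_{(q_1,q_2,q_3),N}$, obtained by combining the hexad-pair counts $m_{(q_1,q_2,q_3),N} = O(N^{12-(q_1+q_2+q_3)})$ of Lemma \ref{lemma:count} with the minimal number of distinct hyperedges $\Delta$ needed for a hexad pair with a given overlap pattern to be jointly informative. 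Here the relevant rates become $\mathrm C_{(1,1,1),N} = O(N^9\widetilde\rho_N^3)$, $\mathrm C_{(2,1,1),N} = O(N^8\widetilde\rho_N^3)$ (and permutations), $\mathrm C_{(2,2,1),N} = O(N^7\widetilde\rho_N^3)$ (and permutations), $\mathrm C_{(2,2,2),N} = O(N^6\widetilde\rho_N^2)$, with the patterns that have at least one $q_i = 0$ contributing at most $O(N^{12-(q_1+q_2+q_3)}\widetilde\rho_N^4)$. Dividing by $(m_N\widetilde p_N)^2 = O(N^{12}\widetilde\rho_N^4)$ yields
\begin{align*}
\mathbb E\left[\left(\frac{1}{m_N\widetilde p_N}\sum_{\sigma\in\Sigma}\bigl(\widetilde l_\sigma(\beta) - \mathbb E[\widetilde l_\sigma(\beta)]\bigr)\right)^{2}\right]
&= O\!\left(\frac1N\right) + O\!\left(\frac{1}{N^3\widetilde\rho_N}\right) + O\!\left(\frac{1}{N^4\widetilde\rho_N}\right) \\
&\quad + O\!\left(\frac{1}{N^5\widetilde\rho_N}\right) + O\!\left(\frac{1}{N^6\widetilde\rho_N^2}\right).
\end{align*}
Under Assumption \ref{a:nonempty} the average node degree $N^2\widetilde\rho_N$ is bounded away from zero, so $N^3\widetilde\rho_N\to\infty$ and $N^6\widetilde\rho_N^2 = N^2\,(N^2\widetilde\rho_N)^2\to\infty$, and every term on the right-hand side vanishes. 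Hence $\widehat Q_N(\beta) - Q_{N,0}(\beta) = o_p(1)$; adding $Q_{N,0}(\beta) - Q_0(\beta) = o(1)$ gives pointwise convergence, and the standard concave-objective argument for extremum estimators delivers $\widehat\beta\to_p\beta_0$.

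The point worth emphasising, and the reason the hypotheses here require only asymptotic non-emptiness rather than $\delta < 3/2$, is exactly that $\widetilde p_N = O(\widetilde\rho_N^2)$ instead of $O(\rho_N^4)$: two-hyperedge informative wirings accumulate fast enough that a barely non-empty network suffices. The main obstacle I anticipate is establishing the node-level version of Lemma \ref{lemma:main2} --- in particular, the combinatorial step fixing $\Delta$ for each overlap pattern. The key observation is that, since the two hyperedges inside any informative wiring of the degree sequence $(1,1,1,1,1,1)$ share no node, two distinct hexads can have at most one hyperedge in common unless they coincide; this caps the number of shared informative hyperedges and produces $\Delta = 3$ for every pattern with a common node in all three parts, $\Delta = 4$ when some part has no common node, and $\Delta = 2$ only for identical hexads. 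The rest of the argument is a routine transcription of the dyad-level proof.
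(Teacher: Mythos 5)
Your proposal is correct and follows essentially the same route as the paper: Chebyshev's inequality applied to the centred hexad log-likelihoods, the node-level analogue of Lemma \ref{lemma:main2} (the paper's Lemma \ref{lemma:main2no}, with exactly the rates you derive, driven by $\Delta=4$, $3$, $3$, $2$ across the overlap patterns), division by $(m_N\widetilde p_N)^2=O(N^{12}\widetilde\rho_N^4)$, and Assumption \ref{a:nonempty} to kill every term, with the concavity/identification argument imported verbatim from Theorem \ref{thm:consistency}. One small wording caveat: two distinct hexads can in fact share two hyperedges (e.g.\ under overlap $(1,2,1)$ via $[i_1,j_1,k_1]$ and $[i_1,j_2,k_1]$); what is true, and what you in effect use, is that they can share at most one hyperedge that belongs to both of their informative wirings, since the two hyperedges of any $(1,1,1,1,1,1)$ wiring are node-disjoint---this is precisely the paper's argument for $\Delta=3$ in the mixed patterns.
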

\begin{sectiontheorem}[Asymptotic normality]\label{thm:an_nofe} 
Let Assumptions \ref{a:logit_nofe}, \ref{a:compact_and_interior}, \ref{a:random_sampling_nodes}, \ref{a:identification_nofe} and \ref{a:nonempty} hold. Then, for $\widehat \beta$ as defined in \eqref{eq:CCMLE_nofe} and for any non-random $K\times 1$ vector $c$ we have
\begin{align*}
    \sqrt{N^3 \widetilde{\rho} _N}
    \frac
    { c'(\widehat{\beta} - \beta_0)}
    {\sqrt{c'\widetilde{\Gamma}_0^{-1} \overline{\widetilde{M}}_H \widetilde{\Gamma}_0^{-1} c}}
    \to_d \mathcal{N}(0,2^6),
\end{align*}
where $\overline{\widetilde{M}}_H$ is as defined in equation \eqref{eq:Mnofe}.
\end{sectiontheorem}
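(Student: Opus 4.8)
The plan is to follow the same architecture as the proof of Theorem \ref{thm:an} in Section \ref{app:th1}: decompose the variance of the score, construct a triad-level H\'{a}jek projection, show the rescaled score and its projection are asymptotically equivalent, apply a Lindeberg-type CLT to the projection, establish convergence of the rescaled sample Hessian, and assemble everything with a mean-value expansion. Writing $\widetilde Z_N = m_N^{-1}\sum_{\sigma\in\Sigma}\widetilde s_\sigma(\beta_0)$ for the score and $\mathcal I_{ijk}=(X_{ijk},\widetilde{\mathrm F}_{ijk},\varepsilon_{ijk})$ for the triad-level information, the relevant normalisation here is $\sqrt{N^3/\widetilde\rho_N^{3}}$ rather than $\sqrt{N^3/\rho_N^{7}}$: informative wirings now consist of two rather than four hyperedges, so $\widetilde p_N = O(\widetilde\rho_N^2)$ and, as the variance calculation below shows, $\mathrm{Var}(\widetilde Z_N) = O(\widetilde\rho_N^{3}/N^{3})$. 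The $\sqrt{N^3\widetilde\rho_N}$ rate for $\widehat\beta$ then follows via $\sqrt{N^3\widetilde\rho_N}\,\widetilde\rho_N^{-2} = \sqrt{N^3/\widetilde\rho_N^{3}}$.

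The one genuinely new ingredient is the variance decomposition of $\widetilde Z_N$, for which I would establish an analogue of Lemma \ref{lemma:main2} tailored to two-hyperedge wirings; the counting in Lemma \ref{lemma:count} carries over unchanged. Two structural facts drive the conclusion. First, since $\mathbb E[\widetilde s_\sigma\mid\mathbf X,\widetilde{\mathbf F}]=0$ (the analogue of equation \eqref{eq:zeroscore}) and two hexads sharing no node in at least one part cannot share any triad and are hence conditionally independent given $(\mathbf X,\widetilde{\mathbf F})$, every covariance component indexed by an overlap pattern with a zero entry vanishes exactly; only pairs sharing at least one node in each part contribute. Second, in every wiring compatible with the degree sequence $(1,1,1,1,1,1)$ the two hyperedges are node-disjoint, so two \emph{distinct} hexads can share at most one hyperedge; hence the minimal number of distinct hyperedges required for both members of a contributing pair to be informative is three, except for the diagonal pattern, where a single hexad needs two. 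Combining these rates with the hexad-pair counts $O(N^{12-q})$ gives leading term $\widetilde{\mathcal C}_{(1,1,1),N} = O(N^{9}\widetilde\rho_N^{3})$ and residual (diagonal) term $\widetilde{\mathcal C}_{(2,2,2),N} = O(N^{6}\widetilde\rho_N^{2})$, the remaining patterns being strictly dominated. Dividing by $m_N^2 = O(N^{12})$ yields $\mathrm{Var}(\widetilde Z_N) = O(\widetilde\rho_N^{3}/N^{3}) + O(\widetilde\rho_N^{2}/N^{6})$, and the second term is $o(\widetilde\rho_N^{3}/N^{3})$ because Assumption \ref{a:nonempty} forces $N^{3}\widetilde\rho_N\to\infty$. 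This is exactly where the node-level model differs from the dyad-level one: with only two hyperedges per wiring the triad-level (H\'{a}jek) component always dominates under mere non-emptiness, so no upper bound on sparsity beyond Assumption \ref{a:nonempty} is needed and the fully degenerate (Gaussian-chaos) regime cannot arise.

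The remaining steps transfer essentially verbatim from Sections \ref{sect:projection}--\ref{sect:hessian}. I would define $\overline{\widetilde s}_{ijk} = \mathbb E[\widetilde s_\sigma\mid\mathcal I_{ijk}]$, which is zero unless $i$, $j$, $k$ each coincide with one of the hexad's nodes in the respective part, and verify that $\widetilde Z_N^{*} = (2^3/N^3)\sum_{ijk}\overline{\widetilde s}_{ijk}$ is the H\'{a}jek projection of $\widetilde Z_N$ onto sums of triad-level functions, exactly as in Section \ref{sect:projection}. Orthogonality of the projection error together with the variance decomposition then gives $\mathbb E\big[\big(\sqrt{N^3/\widetilde\rho_N^{3}}\,(\widetilde Z_N - \widetilde Z_N^{*})\big)^2\big]\to 0$ as in Section \ref{sect:projequiv}. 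The CLT for $\widetilde Z_N^{*}$---a normalised sum of conditionally independent triad-level summands---follows from the Lindeberg/Stein telescoping argument of \citet{Chatterjee06} used in Section \ref{sect:an1proj}, with Assumption \ref{a:compact_and_interior} supplying the uniform third-moment bound and producing the variance factor $2^6$. Convergence of the rescaled Hessian, $\widetilde p_N^{-1}m_N^{-1}\sum_{\sigma}\widetilde H_\sigma(\widehat\beta)\to_p\widetilde\Gamma_0$, follows from pointwise convergence (again via the analogue of Lemma \ref{lemma:main2}), stochastic equicontinuity using that the third derivatives of the per-hexad log-likelihood are uniformly bounded under Assumption \ref{a:compact_and_interior}, uniform convergence on the compact $\mathcal B$, and consistency of $\widehat\beta$ from Theorem \ref{thm:consistency_nofe}, mirroring Section \ref{sect:hessian}. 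The mean-value expansion $\sqrt{N^3\widetilde\rho_N}(\widehat\beta-\beta_0) = -\big[\widetilde\rho_N^{-2}\nabla_{\beta\beta}\widetilde l_N(\overline\beta)\big]^{-1}\sqrt{N^3/\widetilde\rho_N^{3}}\,\widetilde Z_N + o_p(1)$ then combines these into the stated limit. I expect the main obstacle to be the variance decomposition---specifically, confirming that the fully degenerate diagonal component is asymptotically negligible under non-emptiness alone; the rest is routine adaptation of the dyad-level arguments.
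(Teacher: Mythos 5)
Your proposal is correct and follows essentially the same route as the paper's own proof: the two-hyperedge analogue of Lemma \ref{lemma:main2} (the paper's Lemma \ref{lemma:main2no}) delivers exactly the rates you state, with the leading $(1,1,1)$ term $O(N^9\widetilde\rho_N^3)$ and the diagonal term $O(N^6\widetilde\rho_N^2)$ rendered negligible by $N^3\widetilde\rho_N\to\infty$ under Assumption \ref{a:nonempty}, so no further sparsity bound is needed. The remaining steps you outline---the triad-level H\'{a}jek projection, asymptotic equivalence of $\widetilde Z_N$ and $\widetilde Z_N^*$ at the $\sqrt{N^3/\widetilde\rho_N^3}$ scale, the Chatterjee-type CLT, Hessian convergence with the $\widetilde p_N=O(\widetilde\rho_N^2)$ scaling, and the mean-value expansion---mirror Sections \ref{sect:projection}--\ref{sect:hessian} just as the paper's proof of Theorem \ref{thm:an_nofe} does.
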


\subsection{Proofs}\label{sec:nofe_proofs}

\subsubsection{A preliminary lemma}
We first note that Lemma \ref{lemma:count} is still valid in the current setting since it is independent of the underlying heterogeneity structure.
Below we provide the counterpart of Lemma \ref{lemma:main2} for the present setting with node effects.

\begin{sectionlemma}\label{lemma:main2no}
Let $\ell_\sigma = \ell(X_\sigma, \mathrm{F}_\sigma, \varepsilon_\sigma)$ be a random vector that is bounded for all $\sigma$. Suppose the following properties hold for each $\sigma\in \Sigma$:
(i) $\mathbb{E} [\ell_\sigma] = 0 $; 
(ii) $\ell_\sigma \indep \ell_{\sigma'} | \mathbf{X}, \mathbf{F}$ for all $\sigma'$ such that $\overline{\mathrm{q}}_i(\sigma,\sigma') = 0$ for at least one $i$; 
(iii) $\ell_\sigma \indep \ell_{\sigma'}$ for all $\sigma'$ with $\overline{\mathrm{q}}_1(\sigma,\sigma')=\overline{\mathrm{q}}_2(\sigma,\sigma')=\overline{\mathrm{q}}_3(\sigma,\sigma')=0$; 
(iv) for $\widetilde{S}_\sigma$ as defined in \eqref{eq:suff1}, $\ell_\sigma=0$ if $\widetilde S _{\sigma}=0$. 
Let $\widetilde \rho _N$ be as defined in \eqref{eq:rhotilde}. 

Then, 
\begin{align*}
    \mathbb{E}\left[
    \left(
    \sum_{\sigma \in \Sigma}
    \ell_{\sigma}
    \right)^2
    \right]
    =
    \sum_{q_1=0}^{2}\sum_{q_2=0}^{2}\sum_{q_3=0}^{2}
    \mathrm{C}_{(q_1,q_2,q_3),N},
\end{align*}
where
\begin{enumerate}
    \item $\mathrm{C}_{(0,0,0),N} = 0 $;
    \item For $(q_1,q_2,q_3)$ with at least one 0 and at least one non-zero $q_i$, $\mathrm{C}_{(q_1,q_2,q_3),N}=O(N^{12-(q_1+q_2+q_3)} \widetilde \rho _N^4)$;
    \item $\mathrm{C}_{(1,1,1),N} = O(N^9 \widetilde \rho _N^3)$;
    \item 
    $\mathrm{C}_{(2,1,1),N} = \mathrm{C}_{(1,2,1),N} =
    \mathrm{C}_{(1,1,2),N} = O(N^8 \widetilde \rho _N^3)$;
    \item $\mathrm{C}_{(2,2,1),N} = \mathrm{C}_{(1,2,2),N} =
    \mathrm{C}_{(2,1,2),N} = O(N^7 \widetilde \rho _N^3)$;
    \item $\mathrm{C}_{(2,2,2),N} = O(N^6 \widetilde \rho _N^2)$.
\end{enumerate}
If, in addition, $\mathbb{E}[\ell_\sigma | \mathbf{X}, \mathbf{F}] = 0$, then $\mathrm{C}_{(q_1,q_2,q_3),N} = 0$ whenever $q_i = 0$ for any $i$. 

We also have
\begin{align*}
    \mathrm{C}_{(1,1,1),N}
    =
    2^6(N(N-1)(N-2))^3
    \mathrm{Cov}(\ell_{(i,j,k,a,b,c))} , \ell_{(i,j,k,d,e,f)} ),
\end{align*}
for arbitrary triads $(i,j,k)$, $(a,b,c)$ and $(d,e,f)$ that share no nodes with each other.
\end{sectionlemma}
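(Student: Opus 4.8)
The plan is to follow the proof of Lemma \ref{lemma:main2} essentially verbatim; the architecture is identical. First I would decompose $\mathbb{E}[(\sum_{\sigma}\ell_\sigma)^2]=\sum_{q_1,q_2,q_3}\mathrm{C}_{(q_1,q_2,q_3),N}$ by the overlap pattern $(q_1,q_2,q_3)$ of common nodes between hexad pairs as in \eqref{eq:defCqN_corrected} (with $S$ replaced by $\widetilde S$), use boundedness of $\ell_\sigma$ together with condition (iv) to obtain $\mathbb{E}[\ell_\sigma\ell_{\sigma'}]=O(\mathrm{P}(\widetilde S_\sigma\widetilde S_{\sigma'}=1))$ exactly as in \eqref{eq:pdelta1b_corrected}, and then bound $\mathrm{P}(\widetilde S_\sigma\widetilde S_{\sigma'}=1)=O(\widetilde\rho_N^{\Delta})$, where $\Delta$ is the minimal number of distinct hyperedges whose formation is required for both hexads to be informative. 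Multiplying $\widetilde\rho_N^{\Delta}$ by the count $m_{(q_1,q_2,q_3),N}=O(N^{12-(q_1+q_2+q_3)})$ from Lemma \ref{lemma:count} then yields items 1--6. Part II is likewise unchanged: under the extra hypothesis $\mathbb{E}[\ell_\sigma\mid\mathbf X,\mathbf F]=0$, condition (ii) and iterated expectations force $\mathrm{C}_{(q_1,q_2,q_3),N}=0$ whenever some $q_i=0$, and the explicit coefficient for $\mathrm{C}_{(1,1,1),N}$ follows from the same count of ordered hexad pairs sharing exactly one node in each part ($N^3$ for the common triad, $(N-1)^3(N-2)^3$ for the remaining distinct nodes, $2^6$ for the within-part orderings), using $\mathbb{E}[\ell_\sigma]=0$.

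The only genuinely new point, and the only place requiring care, is the computation of $\Delta$ for each overlap pattern, since under node-level heterogeneity an informative wiring consists of only two hyperedges (degree sequence $(1,1,1,1,1,1)$; see Figure \ref{fig:alternative_subhypergraphs_111111}) rather than four. The structural fact I would isolate is that \emph{the two hyperedges of every informative wiring use disjoint node sets}. From this it follows that two distinct hexads can never share both hyperedges of a common wiring, since that would force all six nodes to coincide; hence, across every overlap pattern other than $(2,2,2)$, two distinct hexads share at most one informative hyperedge. Therefore: for patterns with at least one zero and at least one nonzero entry, $\sigma$ and $\sigma'$ cannot share a triad at all, so $\Delta=2+2=4$; for $(1,1,1)$, for the ``one $2$, two $1$s'' patterns, and for the ``two $2$s, one $1$'' patterns alike, exactly one informative hyperedge may be shared, leaving one further hyperedge needed from each hexad, so $\Delta=1+1+1=3$ in all of these cases; and $(2,2,2)$ forces $\sigma=\sigma'$, so $\Delta=2$. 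This differs from the dyad-level model, where in the ``two $2$s, one $1$'' patterns one could share two of the four informative hyperedges and $\Delta$ dropped to $6$; that mechanism is unavailable here precisely because of the disjointness just noted.

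The main obstacle, then, is merely to state and verify this disjointness claim against all four wirings $\widetilde S_{\sigma,1},\dots,\widetilde S_{\sigma,4}$ and to confirm that in the ``two $2$s, one $1$'' case a single shared hyperedge is indeed compatible with both hexads being informative, so that $\Delta$ is exactly $3$ rather than larger. Everything downstream — substitution into \eqref{eq:varmdecomp_corrected}, the bookkeeping with Lemma \ref{lemma:count}, and the coefficient computation for $\mathrm{C}_{(1,1,1),N}$ — is then word-for-word identical to the proof of Lemma \ref{lemma:main2}.
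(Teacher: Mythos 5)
Your proposal is correct and follows essentially the same route as the paper: the paper's proof also reuses the argument of Lemma \ref{lemma:main2} wholesale and only recomputes the minimal hyperedge counts, with the key observation---identical to your disjointness claim---that every informative wiring under node-level heterogeneity consists of two node-disjoint hyperedges, so two distinct hexads can share at most one informative hyperedge, giving $\Delta=4$, $3$, $3$, $3$, $2$ across the overlap patterns. (Your worry about achievability of $\Delta=3$ is unnecessary for the stated $O(\cdot)$ bounds, since a larger $\Delta$ would only make the probability smaller.)
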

\begin{proof}
The proof is identical to the proof of Lemma \ref{lemma:main2}, except for the parts that involve the unconditional probability of hyperedge formation, $\widetilde \rho _N$, since in the current setting informative tetrads require formation of two specific hyperedges (as opposed to four in the model with dyad effects). Therefore, we only focus on the parts involving $\widetilde \rho _N$. 

Recall from the proof of Lemma \ref{lemma:main2} that 
\begin{align*}
    \mathbb{E}\left[\left(\sum_{\sigma \in \Sigma} \ell_{\sigma}\right)^2\right] 
    &=
    \sum_{q_1=0}^{2}\sum_{q_2=0}^{2}\sum_{q_3=0}^{2} \mathrm{C}_{(q_1,q_2,q_3),N},
\end{align*}
with
\begin{align*}
    \mathrm{C}_{(q_1,q_2,q_3),N}
    = 
    \sum_{\sigma \in \Sigma} \sum_{\sigma' \in \Sigma} \mathbb{E}[\ell_{\sigma} \ell_{\sigma'}] \cdot 1\{\overline{\mathrm{q}}_1(\sigma,\sigma')=q_1, \overline{\mathrm{q}}_2(\sigma,\sigma')=q_2, \overline{\mathrm{q}}_3(\sigma,\sigma')=q_3\}.
\end{align*}
Analogous to \eqref{eq:pdelta1b_corrected} we also obtain
\begin{align*}
    \mathbb{E}[\ell_{\sigma} \ell_{\sigma'}]
    &=
    O(\mathrm{P}(\widetilde S _{\sigma} \widetilde S _{\sigma'} = 1)).    
\end{align*}
As before, we will determine the minimum number of distinct hyperedges required for both $\sigma$ and $\sigma'$ to be informative. As mentioned at the beginning, in the current setting an informative hexad requires the formation of two specific hyperedges.

As before, under $q_1=q_2=q_3=0$ the two hexads share no common nodes and therefore by condition (iii) we have $\mathbb{E}[\ell_\sigma \ell_{\sigma'}] = 0$.

When $(q_1,q_2,q_3)$ contains at least one 0 and at least one non-zero $q_i$, the two hexads $\sigma$ and $\sigma'$ will not share a common hyperedge. Therefore, for both hexads to be informative we will need a total of four hyperedges (two per hexad), yielding $\mathrm{P}(\widetilde S _{\sigma} \widetilde S _{\sigma'} = 1) = O(\widetilde \rho _N^4)$

In the case $(q_1,q_2,q_3)=(1,1,1)$ both hexads can have one common hyperedge. Therefore, a minimum of three hyperedges will be sufficient for making both hexads informative, yielding $\mathrm{P}(\widetilde S _{\sigma} \widetilde S _{\sigma'} = 1) = O(\widetilde \rho _N^3)$

The case where $(q_1,q_2,q_3)$ contains a mix of 1s and 2s is a subtle one: Under this pattern, two hexads can share two common hyperedges but these hyperedges will have at least one common node. Two such hyperedges cannot belong to an informative hexad because all informative hexads in the setting at hand are characterised by hyperedges that share no nodes. Consequently, under this scenario two informative hexads can share one hyperedge only --- otherwise they cannot be informative. Therefore, $\mathrm{P}(\widetilde S _{\sigma} \widetilde S _{\sigma'} = 1)= O(\widetilde \rho_N^3)$ for any $(q_1,q_2,q_3)$ containing both 1s and 2s.

Finally, the pattern $(q_1,q_2,q_3)=(2,2,2)$ corresponds to the case where the two hexads are simply identical. Therefore, $\mathrm{P}(\widetilde S _{\sigma} \widetilde S _{\sigma'} = 1) = O(\widetilde \rho _N^2)$ under this scenario.

Combining all the results obtained so far with Lemma \ref{lemma:count} confirms the first part of the lemma. The proof of the second part of the lemma is identical to the corresponding part in the proof of Lemma \ref{lemma:main2}.
\end{proof}

\subsubsection{Proof of consistency}

\begin{proof}[Proof of Theorem \ref{thm:consistency_nofe}] The proof is mostly analogous to the proof of Theorem \ref{thm:consistency}. The main difference occurs in proving
\begin{align}
    \frac{1}{m_N \widetilde{p}_N}
    \sum_{\sigma \in \Sigma}
    \widetilde{l}_\sigma(\beta) - \mathbb{E}[\widetilde{l}_\sigma(\beta)]
    \to_p 
    0,
    \label{eq:Cconst}
\end{align}
where $\widetilde{l}_\sigma(\beta)$ is as defined in Section \ref{sec:estimation_nofe}. Applying Chebyshev's inequality, we have that for any $\epsilon > 0$,
\begin{align*}
    \mathrm{P}\left( 
    \left| \frac{
    \sum_{\sigma \in \Sigma} 
    \widetilde{l}_\sigma(\beta) - \mathbb{E}[\widetilde{l}_\sigma(\beta)]
                }
                 {m_N \widetilde{p}_N} \right| > \epsilon 
            \right) 
    \leq
    \frac{1}{\epsilon^2}
    \mathbb{E} 
    \left[
    \left(
        \frac{\sum_{\sigma \in \Sigma} 
    \widetilde{l}_\sigma(\beta) - \mathbb{E}[\widetilde{l}_\sigma(\beta)]}
        {m_N \widetilde{p}_N}
    \right)^2
    \right].
\end{align*}
To bound the expectation on the right hand side of the above display, we will now invoke Lemma \ref{lemma:main2no} and obtain
\begin{align}
    \mathbb{E} 
    \left[
    \left(
        \frac{\sum_{\sigma \in \Sigma} 
    \widetilde{l}_\sigma(\beta) - \mathbb{E}[\widetilde{l}_\sigma(\beta)]}
        {m_N \widetilde{p}_N}
    \right)^2
    \right]
    &=
    O\left(\frac{1}{N^{12}\widetilde{\rho}_N^4}\right)
        [
            O(\widetilde{\rho}_N^4 N^{11})
            +
            O(\widetilde{\rho}_N^3 N^9)
            \notag
            \\
            & \qquad \qquad +
            O(\widetilde{\rho}_N^3 N^8)
            +
            O(\widetilde{\rho}_N^3 N^7)
            +
            O(\widetilde{\rho}_N^2 N^6)
        ]
    \notag
    \\
    &=
    O\left(
    \frac{1}{N}
    \right)
    +
    O\left(
    \frac{1}{N^3 \widetilde{\rho}_N}
    \right)
    \notag
    \\
    & \qquad \qquad +
    O\left(
    \frac{1}{N^4 \widetilde{\rho}_N}
    \right)
    +
    O\left(
    \frac{1}{N^5 \widetilde{\rho}_N}
    \right)
    +
    O\left(
    \frac{1}{N^6 \widetilde{\rho}_N^2}
    \right).
    \label{eq:disc2}
\end{align}
By Assumption \ref{a:nonempty}, $N^3 \widetilde \rho _N \to \infty$ as $N\to \infty$. Therefore, all the terms in the last expression converge to zero with $N$. Consequently, \eqref{eq:Cconst} holds. The rest of the proof proceeds analogously to the proof of Theorem \ref{thm:consistency}.
\end{proof}

\subsubsection{Proof of asymptotic normality}\label{sec:tsimplean}
\begin{proof}[Proof of Theorem \ref{thm:an_nofe}]
The main differences between the proofs of Theorems \ref{thm:an} and \ref{thm:an_nofe} again arise due to the differences in the identifying wirings and the related differences in Lemmas \ref{lemma:main2} and \ref{lemma:main2no}. Let the conditional likelihood score be given by
\begin{align*}
    \widetilde{Z}_N(\beta) = \frac{1}{N^3(N-1)^3}\sum_{\sigma \in \Sigma} \widetilde s_\sigma (\beta).
\end{align*}
By Lemma \ref{lemma:main2no}, it can be shown that 
\begin{align}
    \mathrm{Var}(\widetilde{Z}_N(\beta_0))
    &=
    O\left( \frac{1}{N^{12}} \right)
    \left[ 
        O(\widetilde{\rho}_N^3 N^9)
        +
        O(\widetilde{\rho}_N^3 N^8)
        +
        O(\widetilde{\rho}_N^3 N^7)
        +
        O(\widetilde{\rho}_N^2 N^6)
    \right]
    \notag
    \\
    &=
    O\left( \frac{\widetilde{\rho}_N^3 }{N^{3}} \right)
    +
    O\left( \frac{\widetilde{\rho}_N^3 }{N^{4}} \right)
    +
    O\left( \frac{\widetilde{\rho}_N^3 }{N^{5}} \right)
    +
    O\left( \frac{\widetilde{\rho}_N^2 }{N^{6}} \right),
    \label{eq:andisc1}
\end{align}
and consequently,
\begin{align*}
    \mathrm{Var}\left(\sqrt{\frac{N^3}{\widetilde{\rho}_N^3} }
    \widetilde{Z}_{N}(\beta_0)\right)
    &=
    O(1) 
    + 
    O\left( \frac{1}{N} \right)
    +
    O\left( \frac{1}{N^2} \right)
    +
    O\left( \frac{1}{N^3 \widetilde{\rho}_N } \right).
\end{align*}
Notice that the final term goes to zero as $N\to \infty$ since $N^3\widetilde{\rho}_N\to\infty$  under Assumption \ref{a:nonempty}.
This establishes that the rate of the variance of the score is $\sqrt{N^3/\widetilde{\rho}_N^3}$. 
As before, the explicit expression for the leading term can be obtained by using the final result of Lemma \ref{lemma:main2no}:
\begin{align*}
    \mathrm{Var}\left(\sqrt{\frac{N^3}{\widetilde{\rho}_N^3} }
    \widetilde{Z}_{N}(\beta_0)\right)
    &=
    2^6 
    \frac
    {\mathrm{Cov}(\widetilde{s}_{(i,j,k,a,b,c)} , \widetilde{s}_{(i,j,k,d,e,f)} )}
    {\widetilde \rho _N^3} + o(1).
\end{align*}
Apart from differences in definitions and scaling adjustments (due to informative wirings requiring two and not four hyperedges) the rest of the proof follows along exactly the same lines as the proof of Theorem \ref{thm:an}. In particular, 
defining the projection
\begin{align*}
    \overline{\widetilde{s}}_{ijk} = \mathbb E [\widetilde{s}_{a,a'|b,b'|c,c'} | X_{ijk}, \widetilde{\mathrm{F}}_{ijk},\varepsilon_{ijk}]
\end{align*}
it can be proved that the appropriate H\'{a}jek projection is given by
\begin{align*}
    \widetilde{Z}_N^*
    &= 
    \frac{2^3}{N^3}
    \underset{1 \leq i,j,k \leq N}{\sum \sum \sum }
    \overline{\widetilde s}_{ijk}
\end{align*}
and it is asymptotically equivalent to the score $\widetilde{Z}_N$ in the sense that 
\begin{align*}
    \mathbb{E}
    \left[ 
    \left(
        \sqrt{\frac{N^3}{\widetilde{\rho}_N^3}}
        \widetilde{Z}_{N} 
        - 
        \sqrt{\frac{N^3}{\widetilde{\rho}_N^3}} 
        \widetilde{Z}_{N}^*
    \right)^2
    \right]    
    \to 0,
    \qquad
    \text{as } N\to\infty.
\end{align*}
Switching to triad indexing with $h=1,\ldots,H$ where $H=N^3$, and letting $\widetilde{\mathbf F}$ denote the collection of $\widetilde{\mathrm{F}}_{ijk}$ across all triads, we next define
\begin{align}
    \widetilde{R} _h
    =
    \frac{c'\widetilde{\Gamma}_0^{-1} \left(\overline{\widetilde s}_h / \sqrt{\widetilde{\rho} _N^3} \right) }
    {\sqrt{c'\widetilde{\Gamma}_0^{-1} \overline{\widetilde{M}} _H \widetilde{\Gamma}_0^{-1} c }}
    \qquad
    \text{where}
    \qquad
    \overline{\widetilde{M}}_H 
    = 
    \frac{1}{H{\widetilde \rho}_N^{3}} \sum_{h=1}^H  
    \mathbb{E} [\overline{\widetilde s}_h \overline{\widetilde s}_h' | \mathbf{X}, \mathbf{\widetilde F}].
    \label{eq:Mnofe}
\end{align}
Then, by arguments analogous to those made in Section \ref{sect:an1proj}, one obtains
\begin{align*}
    \frac{1}{\sqrt{H}} \sum_{h=1}^H \widetilde{R}_h \to_d \mathcal N (0,1),
\end{align*}
and consequently,
\begin{align*}
    \frac{c'\widetilde{\Gamma}_0^{-1}}
    {\sqrt{c'\widetilde{\Gamma}_0^{-1} \overline{\widetilde{M}} _H \widetilde{\Gamma}_0^{-1} c }}
    \sqrt{\frac{H}{\widetilde{\rho}_N^3}}
    \widetilde{Z}_N^*
    &=
    2^3 \frac{1}{\sqrt{H}}\sum_{h=1}^H \widetilde{R}_h
    \to_d
    \mathcal N (0,2^6).
\end{align*}
The final step is to establish
\begin{align*}
    \frac{1}{m_N \widetilde{p}_N} \sum_{\sigma \in \Sigma} 
    \widetilde{H}_{\sigma}(\widehat \beta) 
    \to_p 
    \widetilde{\Gamma}_0,
\end{align*}
which implies that
\begin{align*}
    \frac{1}{m_N \widetilde{\rho}_N^2} \sum_{\sigma \in \Sigma} 
    \widetilde{H}_{\sigma}(\widehat \beta) 
    \to_p 
    \widetilde{\Gamma}_0.
\end{align*}
This can be obtained by following the same steps as in Section \ref{sect:hessian}, while keeping in mind that $\widetilde{p}_N = \widetilde{\rho}_N^2(1-\widetilde{\rho}_N)^6=O(\widetilde{\rho}_N^2)$ since in the current heterogeneity structure a hexad requires the formation of two particular hyperedges (and the absence of all the remaining ones) to be informative. Also, in obtaining stochastic equicontinuity, we will now have the following: first, by invoking Lemma \ref{lemma:main2no}
\begin{align*}
    \frac{1}{\widetilde \rho _N^2  (N(N-1))^3} 
    \sum_{\sigma\in \Sigma} 
    (\widetilde S _\sigma 
    -
    \mathbb E [\widetilde S _\sigma ]
    )
    &=
    O_p\left(
        \sqrt{\max\left\{
        \frac{1}{N},
        \frac{1}{ N^3 \widetilde \rho _N },
        \frac{1}{ N^4 \widetilde \rho _N },
        \frac{1}{ N^5 \widetilde \rho _N },
        \frac{1}{ N^6 \widetilde \rho _N^2 }
        \right\}}
    \right)
    \\
    &=
    o_p\left( 1 \right),
\end{align*}
where we have used $N^3 \widetilde \rho _N \to \infty$ as implied by Assumption \ref{a:nonempty}. Next, 
\begin{align*}
    \frac{1}{\widetilde \rho _N^2 (N(N-1))^3} 
    \sum_{\sigma\in \Sigma} 
    \mathbb E [\widetilde S _\sigma ]
    &=
    \frac{1}{\widetilde \rho _N^2 (N(N-1))^3} 
    \sum_{\sigma\in \Sigma} 
    \mathrm{P} (\widetilde S _\sigma =1 )
    =
    O(1).
\end{align*}
Consequently, the counterpart of \eqref{eq:h3} of Section \ref{sect:hessian} becomes,
\begin{align*}
    \sup _{\beta \in \mathcal B}
    \left|\left|
    \frac{1}{\widetilde \rho _N^2}\nabla_{\beta\beta\beta_p}
    \widetilde l _N( \beta)
    \right| \right|_{F}
    &=
    O\left(1\right)
    +
    o_p\left(1\right)
    =
    O_p(1).
\end{align*}
The rest of the proof is analogous to the triadic model with dyad-effects. Combining all the results using an appropriate mean-value expansion confirms the statement of the theorem.
\end{proof}

\newpage

\section{Dyadic model}
\label{sec:dyadic}

In this part we analyse the dyadic link formation model in bipartite networks. The asymptotics of the dyadic link formation model has already been analysed in various flavours by \citet{Graham17} and \citet{Jochmans18}. The latter considers  bipartite networks but its theoretical structure is different from ours, so those results cannot directly be used here. \citet{Graham17}, on the other hand, does not consider a bipartite network. However the results we derive here are essentially the same as Graham's. Therefore, the derivations in this section are provided for sake of completeness. In particular, our aim here is not to present a complete theory for dyadic link formation in a bipartite network but simply to obtain several key results that we use to highlight the fundamental differences between the triadic link formation model with dyad-specific fixed effects and the dyadic link formation model with node-specific effects.

\subsection{The model}
We consider dyadic link formation, determined by the model
\begin{align}
    Y_{ij} = 1\{A_{i} + B_{j} + X_{ij}' \beta_0 - \varepsilon_{ij} > 0\},
    \label{eq:ofe_dyadic}
\end{align}
with the node-specific fixed effects $A_i$ and $B_j$, where $i=1,\ldots,N$ and $j=1,\ldots,N$. The ordered pair $(i,j)$ denotes a dyad consisting of two nodes, with node $i$ belonging to the first part and $j$ to the second part. The set of all $N^2$ dyads is given by $\mathbb D_N = \{1,\ldots,N\}^2$. The fixed effects belonging to the dyad $(i,j)$ are denoted by $\mathrm F _{ij} = (A_i,B_j)$. Similar to before, $X_{ij} \in \mathbb R^P$ is the dyad-specific observable covariate and $\beta_0$ is the associated $P \times 1$ parameter vector. The random shock $\varepsilon_{ij}$ is logistic and conditionally independent across dyads, as formalised in the next assumption.
\begin{assumption}[Likelihood]\label{a:logit_dyadic}
The conditional likelihood of the network $\mathbf Y = \mathbf y$ is
\begin{equation*}
    \mathrm{P}\left(\mathbf Y = \mathbf y | \mathbf X, \mathbf{F}\right) 
    = 
    \prod_{(i,j) \in \mathbb D_N} \mathrm{P} \left(\left. Y_{ij} = y_{ij} \right| X_{ij}, \mathrm{F}_{ij} \right),
\end{equation*}
where 
\begin{align*}
    \mathrm{P} \left(\left. Y_{ij} = y \right| X_{ij}, \mathrm{F}_{ij} \right) 
    = 
    \begin{cases}
    \Lambda\left( X_{ij}' \beta_0 + A_{i} + B_{j} \right) & \text{ if }y = 1, \\
    1-\Lambda\left( X_{ij}' \beta_0 + A_{i} + B_{j} \right) & \text{ if }y=0.
    \end{cases}
\end{align*}
\end{assumption}

\newpage

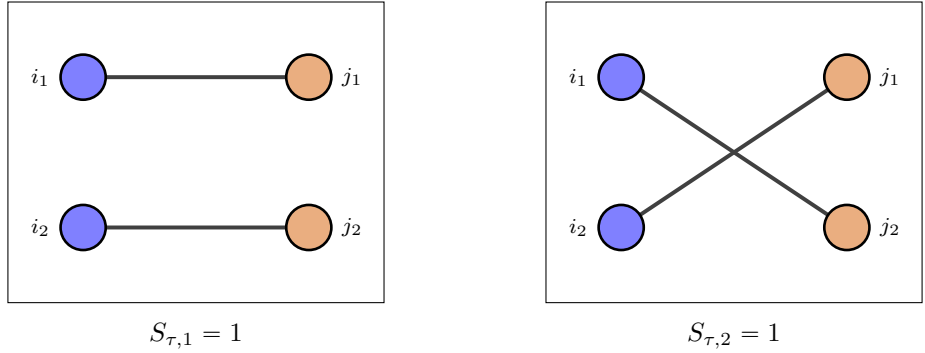
\begin{figure}[t]
    \centering
    \begin{subfigure}[b]{0.45\linewidth}
        \centering
        \begin{tikzpicture}
        \SetVertexStyle[FillOpacity = 0.5]
        \draw (0.0,0.0) rectangle (5,4);
        \Vertex[x = 1, y = 3, color = blue, label = $i_1$, position = left]{i1}
        \Vertex[x = 1, y = 1, color = blue, label = $i_2$, position = left]{i2}
        \Vertex[x = 4, y = 3, color = red, label = $j_1$, position = right]{j1}
        \Vertex[x = 4, y = 1, color = red, label = $j_2$, position = right]{j2}
        
        \Edge(i1)(j1)
        \Edge(i2)(j2)
        \end{tikzpicture}
        \caption*{$ S _{\tau, 1}=1$}
    \end{subfigure}
    \begin{subfigure}[b]{0.45\linewidth}
        \centering
        \begin{tikzpicture}
        \SetVertexStyle[FillOpacity = 0.5]
        
        \draw (0.0,0.0) rectangle (5,4);
        \Vertex[x = 1, y = 3, color = blue, label = $i_1$, position = left]{i1}
        \Vertex[x = 1, y = 1, color = blue, label = $i_2$, position = left]{i2}
        \Vertex[x = 4, y = 3, color = red, label = $j_1$, position = right]{j1}
        \Vertex[x = 4, y = 1, color = red, label = $j_2$, position = right]{j2}
        
        \Edge(i1)(j2)
        \Edge(i2)(j1)
        \end{tikzpicture}
        \caption*{$ S _{\tau, 2}=1$}
    \end{subfigure}
    \caption{Informative tetrads for the dyadic link formation model.}
    \label{fig:identifyingtetrads}
\end{figure}

\subsection{Sufficiency}
In dyadic link formation models, identification is based on tetrads where each node has degree one, yielding a degree sequence of $(1,1,1,1)$. The two identifying tetrads are illustrated in Figure \ref{fig:identifyingtetrads}, which correspond to the tetrads found earlier by \citet{Charbonneau17} and \citet{Jochmans18}. To formalise, let a generic tetrad be given by $\tau = (i_1,j_1,i_2,j_2)$ where $i_1,i_2$ are the nodes from the first part and the remaining nodes belong to the second part. We also let $X_\tau$ and $\mathrm F _\tau$ be the tetrad-specific covariates and fixed effects. Define,
\begin{align*}
    S _{\tau, 1}
    = 
    Y_{i_1 j_1}
    Y_{i_2 j_2}
    \overline Y_{i_1 j_2}
    \overline Y_{i_2 j_1}
    \qquad
    \text{and}
    \qquad
    S _{\tau, 2}
    = 
    \overline Y_{i_1 j_1}
    \overline Y_{i_2 j_2}
    Y_{i_1 j_2}
    Y_{i_2 j_1},
\end{align*}
where $\overline Y_{ij} = 1-Y_{ij}$. The case $S _{\tau,1}=1$ corresponds to the wiring on the left panel of Figure \ref{fig:identifyingtetrads} whereas $S _{\tau,2}=1$ corresponds to the second wiring in the same figure. Then, 
\begin{align}
    S _\tau = S _{\tau,1} + S _{\tau,2},
    \label{eq:Stildedefn_dyadic}
\end{align}
indicates whether the tetrad $\tau$ exhibits one of these two wirings.
The following theorem confirms that these two wirings are informative.

\begin{theorem}[Sufficiency]\label{thm:sufficiency_dyadic} 
If Assumption \ref{a:logit_dyadic} holds, then, for any tetrad $\tau = (i_1,j_1,i_2,j_2)$,
\begin{align*}
    \mathrm{P}( S _{\tau,1} = 1 |  S _{\tau} = 1, \mathrm{F}_{\tau}, X_{\tau}) 
    = \Lambda( W_{\tau}' \beta_0),
\end{align*}     
where $W_{\tau} = 
    \left(X_{i_1 j_1}+X_{i_2 j_2}\right) 
    - 
    \left(X_{i_1 j_2}+X_{i_2 j_1}\right) $.
\end{theorem}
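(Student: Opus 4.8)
The plan is to transcribe, nearly verbatim, the conditional-logit argument from the proof of Theorem~\ref{thm:sufficiency_overlapping}, replacing the $2^3$ triads of a hexad by the $2^2$ dyads attached to the nodes of $\tau$ and the dyad-level effects by the node-level effects $A_{i_1},A_{i_2},B_{j_1},B_{j_2}$. First I would introduce the bookkeeping constants
$$
\kappa_1 = \exp\bigl(A_{i_1}+A_{i_2}+B_{j_1}+B_{j_2}\bigr),
\qquad
\kappa_2 = \prod_{(i,j)} \frac{1}{1+\exp(A_i+B_j+X_{ij}'\beta_0)},
$$
where the product ranges over the four dyads $(i_1,j_1),(i_1,j_2),(i_2,j_1),(i_2,j_2)$. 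By Assumption~\ref{a:logit_dyadic} and the conditional independence of the $\varepsilon_{ij}$, the event $\{S_{\tau,1}=1\}$ has probability equal to the product of two logistic numerators (for the present links $(i_1,j_1)$ and $(i_2,j_2)$) and two ``$1-\Lambda$'' factors (for the absent links). Collecting terms,
$$
\mathrm{P}(S_{\tau,1}=1\mid X_\tau,\mathrm{F}_\tau)
= \kappa_1\kappa_2\exp\bigl((X_{i_1 j_1}+X_{i_2 j_2})'\beta_0\bigr),
$$
and by the same computation for the complementary wiring,
$$
\mathrm{P}(S_{\tau,2}=1\mid X_\tau,\mathrm{F}_\tau)
= \kappa_1\kappa_2\exp\bigl((X_{i_1 j_2}+X_{i_2 j_1})'\beta_0\bigr).
$$

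The key observation, which I would highlight explicitly, is that the \emph{same} factor $\kappa_1$ appears in both expressions: both wirings have degree sequence $(1,1,1,1)$, so each contains every node-level fixed effect exactly once, and the exponentiated sums of fixed effects therefore coincide. Dividing the two displays, the $\kappa_1\kappa_2$ cancels and
$$
\frac{\mathrm{P}(S_{\tau,2}=1\mid X_\tau,\mathrm{F}_\tau)}{\mathrm{P}(S_{\tau,1}=1\mid X_\tau,\mathrm{F}_\tau)} = \exp(-W_\tau'\beta_0).
$$
Since $S_{\tau,1}=1$ forces $Y_{i_1 j_1}=1$ while $S_{\tau,2}=1$ forces $Y_{i_1 j_1}=0$, a tetrad cannot satisfy both, so $\mathrm{P}(S_\tau=1\mid\cdot)=\mathrm{P}(S_{\tau,1}=1\mid\cdot)+\mathrm{P}(S_{\tau,2}=1\mid\cdot)$, and hence
$$
\mathrm{P}(S_{\tau,1}=1\mid S_\tau=1,\mathrm{F}_\tau,X_\tau)
= \frac{\mathrm{P}(S_{\tau,1}=1\mid\cdot)}{\mathrm{P}(S_{\tau,1}=1\mid\cdot)+\mathrm{P}(S_{\tau,2}=1\mid\cdot)}
= \frac{1}{1+\exp(-W_\tau'\beta_0)}
= \Lambda(W_\tau'\beta_0),
$$
as claimed.

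There is no genuine obstacle here: the whole argument is a direct specialisation of the hexad case, and the only substantive step is the cancellation of $\kappa_1$, which is what makes the conditional probability free of the fixed effects. Accordingly, I would keep the written proof short, stating the two probability expressions, noting that both wirings carry identical fixed-effect content because they share the degree sequence $(1,1,1,1)$, and then reading off $\Lambda(W_\tau'\beta_0)$ from the ratio.
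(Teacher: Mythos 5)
Your proof is correct: the conditional-independence factorisation, the cancellation of the common fixed-effect factor $\kappa_1\kappa_2$ (guaranteed because both wirings with degree sequence $(1,1,1,1)$ contain each of $A_{i_1},A_{i_2},B_{j_1},B_{j_2}$ exactly once), and the mutual exclusivity of $S_{\tau,1}$ and $S_{\tau,2}$ are exactly what is needed. The paper does not write out a proof for this theorem (it cites Charbonneau, 2017), but your argument is the direct dyadic specialisation of the paper's own proof of Theorem \ref{thm:sufficiency_overlapping}, so it is essentially the same approach.
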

See \citet{Charbonneau17} for a proof.

\subsection{The conditional logit estimator}
Let 
\begin{align*}
    W _{\tau,1} = X_{i_1 j_1}+X_{i_2 j_2}
    \qquad
    \text{and}
    \qquad
    W _{\tau,2} = X_{i_1 j_2}+X_{i_2 j_1},
\end{align*}
and define
\begin{align*}
    p _{\tau, c}(\beta)
    &= 
    \frac
    {\exp( W _{\tau, c}' \beta)}
    {\sum_{c'=1}^2 \exp  ( W _{\tau, c'}' \beta ) }.
\end{align*}
Let also $\mathbf{T}$ denote the set of all tetrads.  The tetrad conditional likelihood function is given by
\begin{align*}
    \widecheck l _N(\beta) 
    &= 
    \frac{1}{|\mathbf{T} |}
    \sum_{\tau \in \mathbf T } 
    l _\tau(\beta),
    \qquad
    \text{where}
    \qquad
    l _\tau(\beta)
    = 
    \sum_{c=1}^2  S _{\tau, c} \log p _{\tau, c}(\beta).
\end{align*}
Consequently, the conditional logit estimator of $\beta_0$ is given by
\begin{align*}
    \widecheck \beta = \arg\max_{\beta \in \mathcal B} \widecheck l _N(\beta).
\end{align*}
The score function is given by 
\begin{align*}
    \widecheck Z _N(\beta)
    =
    \frac{1}{|\mathbf T|}
    \sum_{\tau \in \mathbf T } 
    s _\tau(\beta),
\end{align*} 
where
\begin{align*}
    s _{\tau}(\beta)
    = 
    \sum_{c = 1}^2  
    S _{\tau, c} 
    \left(
        W _{\tau, c} - \overline{ W }_\tau(\beta) 
    \right)
    \qquad
    \text{and}
    \qquad
    \overline{ W }_\tau(\beta) 
    = 
    p _{\tau,1}(\beta) 
    W _{\tau, 1} 
    + 
    p _{\tau,2}(\beta) 
    W_{\tau, 2}.    
\end{align*}

\subsection{Useful lemmas}
Analogous to the corresponding analysis in Sections \ref{app:proofs} and \ref{sec:triadicno}, we first obtain the following general lemmas. Essentially, these are simpler versions of Lemmas \ref{lemma:count} and Lemmas \ref{lemma:main2}/\ref{lemma:main2no}. Therefore, their proofs will only highlight important points but not go into further detail.

\begin{lemma}\label{lemma:count_dyadic} 
For tetrad pairs with $(q_1, q_2)$ common nodes in parts 1 and 2 respectively, where $q_i \in \{0,1,2\}$, the number of such pairs is 
    \begin{align*}
        m_{(q_1, q_2), N} = O(N^{8-(q_1+q_2)})
        \qquad
        \text{as }
        N\to\infty.
    \end{align*}
\end{lemma}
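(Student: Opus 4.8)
The plan is to adapt the proof of Lemma \ref{lemma:count} essentially verbatim, replacing the hexad structure (three parts, two nodes from each, two hexads in a pair, hence an exponent sum of $12 - \sum q_i$) with the tetrad structure (two parts, two nodes from each, two tetrads in a pair, hence an exponent sum of $8 - \sum q_i$). Here $m_{(q_1,q_2),N}$ is defined, in direct analogy to \eqref{eq:mbar}, as the number of ordered tetrad pairs $(\tau,\tau')$ sharing $q_1$ nodes in part $1$ and $q_2$ nodes in part $2$. Recall that a generic tetrad is $\tau = (i_1,j_1,i_2,j_2)$ with $\{i_1,i_2\}$ from part $1$ and $\{j_1,j_2\}$ from part $2$, so $|\mathbf{T}| = N^2(N-1)^2 = O(N^4)$.

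First I would fix the first tetrad $\tau$, of which there are $O(N^4)$ choices. Then, given $\tau$, I would count the number of second tetrads $\tau' = (i_1',j_1',i_2',j_2')$ exhibiting the overlap pattern $(q_1,q_2)$. Working one part at a time: for part $i \in \{1,2\}$, choose which $q_i$ of the two nodes used by $\tau$ in that part are reused by $\tau'$, which can be done in $\binom{2}{q_i} = O(1)$ ways; then choose the remaining $2-q_i$ nodes for $\tau'$ from the $N-2$ nodes of part $i$ not used by $\tau$, in $\binom{N-2}{2-q_i} = O(N^{2-q_i})$ ways; finally, assign the selected nodes to the ordered roles $(i_1',i_2')$ (respectively $(j_1',j_2')$), a bounded number of ways. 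This yields $O(N^{2-q_i})$ choices for part $i$.

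Multiplying across the two parts gives $O(N^{2-q_1}) \cdot O(N^{2-q_2}) = O(N^{4-(q_1+q_2)})$ ways to form $\tau'$ given $\tau$, and since this count does not depend asymptotically on the identity of $\tau$, combining with the $O(N^4)$ choices for $\tau$ gives
\[
m_{(q_1,q_2),N} = O(N^4) \cdot O(N^{4-(q_1+q_2)}) = O(N^{8-(q_1+q_2)}),
\]
as claimed. There is no genuine obstacle in this argument — it is a pure counting exercise and the exact two-part analogue of Lemma \ref{lemma:count}. The only point requiring mild care is the ordering convention for the nodes within a tetrad (and any convention identifying $(\tau,\tau')$ with $(\tau',\tau)$), but these affect only the $O(1)$ multiplicative constants and not the stated polynomial rate, so I would simply note that the bound holds for any fixed such convention.
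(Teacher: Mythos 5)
Your proposal is correct and follows essentially the same route as the paper: fix the first tetrad in $O(N^4)$ ways, then count the choices of the second tetrad part by part as $O(N^{2-q_1})\cdot O(N^{2-q_2})$, exactly mirroring the argument of Lemma \ref{lemma:count}. The paper's own proof is a sketch of precisely this counting, so no discrepancy arises.
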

\begin{proof}[Proof of Lemma \ref{lemma:count_dyadic}]
This is essentially a simpler version of Lemma \ref{lemma:count}, the main differences being that now the object of interest contains four rather than six nodes, and the nodes are drawn from two rather than three parts. Following the same logic as in the proof of Lemma \ref{lemma:count}, the first tetrad $\tau$ in a tetrad-pair $(\tau,\tau')$ can be chosen in $O(N^4)$ ways. Depending on the exact pattern of common nodes across the two tetrads, given the first tetrad $\tau$ the second tetrad $\tau'$ can then be chosen in $O(N^{2-q_1}\times N^{2-q_2})$ ways. Then,
\begin{align*}
    m_{(q_1, q_2), N} &= (\text{Number of ways to choose } \tau) \times (\text{Number of ways to choose } \tau' | \tau) \\
    &= O(N^4) \times O(N^{4-(q_1+q_2)}) \\
    &= O(N^{8-(q_1+q_2)}),
\end{align*}
as stated.
\end{proof}

\begin{lemma}\label{lemma:main2_dyadic}
Let $\ell_\tau = \ell(X_\tau, \mathrm{F}_\tau, \varepsilon_\tau)$ be a random vector that is bounded for all $\tau \in \mathbf{T} $. Suppose the following properties hold for each $\tau\in \mathbf{T}$:
(i) $\mathbb{E} [\ell_\tau] = 0 $; 
(ii) $\ell_\tau \indep \ell_{\tau'} | \mathbf{X}, \mathbf{F}$ for all $\tau'$ such that $(\tau, \tau')$ share no common nodes in at least one part; 
(iii) $\ell_\tau \indep \ell_{\tau'}$ for all $\tau'$
such that $(\tau, \tau')$ share no common nodes in any of the two parts;
(iv) for $S_\tau$ as defined in \eqref{eq:Stildedefn_dyadic}, $\ell_\tau=0$ if $S_{\tau}=0$. 
Let $\widecheck \rho _N= \mathrm{P} (Y_{ij}=1)$ for $Y_{ij}$ as defined in \eqref{eq:ofe_dyadic}. 

Then, 
\begin{align*}
    \mathbb{E}\left[
    \left(
    \sum_{\tau \in \mathbf{T}}
    \ell_{\tau}
    \right)^2
    \right]
    =
    \sum_{q_1=0}^{2}\sum_{q_2=0}^{2}
    \mathrm{C}_{(q_1,q_2),N},
\end{align*}
where
\begin{enumerate}
    \item $\mathrm C _{(0,0),N}=0$
    \item For $(q_1,q_2)\in \{(0,1),(1,0),(0,2),(2,0)\}$, $\mathrm{C}_{(q_1,q_2),N} = O(N^{8-(q_1+q_2)}\widecheck{\rho} _N^4)$
    \item  $\mathrm{C}_{(1,1),N} = O(N^6 \widecheck \rho _N^3)$
    \item For $(q_1,q_2)\in \{(2,1),(1,2)\}$: $\mathrm{C}_{(q_1,q_2),N} = O(N^5 \widecheck \rho _N^3)$
    \item $\mathrm{C}_{(2,2),N} = O(N^4 \widecheck \rho _N^2)$
\end{enumerate}

If, in addition, $\mathbb{E}[\ell_\tau | \mathbf{X}, \mathbf{F}] = 0$, then $\mathrm{C}_{(q_1,q_2),N} = 0$ for all cases where any $q_i = 0$. 
\end{lemma}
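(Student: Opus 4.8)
The plan is to follow the same template as the proof of Lemma \ref{lemma:main2}, with the simplifications afforded by the dyadic structure: the objects of interest are tetrads with two nodes from each of two parts (rather than hexads with two nodes from each of three parts), and informative wirings consist of two edges rather than four hyperedges. First I would expand
\begin{align*}
    \mathbb{E}\left[\left(\sum_{\tau \in \mathbf T}\ell_\tau\right)^2\right]
    = \sum_{\tau\in\mathbf T}\sum_{\tau'\in\mathbf T}\mathbb{E}[\ell_\tau\ell_{\tau'}]
    = \sum_{q_1=0}^2\sum_{q_2=0}^2 \mathrm{C}_{(q_1,q_2),N},
\end{align*}
where $\mathrm{C}_{(q_1,q_2),N}$ collects the cross terms over tetrad pairs whose overlap pattern in parts $1$ and $2$ is $(q_1,q_2)$. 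As in \eqref{eq:pdelta1b_corrected}, boundedness of $\ell_\tau$ together with property (iv) gives $\mathbb{E}[\ell_\tau\ell_{\tau'}] = O(\mathrm{P}(S_\tau S_{\tau'}=1))$, so the order of each $\mathrm{C}_{(q_1,q_2),N}$ is controlled by the minimal number $\Delta$ of distinct edges needed for both tetrads to be informative, via $\mathrm{P}(S_\tau S_{\tau'}=1)=O(\widecheck\rho_N^\Delta)$, together with the pair count $m_{(q_1,q_2),N}=O(N^{8-(q_1+q_2)})$ from Lemma \ref{lemma:count_dyadic}.

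The heart of the argument is then the case-by-case determination of $\Delta$. For $(0,0)$ the two tetrads share no nodes, so property (iii) and $\mathbb{E}[\ell_\tau]=0$ force $\mathrm{C}_{(0,0),N}=0$. For any pattern with at least one zero and at least one nonzero entry the tetrads cannot share an edge, so each needs its own two edges and $\Delta = 2+2 = 4$. For $(1,1)$ the two tetrads can share exactly one edge—the one joining the single common node in each part—so $\Delta = 1+1+1 = 3$. For $(2,1)$ and $(1,2)$ a short combinatorial check shows that the two tetrads can still share at most one \emph{informative} edge: the two edges of an informative tetrad form a perfect matching on its four nodes, so two shared informative edges would require all four nodes to be common, i.e.\ the pattern $(2,2)$; hence again $\Delta = 3$. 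Finally, $(2,2)$ means the two tetrads are identical, and a single informative tetrad requires two edges, so $\Delta = 2$. Multiplying the rates $O(\widecheck\rho_N^\Delta)$ by the counts $O(N^{8-(q_1+q_2)})$ yields the five stated orders.

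For the last claim, under the additional hypothesis $\mathbb{E}[\ell_\tau \mid \mathbf X,\mathbf F]=0$, any pair $(\tau,\tau')$ with $q_i=0$ for some $i$ is conditionally independent given $(\mathbf X,\mathbf F)$ by property (ii), so $\mathbb{E}[\ell_\tau\ell_{\tau'}] = \mathbb{E}\big[\mathbb{E}[\ell_\tau \mid \mathbf X,\mathbf F]\,\mathbb{E}[\ell_{\tau'} \mid \mathbf X,\mathbf F]\big]=0$, whence $\mathrm{C}_{(q_1,q_2),N}=0$ in those cases. I expect the only genuinely new point relative to Lemma \ref{lemma:main2} to be the $(2,1)/(1,2)$ verification above; everything else is a transcription of the hexad argument with ``hexad'' replaced by ``tetrad'', ``hyperedge'' by ``edge'', and the exponent $4$ (hyperedges per informative wiring) replaced by $2$ (edges per informative wiring), so the proof can reasonably be presented in abbreviated form as was done for Lemma \ref{lemma:main2no}.
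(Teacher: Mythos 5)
Your proposal is correct and follows essentially the same route as the paper's own (sketched) proof: the same decomposition by overlap pattern $(q_1,q_2)$, the bound $\mathbb{E}[\ell_\tau\ell_{\tau'}]=O(\mathrm{P}(S_\tau S_{\tau'}=1))=O(\widecheck\rho_N^\Delta)$ combined with the pair counts from Lemma \ref{lemma:count_dyadic}, the same case-by-case determination of $\Delta$, and the same conditional-independence argument for the final claim. Your perfect-matching justification that a $(2,1)$ or $(1,2)$ pair can share at most one \emph{informative} edge is in fact a slightly more explicit rendering of the paper's assertion that at most one common edge can be shared in that case, but it is the same idea.
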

\begin{proof}[Proof of Lemma \ref{lemma:main2_dyadic}]
    The proof is a simpler version of the Proof of Lemma \ref{lemma:main2} and we provide a sketch here. Remembering equations \eqref{eq:varmdecomp_corrected}, \eqref{eq:defCqN_corrected} and \eqref{eq:pdelta1b_corrected},
    and the related discussions, the rate of each component is determined by two objects: (i) the number of ways in which a particular combination of common nodes $(q_1,q_2)$ can be obtained; and (ii) the probability of two generic tetrads $(\tau,\tau')$ being informative when they share $(q_1,q_2)$ common nodes. The first object is already determined by Lemma \ref{lemma:count_dyadic}, and here we discuss the second object for each possible pair of $(q_1,q_2)$.

    First, we remember that informative tetrads consist of two edges. Second, edges are formed between nodes from different parts. Therefore, for formation of a common edge between two tetrads, we need both $q_1$ and $q_2$ to be at least 1. 

    By assumptions (i) and (iii), when $\tau$ and $\tau'$ share no common nodes, $\mathbb E [\ell_\tau \ell_{\tau'}]= \mathbb E [\ell_\tau ] \mathbb E [\ell_{\tau'}]=0$ and we directly obtain $\mathrm C _{(0,0),N}=0$.
    
    Next, when any $q_i=0$, there will not be a common edge. Consequently, we will need two edges per tetrad, implying $\mathrm P (S _\tau  S _{\tau'}=1)=O(\widecheck \rho _N^4)$ in this case. Combining this with Lemma \ref{lemma:count_dyadic}, we obtain $\mathrm C_{(q_1,q_2),N}=O(N^{8-(q_1+q_2)} \widecheck{\rho}_N^4)$.

    When $(q_1,q_2)=(1,1)$, it is possible for both tetrads to share one common edge. In addition to this, each tetrad will require one more edge, adding up to three edges in total. By Lemma \ref{lemma:count_dyadic}, there are $O(N^6)$ such cases where a tetrad-pair can share two common nodes. This yields $\mathrm C _{(1,1),N} = O(N^6 \widecheck \rho _N^3)$.

    In the next case where one $q_i=2$ and the other equals 1, there can still be at most one common edge. Therefore, the only change comes through Lemma \ref{lemma:count_dyadic}, which yields the stated result.

    Finally, $(q_1,q_2)=(2,2)$ corresponds to the case where the two tetrads are identical. In that case, only two edges will be required, and this scenario corresponds to $O(N^4)$ tetrad-pair configurations, leading to $\mathrm C _{(2,2),N} = O(N^4 \widecheck \rho_N^2)$.

    The very last statement of the Lemma follows from the fact that whenever any $q_i=0$, the two tetrads will be conditionally independent. Then, as before, it follows from an application of the Law of Iterated Expectations that $\mathbb E [\ell_\tau \ell_{\tau'}]=0$, directly yielding $\mathrm C _{(q_1,q_2),N}=0$.
\end{proof}

\subsection{Variance decompositions}\label{sec:dyadicdecomp}

Finally, we obtain the variance decompositions for the model at hand, analogous to those obtained for the triadic link formation models. 

We first obtain the decomposition that would be used in proving the consistency of $\widecheck \beta$, analogous to the decompositions in \eqref{eq:disc1} and \eqref{eq:disc2} that were used in obtaining consistency in the triadic link formation models with dyad- and node-effects, respectively. 
We remember that in the dyadic case the number of tetrads is $O(N^4)$ and the probability of a tetrad being informative is $O(\widecheck  \rho_N^2)$. Then, using Lemma \ref{lemma:main2_dyadic} we have
\begin{align}
        \mathbb{E} 
        \left[
        \left(
            \frac{\sum_{\tau \in \mathbf{T}} 
            \{
                l_\tau(\beta) 
                -
                \mathbb E [ l_\tau(\beta) ]
            \}
            }
            {O(N^4 \widecheck{\rho}_N^2)}
        \right)^2
        \right]
        =&
        O\left(\frac{N^{7} \widecheck \rho _N^4}{N^{8} \widecheck \rho_N^4}\right)
        +
        O\left(\frac{N^{6}\widecheck{\rho}_N^3}{N^{8} \widecheck \rho_N^4}\right)
        \notag
        \\
        &+
        O\left(\frac{N^5\widecheck \rho _N^3}{N^{8} \widecheck \rho_N^4}\right)
        +
        O\left(\frac{N^4 \widecheck \rho _N^2}{N^{8} \widecheck \rho_N^4}\right)
        \notag
        \\
        =&
        O\left(\frac{1}{N}\right)
        +
        O\left(\frac{1}{N^{2} \widecheck \rho_N}\right)
        \notag
        \\
        &+
        O\left(\frac{1}{N^{3} \widecheck \rho_N}\right)
        +
        O\left(\frac{1}{N^4 \widecheck \rho _N^2}\right),
        \label{eq:disc3}
    \end{align}
    which is essentially the same decomposition as obtained on p.1054 of \citet{Graham17}. 
    For consistency in the current setting, one  would require that the right hand side of \eqref{eq:disc3} converge to zero as $N\to\infty$. Aymptotically non-zero average degree (i.e. $N \widecheck \rho _N >0$ for $N$ sufficiently large) is a sufficient condition.

    Next, we consider the variance decomposition of the normalised score (analogous to the decompositions in \eqref{eq:keydecomp} and \eqref{eq:andisc1}). The score contribution of a generic tetrad satisfies the final condition of Lemma \ref{lemma:main2_dyadic}. Therefore, using Lemma \ref{lemma:main2_dyadic} we obtain
    \begin{align}
        \mathrm {Var} 
        \left(
        \sqrt{\frac{N^2}{\widecheck \rho _N^3} }
        \widecheck{Z} _N
        \right)
        &=
        O\left(1\right)
        +
        O\left(\frac{1} {N}\right)
        +
        O\left(\frac{1} {N^2 \widecheck \rho _N}\right),
        \label{eq:dyadicdecomp2}
    \end{align}
    which is, again, essentially the same result as obtained by \citet{Graham17}; see his equation (32). Importantly, as in \eqref{eq:disc3}, asymptotic non-emptiness is sufficient to guarantee that the first term on the right hand side of \eqref{eq:dyadicdecomp2} will be the only asymptotically non-vanishing term as $N\to\infty$.

\newpage

\bibliographystyle{chicago3}
\bibliography{draft0.bbl}

\end{document}